\documentclass[twocolumn]{revtex4-2}
\usepackage{amsmath,amsfonts,amsthm,amssymb,enumitem}
\theoremstyle{remark}

\begin{document}
\newtheorem*{Gt}{Generalized Gleason Theorem}
\newtheorem*{Kt}{Generalized Kraus Theorem}
\newtheorem*{Od}{Hybrid operation definition}
\newtheorem{Cor}{Corollary}
\newtheorem{Prop}{Proposition}
\newtheorem{Lem}{Lemma}
\setlist[enumerate]{wide=0pt}
\title{Probability-based approach to hybrid classical-quantum systems 
of any size: generalized Gleason and Kraus theorems}
\author{S. Camalet}
\affiliation{Sorbonne Universit\'e, CNRS, Laboratoire de Physique 
Th\'eorique de la Mati\`ere Condens\'ee, LPTMC, F-75005, Paris, France}
\begin{abstract}
Hybrid classical-quantum systems are of interest in numerous fields, from 
quantum chemistry to quantum information science. A fully quantum 
effective description of them is straightforward to formulate when the classical 
subsystem is discrete. But it is not obvious how to describe them in the general 
case. We propose a probability-based approach starting with four axioms for 
hybrid classical-quantum probability measures that readily generalize the usual 
ones for classical and quantum probability measures. They apply to discrete 
and non-discrete classical subsystems and to finite and infinite dimensional 
quantum subsystems. A generalized Gleason theorem that gives the 
mathematical form of the corresponding hybrid states is shown. This form 
simplifies when the classical subsystem probabilities are described by a 
probability density function with respect to a natural reference measure, 
for example the familiar Lebesgue measure. We formulate a requirement for 
the transformations, that is, the finite-time evolutions, of hybrid probability 
measures analogous to the complete positive assumption for quantum 
operations. For hybrid systems with reference measure, we prove a generalized 
Kraus theorem that fully determines the considered transformations provided 
they are continuous with respect to an appropriate metric. Explicit expressions 
for these transformations are derived when the classical and quantum 
subsystems are non-interacting, the classical subsystem is discrete, or the 
Hilbert space of the quantum subsystem is finite-dimensional. We also discuss 
the quantification of the correlations between the classical and quantum 
subsystems and a generalization of the quantum operations usually considered 
in the study of quantum entanglement.
\end{abstract} 
\maketitle
\section{Introduction}
Achieving a consistent description of interacting classical and quantum degrees 
of freedom is important for both theoretical and practical reasons. Such hybrid 
systems have been considered since the early days of quantum theory. In the 
Copenhagen interpretation, quantum systems interact with classical 
measurement devices \cite{Bohr,dE}. In the absence of a quantum theory of 
gravity, some progress in the understanding of the mutual interaction between 
spacetime and quantum fields may come from a hybrid approach \cite{BT,An,O}.
For fully quantum systems, the definitions of properties important in 
quantum information science involve classical notions. For instance, the 
concept of classical communication is instrumental to understand quantum 
entanglement \cite{HHHH}. From a practical perspective, a full quantum 
treatment of large molecular or condensed matter systems, for example, is 
computationally very challenging. Treating classically some degrees of freedom, 
typically the slower ones, is a common approximation \cite{KC,CB}. For 
practical uses, quantum computers are controlled by classical ones. Hybrid 
classical-quantum algorithms using both types of computers are thus needed 
\cite{CC}. 

Very different approaches have been proposed to describe hybrid 
classical-quantum systems \cite{BCGG,T1}. For a discrete classical subsystem, 
i.e., with a countable number of elementary events, a fully quantum effective 
description is widely used, in quantum information studies for example. It 
consists in treating the classical subsystem as a quantum one by representing 
each of its elementary events as a vector of a Hilbert space orthonormal basis 
\cite{D,PRA2}. Such a correspondence does not hold for continuous classical 
degrees of freedom. In this case, completely classical and completely quantum 
formal descriptions have been formulated 
\cite{E,ABCMP,A,Ge,Koo,S,BJ,PT,T2,BBGBT}. Other approaches involve usual 
classical trajectories and quantum states \cite{BT,An,CB}. Most of these 
attempts lead to significant inconsistencies such as sign changes of classical or 
quantum probabilities \cite{T1,BCGG,PT,T2}. This problem can be overcome 
but at the expense, for instance, of linearity \cite{BBGBT}. Requirements that a 
description of hybrid systems should fulfill have been proposed \cite{BT,T1}. 
Basic ones are that the classical subsystem is characterized by a positive 
probability density function, the quantum subsystem is characterized by a 
positive density operator, the equation of motion of the hybrid system is linear 
in its state, and the two subsystems evolve independently of each other 
according to usual classical and quantum dynamics when they are 
non-interacting.

In this paper, we do not follow any of the above cited approaches that 
assume a mathematical form for the states of hybrid classical-quantum 
systems. Our aim is to build a description of hybrid systems starting from 
probabilities. Such an approach guarantees, by construction, that 
probabilities are always positive. We propose basic axioms for probability 
measures of hybrid systems and for their transformations, i.e., their 
finite-time evolutions, that readily generalize those for non-hybrid systems. 
As we will see, they determine the form of the hybrid states, imply that the 
transformations are linear, and lead to the existence of particular ones that 
describe non-interacting classical and quantum subsystems.

A classical probability measure satisfies the Kolmogorov axioms, namely, it is 
a non-negative function on a set of events which is countably additive and 
equals unity for the considered sample space \cite{Ko}. Quantum theory can 
be formulated similarly. Instead of starting with quantum states and the 
Born rule for the outcome probabilities of projective measurements \cite{Bo}, 
one can first define quantum probability measures as follows. Such a measure 
is a non-negative function on the set of orthogonal projectors on a Hilbert 
space $\cal H$ which is countably additive and equals unity for the 
corresponding identity operator $I$. It follows from Gleason theorem that 
this definition leads to the usual quantum states and Born rule, provided the 
dimension of $\cal H$ is at least three \cite{G}. An analogue of this theorem 
holds for the probability measures on the set of effects on $\cal H$, which are 
the positive operators $E$ such that $I-E$ is also positive \cite{B}. 

The state transformations of quantum systems named operations are usually 
defined from unitary ones as follows. The system $q$ of interest and another 
system $q'$ are considered to be initially in a product state and to evolve 
unitarily. The resulting transformation of the state of $q$, obtained by tracing 
out $q'$, for given initial state of $q'$ and unitary evolution operator, is an 
operation. Kraus theorem shows that quantum operations can be alternatively 
characterized by the complete positivity assumption \cite{K}. This requirement 
means that, for any operation $O$ of a system $q$ and any system $q'$, there 
is a state transformation of the system made up of $q$ and $q'$ that describes 
the action of $O$ on $q$ in the presence of $q'$ but with no interaction 
between the two systems and no intrinsic evolution of the ancillary system $q'$. 
As we will see, this can be formulated at the level of probability measures.

The rest of the paper is organized as follows. In Sec.\ref{Hpm}, we present 
four axioms for hybrid classical-quantum probability measures that 
straightforwardly generalize those for classical and quantum probability 
measures. A generalized Gleason theorem that gives the mathematical form 
of the corresponding hybrid states is proved. Then, we focus on the case 
where the probability measure of the classical subsystem is characterized 
by a probability density function with respect to a natural reference measure. 
For the probability measures of such a hybrid system, a metric and a dense 
subset are introduced. In Sec.\ref{Ho}, we formulate an assumption for the 
transformations of hybrid probability measures analogous to that discussed 
above for quantum operations. After showing some properties, e.g., 
convex-linearity, of the transformations fulfilling it, that we name hybrid 
operations, we prove a generalized Kraus theorem that gives an expression for 
any such operation on the above mentioned dense subset. In Sec.\ref{So}, 
explicit expressions for hybrid operations, on the whole set of probability 
measures of the considered hybrid system, are derived when the classical and 
quantum subsystems are non-interacting, the Hilbert space of the quantum 
subsystem is finite-dimensional, or the classical subsystem is discrete. We also 
discuss how the correlations between the classical and quantum subsystems 
can be quantified and how the quantum operations usually considered in the 
context of quantum entanglement can be generalized by taking into account 
non-discrete classical information. Finally, in Sec.\ref{C}, we conclude and 
summarize our main results.
\section{Hybrid probability measures}\label{Hpm}
We consider hybrid systems consisting of classical and quantum degrees of 
freedom. The classical subsystem is characterized by a sample space $X$ and 
an event space $\cal A$ which is a $\sigma$-algebra on $X$. The quantum 
subsystem is characterized by a separable Hilbert space ${\cal H}$ and the 
set, denoted $\cal E$, of positive operators $E$ on ${\cal H}$ such that $I-E$ 
is also positive, where $I$ is the identity operator on ${\cal H}$. The elements 
of $\cal E$ are termed effects and the weak topology is considered on this set 
\cite{K}. The elements of $X$ are denoted by $x$. We stress that $X$ is an 
arbitrary set, with no particular properties.
\subsection{Axioms for hybrid probability measures and generalized 
Gleason theorem}
A probability measure $w$ of a hybrid system $h$ is a map on 
$\cal A \times \cal E$ such that, for any event $A$ and effect $E$,
\begin{enumerate}[label=(\roman*)]
\item $w(A,E)$ is real non-negative, \label{i}  
\item $w(\cup_n A_n,E)=\sum_n w(A_n,E)$ for any sequence of pairwise 
disjoint events $A_n$, \label{ii}
\item $w(A,\sum_n E_n)=\sum_n w(A,E_n)$ for any sequence of effects 
$E_n$ such that $\sum_n E_n$ is an effect, \label{iii}
\item $w(X,I)=1$. \label{iv}
\end{enumerate} 
The probability measure $p$ of the classical subsystem $c_h$ of $h$ is given 
by $p(A)=w(A,I)$. It clearly fulfills the usual Kolmogorov axioms, i.e., it is a 
non-negative $\sigma$-additive function which equals unity for $A=X$ 
\cite{Ko}. The probability measure of the quantum subsystem $q_h$ of $h$ is 
the map $E \mapsto w(X,E)$ from $\cal E$ to the unit interval. It satisfies the 
prerequisites of the Busch-Gleason theorem, i.e., it is a non-negative 
$\sigma$-additive function which equals unity for $E=I$, and so there is a 
density operator $\rho$ such that $w(X,E)=\mathrm{tr}(\rho E)$ \cite{B}. 
More generally, the map $A \mapsto w(A,E)/w(X,E)$, for a given effect $E$, is 
a classical probability measure, and the map $E \mapsto w(A,E)/w(A,I)$, for a 
given event $A$, is a quantum probability measure. These two maps can be 
interpreted as conditional probability measures given the outcome, characterized, 
respectively, by effect $E$ and event $A$, of a measurement performed, 
respectively, on $q_h$ and $c_h$. We remark that the set of orthogonal 
projectors on ${\cal H}$ can be considered instead of ${\cal E}$. In this case, 
the above expression for the probability measure of $q_h$ follows from 
Gleason theorem provided the dimension of ${\cal H}$ is at least three 
\cite{G}. For the above defined hybrid probability measures, the following 
result can be shown. 
\begin{Gt}
A map $w$ on $\cal A \times \cal E$ fulfills conditions \ref{i}-\ref{iv} if and 
only if it is given by 
\begin{equation}
w(A,E)=\int_A \mathrm{tr}(\eta(x) E) dp(x) , \label{Gt}
\end{equation}
where $p$ is a probability measure on $\cal A$ and $\eta$ is a map from $X$ 
to the set of density operators on $\cal H$ such that, for any bounded operator 
$M$ on $\cal H$, $x \mapsto \mathrm{tr}(\eta(x) M)$ is $p$-integrable. 

For given $w$, $p$ is unique and $\eta$ is $p$-almost everywhere unique.
\end{Gt}
If ensues from eq.\eqref{Gt} that the probability measure of $c_h$ is $p$ and 
that the density operator $\rho$ of $q_h$ is given by the Bochner integral 
$\int \eta dp$, where, as usual, we omit the domain of integration when it is 
equal to $X$ \cite{BI}. Any probability $w(A,E)$ can be evaluated using 
eq.\eqref{Gt}. Thus, the classical probability measure $p$ and the map $\eta$  
can be considered, together, as the state of the hybrid system $h$. Since 
$\eta(x)$ is a density operator for any $x$, equation \eqref{Gt} can be 
interpreted as follows. The classical random variable $x$ is distributed according 
to the probability measure $p$ and each of its values determines a quantum 
state. Equation \eqref{Gt} can be rewritten as $w(A,E)=\int p(A|x) p(E|x) dp(x)$ 
with $p(A|x)=1_A(x)$, where $1_A$ denotes the indicator function of the set 
$A$, and $p(E|x)=\mathrm{tr}(\eta(x) E)$, which clearly shows that the 
correlations between $c_h$ and $q_h$ are Bell local \cite{Be}. 
\begin{proof}
Let $w$ be a map on $\cal A \times \cal E$ satisfying conditions \ref{i}-\ref{iv}. 
Define the map $p$ on $\cal A$ by $p(A)=w(A,I)$. It fulfills the Kolmogorov 
axioms, see conditions \ref{i}, \ref{ii}, and \ref{iv} and is hence a classical 
probability measure on $\cal A$. For any event $A$ such that $p(A)>0$, define 
the map $v_A$ on $\cal E$ by $v_A(E)=w(A,E)/p(A)$. For any effect $E$, $I-E$ 
is an effect, and, due to condition \ref{iii}, $p(A)=w(A,E)+w(A,I-E)$. So, 
$v_A (E)$ belongs to the unit interval and $v_A (I)=1$. Moreover, it follows 
from \ref{iii} that $v_A(\sum_n E_n)=\sum_n v_A(E_n)$ for any sequence of 
effects $E_n$ such that $\sum_n E_n$ is an effect. Consequently, due to 
Busch-Gleason theorem, there is a density operator $\rho_A$ on $\cal H$ such 
that $v_A(E)=\mathrm{tr}(\rho_A E)$ \cite{B}. Let us define the map $\tau$ 
from $\cal A$ to the set ${\cal T}$ of trace-class operators on $\cal H$ by 
$\tau(A)=p(A)\rho_A$ when $p(A)>0$ and by $\tau(A)=0$ otherwise. The 
operator $\tau(A)$ is trace-class since it is positive and 
$\Vert \tau(A) \Vert=\mathrm{tr}\tau(A)=p(A)$ is finite, where 
$\Vert \cdot \Vert$ denotes the trace norm. When $p(A)=0$, $w(A,E)$ vanishes 
for any effect $E$, and so $w(A,E)=\mathrm{tr}(\tau(A) E)$ in any case. 

The set ${\cal T}$ is a Banach space with respect to the trace norm \cite{BS}. 
As it is a separable dual space, it has the Radon-Nikodym property \cite{RN}. 
The map $\tau$ satisfies the corresponding prerequisites, as we show now. 
Consider any orthonormal basis $\{ |k\rangle \}_k$ of $\cal H$ and disjoint 
events $A$ and $B$. The above derived expression for $w$ and condition 
\ref{ii} imply 
\begin{multline}
\langle k | \tau(A \cup B) | k \rangle
=w(A \cup B,| k \rangle\langle k |) \\
=w(A,| k \rangle\langle k |)+w(B,| k \rangle\langle k |)=
\langle k | \tau(A)+ \tau(B) | k \rangle .  \nonumber
\end{multline}
This equality and similar ones with $(| k \rangle+| l \rangle)/\sqrt{2}$ and 
$(| k \rangle+i| l \rangle)/\sqrt{2}$, in place of $| k \rangle$, where 
$| k \rangle$ and $| l \rangle$ are any elements of the considered basis, give 
$\langle k | \tau(A \cup B) | l \rangle=\langle k | \tau(A)+ \tau(B) | l \rangle$ 
and so $\tau(A \cup B)=\tau(A)+ \tau(B)$, i.e., $\tau$ is finitely additive. 
Consider any infinite sequence of pairwise disjoint events $A_n$ and 
$A=\cup_n A_n$. Since $\tau$ is finitely additive, $p$ is a measure, and 
$\Vert \tau(B) \Vert=p(B)$ for any event $B$, one has, for any integer $N$,
$$\Big\Vert \tau(A)-\sum_{n<N} \tau(A_n) \Big\Vert 
=\big\Vert \tau(B_N) \big\Vert=p(A)-\sum_{n<N} p(A_n) , $$
where $B_N=\cup_{n \ge N} A_n$. This distance vanishes as $N$ goes to 
infinity, and so $\tau$ is a countably additive vector measure. By definition, 
$\tau$ is absolutely continuous with respect to $p$, i.e., $\tau(A)=0$ for any 
$p$-null event $A$. The value of its variation at event $A$ is the supremum 
of $\sum_n \Vert \tau (A_n) \Vert$ over all finite partitions of $A$ into subsets 
$A_n \in \cal A$. It equals $p(A)$ since $\tau (A_n)$ is a positive operator with 
trace equal to $p(A_n)$ and $p$ is a measure. So, $\tau$ is of bounded 
variation, i.e., its variation at $X$ is finite. The above obtained properties of 
$\tau$ ensure that $\tau(A)=\int_A \eta dp$ can be written as a Bochner 
integral \cite{RN}. The map $\eta : X \rightarrow {\cal T}$ is such that there 
exists a sequence of simple functions 
$\hat \eta_n=\sum_r \eta_{n,r} 1_{A_{n,r}}$, 
where the sum is finite, $\eta_{n,r}$ is a trace-class operator, and the events 
$A_{n,r}$ are pairwise disjoints, for which 
$\lim_{n\rightarrow \infty} \Vert \eta(x)-\hat \eta_n(x) \Vert = 0$ for any $x$.
Furthermore, it fulfills $\int \Vert \eta(x) \Vert dp(x) <\infty$. Conversely, a 
map satisfying these two conditions is $p$-Bochner integrable \cite{BI,RN}. 

It can be shown that there is such a map $\eta$ for which, moreover, $\eta(x)$ 
is a density operator for any $x$, as follows. Let us first prove that $\eta(x)$ 
and $\hat \eta_n(x)$ can always be chosen to be self-adjoint. We define 
$\eta^\dag : x \mapsto \eta(x)^\dag$ and 
$\hat \eta^\dag_n=\sum_r \eta_{n,r}^\dag 1_{A_{n,r}}$. 
Since, for any $x$, $\Vert \eta^\dag(x) \Vert = \Vert \eta(x) \Vert $ and 
$\Vert \eta^\dag(x)-\hat \eta_n^\dag(x) \Vert 
= \Vert \eta(x)-\hat \eta_n(x) \Vert $, $\eta^\dag$ is a $p$-Bochner integrable 
function from $X$ to $\cal T$. Consider any elements $| k \rangle$ and 
$| l \rangle$ of an orthonormal basis of $\cal H$ and the linear form 
$T \mapsto \langle k | T | l \rangle$ on ${\cal T}$. This map is continuous 
since $|\langle k | T | l \rangle| \le \Vert T \Vert$. This inequality follows from 
$|\mathrm{tr} (MT)| \le \Vert M \Vert_{op} \Vert T \Vert$ for any trace-class 
operator $T$ and bounded operator $M$ \cite{BS}, where 
$\Vert \cdot \Vert_{op}$ denotes the operator norm, and 
$\Vert | l \rangle \langle k | \Vert_{op}=1$. Consequently, one has, for any 
event $A$, $\langle k | \tau^\dag(A) | l \rangle
=\int_A \langle k | \eta^\dag(x) | l \rangle dp(x)$, where 
$\tau^\dag(A)=\int_A \eta^\dag dp$, and a similar expression for 
$\langle k | \tau(A) | l \rangle$ \cite{BI}. It is hence straighforward to show 
that $$\langle k | \tau^\dag(A) | l \rangle^*
=\int_A \langle k | \eta^\dag(x) | l \rangle^* dp(x)
=\langle k | \tau(A) | l \rangle^*,$$
which leads to $\tau(A)=\tau^\dag(A)=\int_A \eta' dp$, where 
$\eta'=(\eta+\eta^\dag)/2$ takes self-adjoint values. Furthermore, for any $x$, 
$\eta'_n (x) =\sum_r \eta'_{n,r} 1_{A_{n,r}}(x)$, where
$\eta'_{n,r}=(\eta^{\phantom{\dag}}_{n,r}+\eta^\dag_{n,r})/2$, is 
self-adjoint and goes to $\eta'(x)$ as $n \rightarrow \infty$. 

Let $|k \rangle$ be any normalized element of $\cal H$. The above implies that, 
for any event $A$, $\langle k | \tau(A) | k \rangle
=\int_A \langle k | \eta'(x) | k \rangle dp(x)$, and that, for any $x$,
\begin{equation}
\big|\langle k | \eta'(x) | k \rangle-\langle k | \eta'_n(x) | k \rangle \big| 
\le f_n(x) , \label{in}
\end{equation}
where $f_n(x)=\Vert \eta'(x)-\eta'_n(x) \Vert$. Let us denote as 
$A_{|k\rangle}$ the set of elements $x$ of $X$ such that 
$\langle k | \eta'(x) | k \rangle<0$. The fact that $\tau(A)$ is positive for any 
event $A$ gives $\langle k | \tau(A_{|k\rangle}) | k \rangle \ge 0$ and so 
$p(A_{|k\rangle})=0$. As $\eta'_{n,r}$ is trace-class, it is compact. Since it is 
furthermore self-adjoint, there exists an orthonormal basis 
$\{ | n,r,s \rangle \}_s$ of $\cal H$ made of eigenvectors of $\eta'_{n,r}$ 
\cite{BS}. We denote the corresponding eigenvalues by $\lambda_{n,r,s}$. 
Let $B_{n,r}=\cup_s A_{|n,r,s \rangle}$ and consider any 
$x \in A_{n,r} \backslash B_{n,r}$. Such an element does not belong to any 
of the sets $A_{|n,r,s \rangle}$ and so 
$\langle n,r,s | \eta'(x) | n,r,s \rangle \ge 0$ for any $s$. Equation \eqref{in} 
and $\eta'_n(x)=\eta'_{n,r}$ then lead to 
$\lambda_{n,r,s}=\langle n,r,s | \eta'_n(x) |n,r,s \rangle \ge -f_n(x)$. This 
inequality, valid for any $s$, implies 
$\langle k | \eta'_n(x) | k \rangle \ge -f_n(x)$. Let $B=\cup_{n,r} B_{n,r}$, 
which is a $p$-null set, and consider $x \notin B$. For such an element and 
any $n$, either $\eta'_n(x)$ vanishes or there is $r$ such that 
$x \in A_{n,r} \backslash B_{n,r}$. Thus, one obtains, using again 
eq.\eqref{in}, $\langle k | \eta'(x) | k \rangle \ge -2f_n(x)$. As $| k \rangle$ is 
any normalized element of $\cal H$ and 
$\lim_{n\rightarrow\infty} f_n(x)=0$, $\eta'(x)$ is positive.

Since $|\mathrm{tr} T| \le \Vert T \Vert$ for any trace-class operator $T$, the 
linear form $T \mapsto \mathrm{tr} T$ is continuous and so 
$\mathrm{tr} \tau(A)=\int_A t dp$ for any event $A$, with 
$t : x \mapsto \mathrm{tr}\eta'(x)$ \cite{BI}. This equality, together with 
$\mathrm{tr} \tau(A)=p(A)$, yields $\int_A (1-t) dp=0$. So, the set $B'$ of 
elements $x$ of $X$ such that $t(x) \neq 1$ is $p$-null. Therefore, $\eta'(x)$ 
is a density operator for any $x \notin B''$ with $B''=B \cup B'$, which is a 
$p$-null set. The map $\eta''$ given by $\eta''(x)=\eta'(x)$ for $x \notin B''$ 
and $\eta''(x)=\eta'(x_0)$ for $x \in B''$, where $x_0$ is any element of 
$X \backslash B''$, is such that $\eta''(x)$ is a density operator for any $x$ 
and that $\int_A \eta'' dp=\tau(A)$ for any event $A$. Let $M$ be any bounded 
operator on $\cal H$. Since $|\mathrm{tr}(TM)|=|\mathrm{tr}(MT)| 
\le \Vert M \Vert_{op} \Vert T \Vert$ for any trace-class operator $T$ \cite{BS}, 
the linear form $T \mapsto \mathrm{tr}(TM)$ on $\cal T$ is continuous. So, 
$x \mapsto \mathrm{tr}(\eta''(x)M)$ is $p$-integrable and 
$w(A,E)=\int_A \mathrm{tr}(\eta''(x) E) dp(x)$ for any event $A$ and effect 
$E$ \cite{BI}.

We now show the converse. Consider a map $w$ given by eq.\eqref{Gt}. As 
$\eta(x)$ is trace-class for any $x$ and $\sqrt{E}$ is bounded for any effect 
$E$, equation \eqref{Gt} can be rewritten as 
$w(A,E)=\int_A \mathrm{tr}(\sqrt{E}\eta(x) \sqrt{E}) dp(x)$ \cite{BS}. Since, 
moreover, $\eta(x)$ is positive with unit trace and $p$ is a probability measure, 
the conditions \ref{i} and \ref{iv} are fulfilled. For any sequence of pairwise 
disjoint events $A_n$ and effect $E$, it follows from the monotone convergence 
theorem and $1_A=\sum_n 1_{A_n}$, where $A=\cup_n A_n$, that 
$\sum_n \int_{A_n} f dp= \int_A f dp$ with 
$f: x \mapsto \mathrm{tr}(\eta(x) E)$. Thus, condition \ref{ii} is satisfied. For 
any $x$, the map $\Phi : E \mapsto \mathrm{tr}(\eta(x) E)$ from $\cal E$ to 
$\mathbb{C}$ is weakly sequentially continuous. This can be seen as follows. 
Consider a sequence of effects $E_n$ converging weakly to the effect $E$, i.e., 
$\lim_{n \rightarrow \infty} |\langle \psi| \Delta_n |\phi \rangle|=0$ for any 
elements $|\psi \rangle$ and $|\phi \rangle$ of $\cal H$, where 
$\Delta_n=E_n-E$ \cite{K}. Since $\eta(x)$ is compact and self-adjoint, there 
exists an orthonormal basis $\{ | s \rangle \}_s$ of $\cal H$ made of 
eigenvectors of $\eta(x)$ \cite{BS}. We denote the corresponding eigenvalues 
by $\lambda_{s}$. They are non-negative and sum to unity as $\eta(x)$ is 
positive with unit trace. One has $|\Phi (E_n)-\Phi(E)| \le \sum_s \lambda_{s} 
|\langle s| \Delta_n |s \rangle|$. For any $\epsilon>0$, there is $S$ such that 
$\sum_{s>S} \lambda_s < \epsilon/3$ and $N$ such that 
$|\langle s| \Delta_n |s \rangle|< \epsilon/3$ for any $n >N$ and $s \le S$. As 
$E$ and $E_n$ belong to $\cal E$, $|\langle s| \Delta_n |s \rangle|\le 2$ for 
any $s$ and so $|\Phi (E_n)-\Phi(E)|<\epsilon$ for any $n>N$. The continuity 
of $\Phi$ and the monotone convergence theorem then give 
$$\sum_n \int_A \mathrm{tr}(\eta(x) E_n) dp(x)= 
\int_A  \mathrm{tr}\Big(\eta(x) \sum_n E_n \Big) dp(x) , $$ for any sequence 
of effects $E_n$ such that $\sum_n E_n$ is an effect and event $A$, which 
shows condition \ref{iii}.

For given $w$, the probability measure $p$ is unique as it is defined by 
$p(A)=w(A,I)$. Let $\eta_1$ and $\eta_2$ be any two maps satisfying 
eq.\eqref{Gt}. One has $\int_A \mathrm{tr}(\delta(x) E) dp(x)=0$ for any 
event $A$ and effect $E$, where $\delta=\eta_1-\eta_2$. Define the set 
$A_{|\psi \rangle,|\phi \rangle}
=\{x : \langle \psi| \delta(x) | \phi \rangle \ne 0 \}$ for any elements 
$|\psi \rangle$ and $|\phi \rangle$ of $\cal H$. The above equality implies that 
$p(A_{|\psi \rangle,|\psi \rangle})=0$ for any $|\psi \rangle \in \cal H$. 
Consider any elements $| k \rangle$ and $|l \rangle$ of an orthonormal basis of 
$\cal H$. The relation $$2\langle k | \delta | l \rangle
=\langle \psi| \delta |\psi \rangle-i\langle \phi|  \delta |\phi \rangle
-(1-i)(\langle k| \delta |k \rangle+\langle l| \delta |l \rangle) , $$
where the $x$-dependence of $\delta$ is omitted, 
$|\psi \rangle=| k \rangle+| l \rangle$, 
and $|\phi \rangle=| k \rangle+i| l \rangle$, 
gives $A_{|k \rangle,|l \rangle} \subset A_{|\psi \rangle,|\psi \rangle}
\cup A_{|\phi \rangle,|\phi \rangle} \cup A_{|k \rangle,|k \rangle}
\cup A_{|l \rangle,|l \rangle}$ and so $p(A_{|k \rangle,|l \rangle})=0$. 
Consequently, the set $A=\cup_{k,l} A_{|k \rangle,|l \rangle}$ is $p$-null. For 
any $x \notin A$, $\langle k| \delta(x) |l \rangle$ vanishes for any elements 
$| k \rangle$ and $|l \rangle$ of the considered basis and hence $\eta_1(x)$ 
and $\eta_2(x)$ are equal to each other.
\end{proof}
\subsection{Classical reference measure}
For many classical systems, the probability measure $p$ can be characterized 
by a probability density function with respect to a natural reference measure 
$\mu$, e.g., the counting measure if $X$ is discrete or the Lebesgue measure 
if $X$ is an Euclidean space \cite{T1}. Moreover, $\mu$ is usually 
$\sigma$-finite, i.e., $X$ is a countable union of events with finite measure, 
and so, due to Radon-Nikodym theorem, the above property of $p$ is 
equivalent to the vanishing of $p(A)$ for any event $A$ such that 
$\mu(A)=0$ \cite{Bi}.
Hybrid probability measures $w$ for which
\begin{enumerate}[label=(\roman*)]\addtocounter{enumi}{4}
\item $w(A,I)=0$ for any event $A$ such that $\mu(A)=0$, where $\mu$ is a 
$\sigma$-finite measure on ${\cal A}$, \label{v}
\end{enumerate}
are thus of particular interest. When this condition is satisfied, the hybrid 
probabilities can be written as shown below.
\begin{Cor}\label{Cor1} 
A map $w$ on $\cal A \times \cal E$ fulfills conditions \ref{i}-\ref{v} if and 
only if it is given by 
\begin{equation}
w(A,E)=\int_A \mathrm{tr}(\omega(x) E) d\mu(x) , \label{Gtmu}
\end{equation}
where $\omega$ is a map from $X$ to the set of positive trace-class operators 
on $\cal H$ such that, for any bounded operator $M$ on $\cal H$, 
$x \mapsto \mathrm{tr}(\omega(x) M)$ is $\mu$-integrable and that 
$\int \mathrm{tr}\omega(x) d\mu(x) =1$.

For given $w$, $\omega$ is $\mu$-almost everywhere unique.
\end{Cor}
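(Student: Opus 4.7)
The plan is to reduce the statement to the generalized Gleason theorem via the Radon--Nikodym theorem. Condition \ref{v} says precisely that the classical marginal $p(A)=w(A,I)$ is absolutely continuous with respect to the $\sigma$-finite measure $\mu$, so the Radon--Nikodym theorem supplies a non-negative $\mu$-integrable density $f$ with $p(A)=\int_A f\,d\mu$ and $\int f\,d\mu=1$. This already suggests setting $\omega=f\eta$, where $\eta$ is the density-operator valued map furnished by the generalized Gleason theorem.

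For the forward implication, I would apply the generalized Gleason theorem to obtain $w(A,E)=\int_A \mathrm{tr}(\eta(x)E)\,dp(x)$ and then define $\omega(x)=f(x)\eta(x)$ (setting $\omega$ to the zero operator on the $\mu$-null set where $f$ is not defined). Since $\eta(x)$ is a density operator and $f(x)\ge 0$, $\omega(x)$ is a positive trace-class operator with $\mathrm{tr}\,\omega(x)=f(x)$, so $\int \mathrm{tr}\,\omega\,d\mu=\int f\,d\mu=1$. For any bounded operator $M$, the bound $|\mathrm{tr}(\eta(x)M)|\le \Vert M\Vert_{op}$ combined with the $\mu$-integrability of $f$ shows that $x\mapsto \mathrm{tr}(\omega(x)M)=f(x)\mathrm{tr}(\eta(x)M)$ is $\mu$-integrable. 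Applying the standard change-of-measure identity $\int_A g\,dp=\int_A gf\,d\mu$ to $g(x)=\mathrm{tr}(\eta(x)E)$ then yields eq.\eqref{Gtmu}.

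For the converse, given a map $w$ defined by eq.\eqref{Gtmu} with $\omega$ satisfying the stated properties, I would verify conditions \ref{i}-\ref{v} directly. Property \ref{i} follows from $\mathrm{tr}(\omega(x)E)=\mathrm{tr}(\sqrt{E}\omega(x)\sqrt{E})\ge 0$, property \ref{iv} from the normalization, and \ref{v} is immediate since $\mu(A)=0$ forces the integral to vanish. Conditions \ref{ii} and \ref{iii} follow by repeating verbatim the corresponding arguments in the converse direction of the generalized Gleason theorem proof --- monotone convergence for \ref{ii}, and the weak sequential continuity of $E\mapsto \mathrm{tr}(\omega(x)E)$ obtained from the spectral decomposition of the positive trace-class operator $\omega(x)$ for \ref{iii}.

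For uniqueness, if $\omega_1$ and $\omega_2$ both represent the same $w$, then $\int_A \mathrm{tr}(\delta(x)E)\,d\mu(x)=0$ for all events $A$ and effects $E$, with $\delta=\omega_1-\omega_2$. The polarization-identity and countable-basis argument used at the end of the generalized Gleason theorem proof then gives $\delta(x)=0$ for $\mu$-almost every $x$. The main obstacle here is essentially bookkeeping rather than conceptual: the generalized Gleason theorem and Radon--Nikodym do the heavy lifting, and the technical care needed is only to ensure that $\omega$ is defined everywhere, inherits the right measurability and integrability from $\eta$ and $f$, and that the change of measure between $p$ and $\mu$ is correctly justified for the relevant integrands.
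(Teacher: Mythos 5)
Your proposal is correct and follows essentially the same route as the paper's own proof: Radon--Nikodym applied to $p\ll\mu$ to get the density $f$, the generalized Gleason theorem to get $\eta$, the definition $\omega=f\eta$ for the forward direction, and direct verification of conditions \ref{i}--\ref{v} plus the polarization-based uniqueness argument recycled from the theorem's proof for the converse. No gaps worth flagging.
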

Any probability $w(A,E)$ can be evaluated using eq.\eqref{Gtmu}. Thus, the 
map $\omega$ can be considered as the state of the hybrid system $h$. 
Such hybrid states are commonly used in hybrid approaches 
\cite{O,KC,T1,ABCMP,A,D}. The probability measure $p$ of $c_h$ is given by 
$p(A)=\int_A \mathrm{tr}\omega(x) d\mu(x)$ and is hence characterized by 
the probability density function $f : x \mapsto \mathrm{tr}\omega(x)$. 
The state of $q_h$ is $\rho=\int \omega d\mu$ and the quantum state for 
given $x$, such that $f(x)>0$, is $\eta(x)=\omega(x)/f(x)$.
\begin{proof}
Let $w$ be a map on $\cal A \times \cal E$ that satisfies the prerequisites of 
the corollary. The above theorem implies that $w$ is given by eq.\eqref{Gt}. 
The probability measure $p$ is given by $p(A)=w(A,I)$ and so it vanishes 
whenever $\mu$ does. Since $\mu$ is $\sigma$-finite, there is a non-negative 
measurable function $f$ such that 
$w(A,E)=\int_A \mathrm{tr}(\eta(x) E) f(x) d\mu(x)$, and, in particular, 
$\int f d\mu=1$ \cite{Bi}. Equation \eqref{Gtmu} follows with $\omega=f\eta$, 
which has all the required properties. To show the converse, consider a map 
$w$ given by eq.\eqref{Gtmu}. As $\omega(x)$ is trace-class for any $x$ and 
$\sqrt{E}$ is bounded for any effect $E$, equation \eqref{Gtmu} can be 
rewritten as 
$w(A,E)=\int_A \mathrm{tr}(\sqrt{E}\omega(x) \sqrt{E}) d\mu(x)$ \cite{BS}. 
Since, moreover, $\omega(x)$ is positive and $\mu$ is a positive measure, 
condition \ref{i} is fulfilled. Condition \ref{iv} follows from 
$\int \mathrm{tr}\omega(x) d\mu(x) =1$. For conditions \ref{ii} and \ref{iii} 
and the unicity of $\omega$, the arguments used in the proof of the above 
theorem also apply here.
\end{proof}
\subsection{Simple probability measures}
To compare hybrid probability measures fulfilling conditions \ref{i}-\ref{v}, we 
use the metric $d$ defined by
\begin{equation}
d(w_1,w_2)= \int \big\Vert \omega_1(x) - \omega_2(x) \big\Vert d\mu(x) ,
\label{d}
\end{equation}
where the maps $\omega_1$ and $\omega_2$ are given by Corollary \ref{Cor1} 
for the probability measures $w_1$ and $w_2$, respectively, and 
$\Vert \cdot \Vert$ denotes the trace norm. We now introduce a dense subset 
of the set ${\cal P}_h$ of probability measures of $h$.
\begin{Cor}\label{Corapp}
For any $\epsilon>0$ and map $w$ on $\cal A \times \cal E$ satisfying 
conditions \ref{i}-\ref{v}, there are a finite number of pairwise disjoint events 
$A_n$ and positive trace-class operators $\omega_n$ on $\cal H$ such that 
the map $\hat w$ on $\cal A \times \cal E$ given by
\begin{equation}
\hat w(A,E)=\sum_n \mathrm{tr}(\omega_n E) \mu(A \cap A_n) , \label{wapp}
\end{equation}
obeys $\hat w(X,I)=1$ and $d(w,\hat w)<\epsilon$. 
\end{Cor}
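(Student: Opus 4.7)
\emph{The plan.} By Corollary~\ref{Cor1}, $w(A,E)=\int_A \mathrm{tr}(\omega(x)E)\,d\mu(x)$ with $\omega$ a positive-trace-class-valued, $\mu$-Bochner-integrable function satisfying $\int \mathrm{tr}\,\omega(x)\,d\mu(x)=\int \Vert\omega(x)\Vert\,d\mu(x)=1$. A map $\hat w$ of the form \eqref{wapp} corresponds to the simple function $\tilde\omega=\sum_n \omega_n 1_{A_n}$, and in view of the definition \eqref{d} one has $d(w,\hat w)=\int \Vert\omega(x)-\tilde\omega(x)\Vert\,d\mu(x)$. So the corollary reduces to approximating $\omega$ in Bochner $L^1$-norm by a finitely supported simple function with positive trace-class coefficients whose Bochner integral has trace exactly one.

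\emph{Step 1: simple approximation.} I would first invoke density of Bochner simple functions in $L^1(\mu;{\cal T})$, which follows from the pointwise-convergent sequence $\hat\eta_n$ produced in the proof of the Generalized Gleason Theorem together with dominated convergence, combined with $\sigma$-finiteness of $\mu$ to restrict to events of finite $\mu$-measure. This yields pairwise disjoint $A_r$ of finite $\mu$-measure and trace-class $\omega_r$, $r=1,\ldots,R$, such that $\int \Vert \omega-\hat\omega\Vert\,d\mu<\epsilon/4$ for $\hat\omega=\sum_r \omega_r 1_{A_r}$.

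\emph{Step 2: positivity by averaging.} The operators $\omega_r$ need not be positive. I would replace each by the conditional average $\bar\omega_r=\mu(A_r)^{-1}\int_{A_r}\omega\,d\mu$ whenever $\mu(A_r)>0$. Each $\bar\omega_r$ is positive trace-class as a Bochner integral of a positive-trace-class-valued map. The identity $\bar\omega_r-\omega_r=\mu(A_r)^{-1}\int_{A_r}(\omega-\omega_r)\,d\mu$ and the Bochner triangle inequality give $\mu(A_r)\Vert\bar\omega_r-\omega_r\Vert\leq \int_{A_r}\Vert\omega-\omega_r\Vert\,d\mu$, so summing over $r$ yields $\int \Vert\omega-\bar\omega\Vert\,d\mu\leq 2\int \Vert\omega-\hat\omega\Vert\,d\mu<\epsilon/2$ for $\bar\omega=\sum_r \bar\omega_r 1_{A_r}$.

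\emph{Step 3: normalization, and the main obstacle.} Let $\delta=1-\sum_r \mathrm{tr}(\bar\omega_r)\mu(A_r)=\int_{X\setminus\cup_r A_r}\mathrm{tr}\,\omega\,d\mu\geq 0$; it is bounded above by $\int \Vert\omega-\bar\omega\Vert\,d\mu<\epsilon/2$. If $\delta=0$ then $\bar\omega$ already induces a probability measure. Otherwise $\mu(X\setminus\cup_r A_r)>0$, and $\sigma$-finiteness of $\mu$ furnishes an event $A_0\subset X\setminus\cup_r A_r$ with $0<\mu(A_0)<\infty$; choosing any density operator $\rho_0$ and setting $\omega_0=(\delta/\mu(A_0))\rho_0$, the extra term $\mathrm{tr}(\omega_0 E)\mu(A\cap A_0)$ restores $\hat w(X,I)=1$ and adds at most $\mathrm{tr}(\omega_0)\mu(A_0)=\delta<\epsilon/2$ to $d(w,\hat w)$, keeping the total below $\epsilon$. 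The main obstacle is precisely this positivity requirement on the coefficients $\omega_n$: generic Bochner simple approximants are only trace-class-valued, so the averaging device of Step 2 is essential and is paid for by the factor of two in the $L^1$ error, which I absorb by taking the initial tolerance at $\epsilon/4$.
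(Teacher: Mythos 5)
Your proof is correct, and while it follows the same overall skeleton as the paper's (simple $L^1$-approximation, then repair positivity, then repair normalization), the two repair steps are handled by genuinely different devices. For positivity, the paper takes the Jordan decomposition $\omega_{n,r}=\omega_{n,r,+}-\omega_{n,r,-}$ of each coefficient and proves pointwise that the negative part is small, via a spectral-projection argument exploiting the positivity of $\omega(x)$ ($\mathrm{tr}\,\hat\omega_{n,-}(x)\le f_n(x)$), followed by a truncation $1_{\{\Vert\hat\omega_{n,+}\Vert\le 2\,\mathrm{tr}\,\omega\}}$ to manufacture a dominating function; your conditional-averaging device $\bar\omega_r=\mu(A_r)^{-1}\int_{A_r}\omega\,d\mu$ gets positivity for free from positivity of the Bochner integral of a positive-valued map, at the cost of only a factor of $2$ in the $L^1$ error, and requires no spectral argument at all. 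For normalization, the paper rescales multiplicatively by $t_n\to 1$, whereas you add a corrective cell $A_0$ carrying the missing mass $\delta$; both work, and yours has the minor advantage of leaving the already-constructed coefficients untouched. Two small points worth making explicit: (a) in Step 1, pointwise convergence of the simple functions plus $\int\Vert\omega\Vert\,d\mu<\infty$ does not by itself license dominated convergence — one needs the standard truncation of the approximants (exactly the paper's $1_{\{\Vert\hat\omega_{n,+}\Vert\le 2\,\mathrm{tr}\,\omega\}}$ step), or one should simply cite the $L^1$-density of simple functions built into the Diestel--Uhl definition of Bochner integrability, which is what you effectively do; (b) one should note that the cells carrying nonzero coefficients must have finite $\mu$-measure for \eqref{wapp} to define a probability measure, which your appeal to $\sigma$-finiteness covers. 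Neither point is a gap, only a matter of saying where the standard facts come from.
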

The above map $\hat w$ fulfills conditions \ref{i}-\ref{v}. We say that such a 
probability measure is simple as the corresponding map 
$\hat \omega=\sum_n \omega_n 1_{A_n}$ from $X$ to the set of positive 
trace-class operators is simple in the usual sense \cite{BI,RN}. The probability 
measure $\hat p$ of $c_h$ is given by 
$\hat p(A)=\sum_n \mathrm{tr}\omega_n \int_A 1_{A_n} d\mu$ 
and is hence characterized by the simple probability density function 
$\hat f = \sum_n \mathrm{tr}\omega_n 1_{A_n}$. The state of $q_h$ is 
$\hat \rho=\sum_n \mu(A_n) \omega_n$. For a discrete classical subsystem 
$c_h$, it can always be assumed that $X$ is a subset, finite or infinite, of 
$\mathbb{N}$, $\cal A$ is the set of all subsets of $X$, and $\mu$ is the 
counting measure. In this case, any hybrid probability measure $w$ determines 
a unique state $\omega$, given by Corollary \ref{Cor1}, and can be approached 
by a sequence of simple probability measures of the form 
$\hat w_y(A,E)=\sum_{x \le y} \mathrm{tr}(\omega(x) E/t_y) 
\mu(A \cap \{ x \})$ where $t_y=\sum_{x \le y} \mathrm{tr}\omega(x)$. 
The simple probability measures play an essential role in the following.
\begin{proof}
According to Corollary \ref{Cor1}, the map $w$ is given by eq.\eqref{Gtmu}. 
Moreover, there is a sequence of simple functions 
$\hat \omega_n=\sum_r \omega_{n,r} 1_{A_{n,r}}$, where the sum is finite, 
the operators $\omega_{n,r}$ are trace-class and self-adjoint, and the events 
$A_{n,r}$ are pairwise disjoints such that, for any $x$, 
$\lim_{n\rightarrow \infty} f_n(x)=0$, where 
$f_n(x)=\Vert \omega(x)-\hat \omega_n(x) \Vert$, see the above two proofs. 
The operator $\omega_{n,r}$ can be written as 
$\omega_{n,r}=\omega_{n,r,+}-\omega_{n,r,-}$, where $\omega_{n,r,+}$ 
and $\omega_{n,r,-}$ are positive trace-class operators such that their product 
vanishes \cite{BS}. The maps 
$\hat \omega_{n,\pm}=\sum_r \omega_{n,r,\pm} 1_{A_{n,r}}$ 
fulfill $\Vert \omega(x)-\hat \omega_{n,+}(x) \Vert 
\le f_n(x) + \mathrm{tr} \hat \omega_{n,-}(x)$ for any $x$. For any orthogonal 
projector $\Pi$ on $\cal H$, one has
$|\mathrm{tr} (\Pi \omega(x))-\mathrm{tr} (\Pi \hat \omega_n(x))| 
\le f_n(x)$ for any $x$, since $\Vert \Pi \Vert_{op} = 1$. As $\omega_{n,r}$ is 
compact and self-adjoint, there exists an orthonormal basis 
$\{ | n,r,s \rangle \}_s$ of $\cal H$ made of eigenvectors of $\omega_{n,r}$ 
\cite{BS}. Denote as $S$ the set of indices $s$ corresponding to the negative 
eigenvalues. Consider any $x$ in $A_{n,r}$. The positivity of $\omega(x)$ gives 
$\mathrm{tr} (\Pi \omega(x)) \ge 0$ with $\Pi$ the projector onto the subspace 
spanned by $\{ |n,r,s \rangle \}_{s \in S}$ and so 
$\mathrm{tr} \hat \omega_{n,-}(x) 
= - \mathrm{tr} (\Pi \hat \omega_n(x)) \le f_n(x)$. Thus, for any $x$, 
$\Vert \omega(x)-\hat \omega_{n,+}(x) \Vert$ is upperbounded by $2f_n(x)$. 

Let $\hat \omega'_n=\hat \omega_{n,+} 1_A$ where $A$ is the set of 
elements $x$ of $X$ such that $\Vert \hat \omega_{n,+}(x) \Vert 
\le 2 \mathrm{tr} \omega(x) $. These maps are simple and, for any $x$, 
$\hat \omega'_n(x)$ is trace-class and positive and 
$\Vert \omega(x)-\hat \omega'_n(x) \Vert \le \min 
\{2f_n(x) , 3 \mathrm{tr} \omega(x)\}$. Thus, this distance vanishes as 
$n \rightarrow \infty$ and, due to the dominated convergence theorem, 
$\lim_{n \rightarrow \infty} {\cal I}_n =0$, where 
${\cal I}_n= \int \Vert \omega(x)-\hat \omega'_n(x) \Vert d\mu(x)$. As a 
consequence, $t_n=\int \mathrm{tr} \hat \omega'_n(x) d\mu(x)$ converges to 
unity as $n$ goes to infinity. We define $\hat \omega''_n=\hat \omega'_n/t_n$ 
for large enough $n$. These are simple functions such that $\omega''_n(x)$ is 
positive for any $x$ and $\int \mathrm{tr} \hat \omega''_n(x) d\mu(x)=1$. 
The maps on ${\cal A} \times {\cal E}$ given by eq.\eqref{Gtmu} with 
$\hat \omega''_n$ in place of $\omega$ are of the form of eq.\eqref{wapp} 
and obey condition \ref{iv}. Moreover, since 
$\int \Vert \omega(x)-\hat \omega''_n(x) \Vert d\mu(x) 
\le {\cal I}_n + |t_n-1|$, this integral vanishes as $n\rightarrow \infty$.
\end{proof}
\section{Hybrid operations}\label{Ho}
From now on, except in subsection \ref{A}, we consider only probability 
measures satisfying the five conditions \ref{i}-\ref{v}. We are here interested 
in the maps from ${\cal P}_h$ to itself that meet the requirement stated below. 
We name them as hybrid operations. Such a transformation is said to be 
continuous if it is continuous with respect to the metric $d$ given by 
eq.\eqref{d}. 
\subsection{Definition of hybrid operations}\label{A}
Consider the bipartite system $hq$ consisting of $h$ and any quantum system, 
say $q$, with Hilbert space ${\cal H}_q$. It is a hybrid system whose events 
are those of $h$ and effects are the positive operators $G$ on 
${\cal H} \otimes {\cal H}_q$ such that $I\otimes I_q-G$ is also positive, where 
$I_q$ is the identity operator on ${\cal H}_q$. Let $w$ be any probability 
measure of $hq$. The map $w_q : F \mapsto w(X,I\otimes F)$ is the  
corresponding probability measure of $q$. For any effect $F$ of $q$ such that 
$w_q(F)$ is nonvanishing, a map $w_F$ on $\cal A \times \cal E$ can be 
defined by
\begin{equation}
w_F(A,E)=\frac{w(A,E\otimes F)}{w(X,I\otimes F)} . \label{wF}
\end{equation}
It satisfies condition \ref{iv} by construction and any of the other conditions as 
soon as $w$ does. The probabilities given by eq.\eqref{wF} can be interpreted 
as conditional probabilities given the outcome, characterized by effect $F$, of a 
measurement performed on $q$.
\begin{Od}
A map ${\cal O}_h$ from ${\cal P}_h$ to itself is a hybrid operation if and only 
if, for any quantum system $q$ with finite-dimensional Hilbert space and any 
probability measure $w$ of the bipartite system $hq$ consisting of $h$ and $q$, 
there is a probability measure $w'$ of $hq$ such that, for any event $A$ and 
effects $E$ and $F$ of $h$ and $q$, respectively,
\begin{equation}
w'(A,E\otimes F)=w_q(F){\cal O}_h(w_F)(A,E) , \label{vieq}
\end{equation}
if $w_q(F)>0$ and vanishes otherwise. \label{Od}
\end{Od}
When $h$ is quantum, i.e., when $X$ is a singleton, the above requirement is 
equivalent to the convex-linearity of ${\cal O}_h$ and the usual complete 
positivity assumption \cite{K}. It can be understood as follows. The left side of 
eq.\eqref{vieq} is the probability of obtaining the outcomes corresponding to 
$A$, $E$ and $F$ in measurements performed, respectively, on $c_h$, $q_h$, 
and $q$, for the probability measure $w'$ of $hq$. The right side of 
eq.\eqref{vieq} is the probability of obtaining the same outcomes in the same 
measurements but performed according to the following protocol. At first, the 
probability measure of $hq$ is $w$. The measurement on $q$ is made. After it, 
the probability measure of $h$ is $w_F$. It is transformed according to 
${\cal O}_h$ and finally the measurements on $c_h$ and $q_h$ are made. 
Thus, the above condition means that there exists a probability measure
transformation that describes the action of ${\cal O}_h$ on $h$ in the presence 
of $q$ with no interaction between $h$ and $q$ and no intrinsic evolution of 
$q$. The probabilities are the same whether the measurement on $q$ is made 
before or after such a transformation. For physical consistency, it must be itself 
a hybrid operation. This follows from the theorem below, provided that 
${\cal O}_h$ is continuous, see Corollary \ref{CorO}. Note that there is only 
one probability measure $w'$ fulfilling the five conditions \ref{i}-\ref{v} and 
eq.\eqref{vieq} with given $w$ and ${\cal O}_h$, see Lemma \ref{Lem3}.
\subsection{Some properties of hybrid operations}
Here, we derive properties of the operations defined above. Some of them are 
useful for the proof of the theorem below. First, observe that, since 
eq.\eqref{vieq} gives $w'_q=w_q$, the composition of two operations is also an 
operation. The following lemma states that hybrid operations are convex-linear. 
\begin{Lem}\label{convlin}
Any hybrid operation ${\cal O}_h$ satisfies
\begin{equation}
{\cal O}_h(t w_1 + (1-t) w_2 )
=t {\cal O}_h(w_1)+(1-t) {\cal O}_h(w_2) , \label{lceq}
\end{equation}
for any probability measures $w_1$ and $w_2$ and $t \in [0,1]$.
\end{Lem}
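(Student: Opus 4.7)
The plan is to exploit the hybrid operation axiom with a two-dimensional ancillary quantum system $q$, with Hilbert space $\mathcal{H}_q=\mathbb{C}^2$ and orthonormal basis $\{|0\rangle,|1\rangle\}$, and engineer a probability measure $w$ of $hq$ whose conditioning on various effects of $q$ reproduces $w_1$, $w_2$, and the convex combination $tw_1+(1-t)w_2$. Writing $\omega_1$ and $\omega_2$ for the states associated with $w_1$ and $w_2$ via Corollary \ref{Cor1}, I would take
\begin{equation}
\Omega(x)=t\,\omega_1(x)\otimes|0\rangle\langle 0|+(1-t)\,\omega_2(x)\otimes|1\rangle\langle 1|,\nonumber
\end{equation}
which takes positive trace-class values with $\int\mathrm{tr}\,\Omega\,d\mu=1$, and hence defines, via eq.\eqref{Gtmu} applied to the hybrid system $hq$, a probability measure $w$ on ${\cal A}\times{\cal E}_{hq}$.

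Next I would compute $w_F$ for three natural choices of $F$. For $F=|0\rangle\langle 0|$ one gets $w_q(F)=t$ and, when $t>0$, $w_F=w_1$; for $F=|1\rangle\langle 1|$ one gets $w_q(F)=1-t$ and, when $t<1$, $w_F=w_2$; for $F=I_q$ one gets $w_q(F)=1$ and $w_{I_q}=tw_1+(1-t)w_2$. Applying Definition \ref{Od} then yields a probability measure $w'$ of $hq$ with $w'(A,E\otimes|0\rangle\langle 0|)=t\,{\cal O}_h(w_1)(A,E)$, $w'(A,E\otimes|1\rangle\langle 1|)=(1-t)\,{\cal O}_h(w_2)(A,E)$, and $w'(A,E\otimes I_q)={\cal O}_h(tw_1+(1-t)w_2)(A,E)$; the cases $t=0$ or $t=1$ reduce trivially to the single term that survives, together with the fact that the vanishing branch makes no contribution (the definition explicitly sets $w'$ to zero when $w_q(F)=0$).

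The equality \eqref{lceq} then follows at once from condition \ref{iii} applied to $w'$, since $E\otimes|0\rangle\langle 0|$ and $E\otimes|1\rangle\langle 1|$ are effects of $hq$ whose sum is the effect $E\otimes I_q$, whence
\begin{equation}
w'(A,E\otimes I_q)=w'(A,E\otimes|0\rangle\langle 0|)+w'(A,E\otimes|1\rangle\langle 1|),\nonumber
\end{equation}
and this identity is precisely the desired relation evaluated at the arbitrary event $A$ and effect $E$.

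I do not anticipate a serious obstacle: the only subtlety is verifying that the constructed $\Omega$ really defines a probability measure of $hq$ in the sense of the axioms (which is immediate from Corollary \ref{Cor1}, since $\Omega$ is a map from $X$ to the positive trace-class operators on ${\cal H}\otimes{\cal H}_q$ with unit integrated trace, and the $\mu$-integrability of $x\mapsto\mathrm{tr}(\Omega(x)N)$ for bounded $N$ follows from that of $x\mapsto\mathrm{tr}(\omega_i(x)M)$), and that the trivial boundary cases $t\in\{0,1\}$ do not rely on $w_q(F)>0$ for the branch with weight zero.
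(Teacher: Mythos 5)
Your proof is correct, and it takes a genuinely different and more economical route than the paper's. Both arguments start from the same ancilla construction: a qubit $q$ and the block-diagonal state $\Omega=t\,\omega_1\otimes|0\rangle\langle 0|+(1-t)\,\omega_2\otimes|1\rangle\langle 1|$, whose conditionals $w_F$ reproduce $w_1$, $w_2$ and $t w_1+(1-t)w_2$ for $F=|0\rangle\langle 0|$, $|1\rangle\langle 1|$ and $I_q$. From there the paper descends to the state level: it writes the output measure $w'$ via its density $\Omega'$, uses a one-parameter family of rank-one effects $|\psi_\theta\rangle\langle\psi_\theta|$ with $|\psi_\theta\rangle=(|1\rangle+e^{i\theta}|2\rangle)/\sqrt 2$ and a differentiation-in-$\theta$ argument to show the off-diagonal blocks of $\Omega'$ vanish $\mu$-almost everywhere, identifies the diagonal blocks as elements of $O_h([\omega_s])$, and only then reassembles eq.\eqref{lceq}. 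You instead stay entirely at the level of probability measures and observe that the single measure $w'$ guaranteed by the hybrid operation definition must satisfy axiom \ref{iii}, so that $w'(A,E\otimes I_q)=w'(A,E\otimes|0\rangle\langle 0|)+w'(A,E\otimes|1\rangle\langle 1|)$, which is exactly eq.\eqref{lceq} once each term is rewritten via eq.\eqref{vieq}. This is shorter, avoids the Bochner-integral and equivalence-class machinery for this lemma, and handles the boundary cases $t\in\{0,1\}$ cleanly. What the paper's longer route buys is explicit structural information about the transformed state $\Omega'$ (its block-diagonal form and the identification $\omega'_s\in O_h([\omega_s])$), which fits the state-level formalism reused in the subsequent lemmas, though it is not needed for the convex-linearity statement itself.
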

\begin{proof} 
Consider the set of $\mu$-Bochner integrable functions from $X$ to the set 
${\cal T}_h$ (${\cal T}$) of trace-class operators on ${\cal H}$ 
(${\cal H} \otimes {\cal H}_q$). The $\mu$-almost everywhere equality defines 
an equivalence relation on this set and the corresponding quotient set. We 
denote it as $\mathrm{L}_h$ ($\mathrm{L}$) and the equivalence classes as 
$[\omega]$, where $\omega \in [\omega]$. Let $\mathrm{P}_h$ 
($\mathrm{P}$) be the subset of $\mathrm{L}_h$ ($\mathrm{L}$) consisting 
of all classes $[\omega]$ containing a map $\omega$ such that $\omega(x)$ is 
positive for any $x$ and $\int \mathrm{tr}\omega(x) d\mu(x) =1$. Corollary 
\ref{Cor1}, together with Pettis measurability theorem and the fact that the set 
of bounded operators on ${\cal H}$ (${\cal H} \otimes {\cal H}_q$) is the dual 
of ${\cal T}_h$ (${\cal T}$) \cite{BS,RN}, implies that the map 
$[\omega] \mapsto w$ from $\mathrm{P}_h$ ($\mathrm{P}$) to the set of 
probability measures of $h$ ($hq$), where $w$ is given by eq.\eqref{Gtmu}, is 
bijective. Therefore, to any map ${\cal O}_h$ (${\cal O}$) from the set of 
probability measures of $h$ ($hq$) to itself there corresponds a map $O_h$ 
($O$) from $\mathrm{P}_h$ ($\mathrm{P}$) to itself and vice versa. 

Let $w$ be any probability measure of $hq$ and $[\Omega]$ the corresponding 
class in $\mathrm{P}$. The probability measure given by eq.\eqref{wF} can be 
written into the form of eq.\eqref{Gtmu} with 
$$\omega_F(x)=\mathrm{tr}_q (\Omega(x) I\otimes F)/w_q(F),$$ in place of 
$\omega(x)$, where $\mathrm{tr}_q$ denotes the partial trace with respect to 
${\cal H}_q$. As $\Vert \mathrm{tr}_q T \Vert \le \Vert T \Vert$ for any 
self-adjoint element $T$ of ${\cal T}$ and $\Omega$ is $\mu$-Bochner 
integrable, so is $\omega_F$. Moreover, $[\omega_F]$ belongs to 
$\mathrm{P}_h$ and corresponds to $w_F$. From equation \eqref{vieq} and 
the fact that $w'$ is a probability measure, it follows that
\begin{eqnarray}
w'(A,E\otimes F) 
&=&\int_A \mathrm{tr}(\Omega'(x) E\otimes F) d\mu(x) \label{eu1} \\
&=&w_q(F) \int_A \mathrm{tr}(\omega'(x) E) d\mu(x) , \label{eu2}
\end{eqnarray}
where $[\Omega']$ corresponds to $w'$ and $\omega' \in O_h([\omega_F])$. 

Let $w_1$ and $w_2$ be any probability measures of $h$ and $t \in (0,1)$. 
Consider ${\cal H}_q=\mathbb{C}^2$, $F=|\psi\rangle \langle \psi|$ where 
$|\psi\rangle=(|1\rangle+e^{i\theta}|2\rangle)/\sqrt{2}$ with 
$\{|1\rangle ,|2\rangle\}$ an orthonormal basis of ${\cal H}_q$ and 
$\theta \in [-\pi,\pi)$, and $\Omega=t \omega_1 \otimes |1\rangle \langle 1|
+(1-t) \omega_2 \otimes |2\rangle \langle 2|$ where $[\omega_1]$ and 
$[\omega_2]$ correspond to $w_1$ and $w_2$, respectively. With these 
choices, $w_q(F)=1/2$ and $\omega_F=t \omega_1 +(1-t) \omega_2$ are 
independent of $\theta$ and so is  $w'(A,E\otimes F)$, see eq.\eqref{eu2}. 
Since $[\Omega'] \in \mathrm{P}$, one has 
$\Omega'=\alpha_1 \otimes |1\rangle \langle 1|
+\alpha_2 \otimes |2\rangle \langle 2|
+\beta \otimes |1\rangle \langle 2|
+\beta^\dag \otimes |2\rangle \langle 1|$ $\mu$-almost everywhere, where 
the maps $\alpha_1$, $\alpha_2$, $\beta$, and $\beta^\dag$ from $X$ to 
${\cal T}_h$ are such that, for any $x$, $\alpha_1(x)$ and $\alpha_2(x)$ are 
positive and $\beta^\dag(x)=\beta(x)^\dag$. Differentiating both sides of 
eq.\eqref{eu1} with respect to $\theta$, one obtains 
$\int_A \mathrm{tr}((\beta(x)-e^{-2i\theta}\beta(x)^\dag)E)d\mu(x)=0$ for 
any event $A$ and effect $E$. Thus, for any $\theta$, there is a $\mu$-null set 
$A_{\theta}$ such that $\beta(x)-e^{-2i\theta}\beta(x)^\dag=0$ for any 
$x \notin A_{\theta}$, see the proof of the above theorem. This gives 
$\beta(x)=0$ for any $x \notin A_0 \cup A_{\pi/2}$ and so 
$\Omega'=\alpha_1 \otimes |1\rangle \langle 1|
+\alpha_2 \otimes |2\rangle \langle 2|$ $\mu$-almost everywhere. Let us 
define $\omega'_1=\alpha_1/t$ and $\omega'_2=\alpha_2/(1-t)$. For 
$F=|s\rangle \langle s|$, where $s=1$ or $2$, one has $w_q(F)=t+(1-2t)(s-1)$ 
and $\omega_F=\omega_s$. Hence, the equality of the right sides of 
eq.\eqref{eu1} and eq.\eqref{eu2} implies $\omega'_s \in O_h([\omega_s])$. 
This result, together with eq.\eqref{vieq} and eq.\eqref{eu1} with $F=I_q$, 
leads to ${\cal O}_h(w_{I_q})=t {\cal O}_h(w_1)+(1-t) {\cal O}_h(w_2)$ 
and a direct evaluation gives $w_{I_q}=t w_1+(1-t) w_2$.
\end{proof} 
Convex-linearity and continuity are related as follows.
\begin{Lem}\label{cont}
Any convex-linear map from the set of simple probability measures of $h$ to 
${\cal P}_h$ can be uniquely extended into a continuous map ${\cal C}_h$ on 
${\cal P}_h$. It fulfills 
\begin{equation}
d \big({\cal C}_h (w_1),{\cal C}_h (w_2) \big)\le d(w_1,w_2) , 
\label{contineq}
\end{equation}
for any probability measures $w_1$ and $w_2$. 

All convex-linear maps from ${\cal P}_h$ to itself are continuous and obey 
eq.\eqref{contineq} when ${\cal H}$ is finite-dimensional or the classical 
subsystem of $h$ is discrete.
\end{Lem}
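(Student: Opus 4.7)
The plan is to first establish the contractivity inequality (\ref{contineq}) on the dense subset of simple probability measures by combining convex-linearity with a pointwise Jordan decomposition of $\omega_1-\omega_2$, then extend by continuity using completeness of $(\mathcal{P}_h, d)$. The second assertion follows by running the same decomposition argument directly on arbitrary probability measures, which automatically goes through in the finite-dimensional and discrete cases.

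Given two simple probability measures $w_1, w_2$, I would refine to a common finite partition $\{A_n\}$ with $\mu(A_n)<\infty$ and write $\omega_i = \sum_n \omega_{i,n} 1_{A_n}$ for $i=1,2$. For each $n$, the self-adjoint trace-class operator $\omega_{1,n}-\omega_{2,n}$ splits into its positive and negative parts $\alpha_n - \beta_n$, with $\alpha_n,\beta_n \ge 0$ having orthogonal supports, so $\Vert\omega_{1,n}-\omega_{2,n}\Vert = \mathrm{tr}\,\alpha_n + \mathrm{tr}\,\beta_n$. Setting $r=\sum_n \mathrm{tr}(\alpha_n)\mu(A_n)$ and using $\sum_n \mathrm{tr}(\omega_{1,n}-\omega_{2,n})\mu(A_n)=0$ to also obtain $r=\sum_n \mathrm{tr}(\beta_n)\mu(A_n)$, one has $d(w_1,w_2)=2r$. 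Assume $r>0$ (otherwise the bound is trivial) and introduce the simple probability measures $\hat{\pi},\hat{\nu}$ with densities $r^{-1}\sum_n \alpha_n 1_{A_n}$ and $r^{-1}\sum_n \beta_n 1_{A_n}$; with $t=r/(1+r)\in(0,1)$, a direct check at the density level gives the convex identity $(1-t)w_1 + t\hat{\nu} = (1-t)w_2 + t\hat{\pi}$. Convex-linearity of $\mathcal{C}_h$ turns this into $(1-t)(\mathcal{C}_h(w_1)-\mathcal{C}_h(w_2))=t(\mathcal{C}_h(\hat{\pi})-\mathcal{C}_h(\hat{\nu}))$ at the density level, and combining the triangle inequality with the trivial bound $d(v_1,v_2)\le 2$ for any two probability measures yields $(1-t)d(\mathcal{C}_h(w_1),\mathcal{C}_h(w_2))\le 2t$, i.e.\ (\ref{contineq}) on simple measures.

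For the extension, Corollary \ref{Cor1} realizes $\mathcal{P}_h$ as a subset of the Bochner space $L^1(X,\mu;\mathcal{T}_h)$, and this subset is closed there because positivity of the values and the normalization $\int \mathrm{tr}(\omega)\,d\mu = 1$ both survive $L^1$-limits; hence $(\mathcal{P}_h,d)$ is complete. Combined with density of simple measures (Corollary \ref{Corapp}) and the uniform-continuity bound just proved, this gives a unique continuous extension of $\mathcal{C}_h$ to $\mathcal{P}_h$ that inherits (\ref{contineq}) by passing to the limit.

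The identical decomposition argument applies to any convex-linear map $\mathcal{C}_h:\mathcal{P}_h\to\mathcal{P}_h$ and any $w_1,w_2\in\mathcal{P}_h$, provided the pointwise positive and negative parts of $\omega_1(x)-\omega_2(x)$ assemble into $\mu$-Bochner integrable maps so that the corresponding normalized $\hat{\pi},\hat{\nu}$ are genuine elements of $\mathcal{P}_h$. When $c_h$ is discrete, $X$ is countable and Bochner integrability reduces to absolute summability of trace norms, so the decomposition is simply termwise; when $\mathcal{H}$ is finite-dimensional, $\mathcal{T}_h$ is finite-dimensional and the map $\delta\mapsto\delta_\pm$ is continuous on it, so strong measurability and integrability of the Jordan parts follow immediately from those of $\delta=\omega_1-\omega_2$. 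The main obstacle, and the reason for the restriction to these two settings, is the lack of a transparent way to verify strong measurability of the Jordan parts when $\mathcal{H}$ is infinite-dimensional and $X$ is simultaneously non-discrete.
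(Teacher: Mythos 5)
Your proof is correct, but it reaches the contraction bound \eqref{contineq} by a genuinely different route than the paper. The paper's strategy is operator-theoretic: it extends the convex-linear map to a \emph{bounded linear} operator $C'_h$ on the Banach space $\mathrm{L}_h$ of (classes of) Bochner-integrable trace-class-valued maps, by decomposing an arbitrary simple function into four positive pieces, checking additivity and positive homogeneity on the positive cone, bounding the norm by $2$ in general and by $1$ on self-adjoint classes, and then invoking the bounded-linear-transformation theorem; contractivity on $\mathcal{P}_h$ is read off from the norm bound on self-adjoint classes. You instead prove \eqref{contineq} directly on probability measures via the convexity trick: from the pointwise Jordan decomposition of $\omega_1-\omega_2$ you build the auxiliary states $\hat\pi,\hat\nu$, verify the identity $(1-t)w_1+t\hat\nu=(1-t)w_2+t\hat\pi$ with $t=r/(1+r)$ and $d(w_1,w_2)=2r$, and combine convex-linearity with the universal bound $d\le 2$ to get $d(\mathcal{C}_h(w_1),\mathcal{C}_h(w_2))\le 2t/(1-t)=d(w_1,w_2)$. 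Your argument is more elementary and self-contained for the lemma as stated, since it never needs a linear extension; what the paper's heavier construction buys is the operator $C'_h$ itself, which is reused throughout the later proofs (Lemma \ref{Lem3}, the generalized Kraus theorem, Corollary \ref{CorO}), so if one adopts your proof here, that machinery still has to be built elsewhere. For the second assertion your treatment of the finite-dimensional and discrete cases matches the paper's in substance: both reduce to Bochner integrability of the pointwise Jordan parts, which you get from continuity of $\delta\mapsto\delta_\pm$ in finite dimensions (the paper uses the Holevo inequality for the same purpose) and from termwise summability in the discrete case.

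One spot is stated more tersely than it deserves: the closedness of $\mathcal{P}_h$ in $L^1(X,\mu;\mathcal{T}_h)$, which you need for the codomain of the extension. That positivity of the densities survives an $L^1$-limit requires passing to an almost-everywhere convergent subsequence (or the Markov-inequality argument the paper spells out); it is standard, but it is a genuine step rather than an immediate observation, and you should record it.
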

Lemma \ref{cont} obviously applies to hybrid operations since they are 
convex-linear, as shown by Lemma \ref{convlin}. In particular, it follows from 
these two lemmas that all hybrid operations are continuous in the above 
mentioned particular cases. Moreover, the metric $d$ never increases under 
the action of continuous hybrid operations, analogously to the behavior of 
numerous metrics for quantum systems \cite{SIOR}.
\begin{proof}
Consider a convex-linear map ${\cal C}_h$ from the set of simple probability 
measures of $h$ to ${\cal P}_h$. We use the same notations as in the proof of 
Lemma \ref{convlin}. The set $\mathrm{L}_h$ is a vector space with the usual 
definitions: $[\omega_1]+[\omega_2]=[\omega_1+\omega_2]$ and 
$z[\omega_1]=[z\omega_1]$ for any $z \in \mathbb{C}$ and classes 
$[\omega_1]$ and $[\omega_2]$ of $\mathrm{L}_h$. A norm 
$\vert \cdot \vert$ can be defined on $\mathrm{L}_h$ by 
$\vert [\omega] \vert = \int \Vert \omega(x) \Vert d\mu(x)$. This simple 
notation is used since no confusion with the usual norm on $\mathbb{C}$ is 
possible. We denote as $\mathrm{H}_h$ the subset of $\mathrm{L}_h$ 
consisting of all classes containing a a map $\omega$ such that $\omega(x)$ is 
self-adjoint for any $x$ and as $\mathrm{S}_h$ the subspace of 
$\mathrm{L}_h$ consisting of all classes containing a simple function and 
define $\mathrm{P}'_h=\mathrm{P}_h \cap \mathrm{S}_h$ and 
$\mathrm{H}'_h=\mathrm{H}_h \cap \mathrm{S}_h$. 

The map $[\omega] \mapsto w$ from $\mathrm{P}'_h$ ($\mathrm{P}_h$) to 
the set of simple (all) probability measures of $h$, where $w$ is given by 
eq.\eqref{Gtmu}, is bijective. Consequently, a map $C_h$ from 
$\mathrm{P}'_h$ to $\mathrm{P}_h$ univocally corresponds to ${\cal C}_h$. 
Let us first show that $C_h$ can be extended into a bounded linear map 
$C'_h : \mathrm{S}_h \rightarrow \mathrm{L}_h$. Since the bijections 
$[\omega] \mapsto w$, their inverses, and ${\cal C}_h$ are convex-linear, so is 
$C_h$. For all classes $[\omega]$ such that $\omega(x)$ is positive for any $x$, 
we define $C'_h([0])=[0]$ and $C'_h([\omega])=\vert [ \omega] \vert 
C_h([ \omega]/\vert [ \omega] \vert)$ if $\vert [ \omega ]\vert>0$. The 
transformed class $C'_h([ \omega])$ also contains a map positive on $X$ and 
$\vert C'_h([ \omega]) \vert=\vert [ \omega] \vert$. For any such classes 
$[\omega_1]$ and $[\omega_2]$, 
since $\vert [\omega_1] \vert+\vert [\omega_2] \vert
=\vert [\omega_1+\omega_2] \vert$, the convex-linearity of $C_h$ 
implies that $C'_h([\omega_1]+[\omega_2])
=C'_h([\omega_1])+C'_h([\omega_2])$, see eq.\eqref{lceq}. Moreover, one 
has $C'_h(z[\omega_1])=zC'_h([\omega_1])$ for any positive real number $z$. 

For any class $[\hat \omega] \in \mathrm{S}_h$, one can write 
$\hat \omega=\sum_n \omega_n 1_{A_n}$, where the sum is finite, the 
operators $\omega_n$ are trace-class, and the events $A_n$ are pairwise 
disjoints and such that $\mu(A_n)$ is finite. The operator $\omega_n$ can be 
expanded in a unique way as 
$\omega_n=\alpha_{+n} - \alpha_{-n} +i \beta_{+n}- i\beta_{-n}$, 
where $\alpha_{+n}$, $\alpha_{-n}$, $\beta_{+n}$, and $\beta_{-n}$ are 
positive trace-class operators such that $\alpha_{+n} \alpha_{-n}=0$ and 
$\beta_{+n}\beta_{-n}=0$. One has 
$\Vert \alpha_{+n}-\alpha_{-n} \Vert=\Vert \alpha_{+n} \Vert 
+ \Vert \alpha_{-n} \Vert$ and a similar relation for the operators 
$\beta_{\pm n}$ \cite{BS}. We define the simple functions 
$\alpha_{\pm}=\sum_n \alpha_{\pm n} 1_{A_n}$ and 
$\beta_{\pm}=\sum_n \beta_{\pm n} 1_{A_n}$, and the image
\begin{equation}
C'_h([\hat \omega])=C'_h([\alpha_+])-C'_h([\alpha_-])
+iC'_h([\beta_+])-iC'_h([\beta_-]) . \label{Cph}
\end{equation}
For $[\hat \omega] \in \mathrm{H}'_h$, consider any positive operators 
$\gamma_n$ and $\delta_n$ such that $\omega_n=\gamma_n - \delta_n$ and 
define the corresponding simple functions $\gamma$ and $\delta$. As 
$\alpha_+ + \delta=\alpha_-+\gamma$, it follows from the above derived 
properties of $C'_h$ that 
$C'_h([\hat \omega])=C'_h([\gamma])-C'_h([\delta])$. Using this result, 
it can be shown that these properties hold on $\mathrm{H}'_h$ with any real 
$z$ and that $C'_h$ is linear on $\mathrm{S}_h$. Moreover, since 
$\vert [\alpha_+] \vert +\vert [\alpha_-] \vert
=\vert [\alpha_+ - \alpha_- ] \vert 
= \vert [(\hat \omega+\hat \omega^\dag)/2] \vert 
\le \vert [\hat \omega] \vert$, where 
$\hat \omega^\dag=\sum_n \omega^\dag_n 1_{A_n}$, and similarly for 
$[\beta_\pm]$, $C'_h$ is bounded with norm at most $2$ and 
$\vert C'_h([\hat \omega]) \vert \le \vert [\hat \omega] \vert$ 
for any $[\hat \omega] \in \mathrm{H}'_h$. The set $\mathrm{L}_h$ is a 
Banach space with the norm $\vert \cdot \vert$ and $\mathrm{S}_h$ is a 
dense subspace of it \cite{BS}. So, according to the bounded linear 
transformation theorem, $C'_h$ can be uniquely extended on 
$\mathrm{L}_h$ and its norm remains the same \cite{BLT}. For any class of 
$\mathrm{H}_h$, there is a sequence in $\mathrm{H}'_h$ converging to it, 
see the proof of Corollary \ref{Corapp}. Thus, as $C'_h$ is linear and bounded, 
the above inequality holds in $\mathrm{H}_h$. 

It remains to prove that $C'_h(\mathrm{P}_h) \subset \mathrm{P}_h$. 
Corollary \ref{Corapp} implies that 
$\mathrm{P}_h \subset \overline{\mathrm{P}_h'}$ where 
$\overline{\mathrm{P}'_h}$ is the closure of $\mathrm{P}'_h$. It can be 
shown that  $\overline{\mathrm{P}_h'} \subset \mathrm{P}_h$ as follows. 
Consider any sequence of classes $[\hat \omega_n]$ in $\mathrm{P}'_h$ that 
converges to $[\omega]$. As $\hat \omega_n(x)$ is self-adjoint for any $x$, 
one has $|[\omega-\omega^\dag]|=0$, where 
$\omega^\dag : x \rightarrow \omega(x)^\dag$, and so 
$[\omega] \in \mathrm{H}_h$. From $[\hat \omega_n] \in \mathrm{P}_h$, 
it follows that 
$|1 - \int \mathrm{tr}\omega(x) d\mu(x)| \le |[\omega-\hat \omega_n]|$. 
Hence, the left side of this inequality is zero. For any $|\psi \rangle \in {\cal H}$ 
and $\epsilon>0$, we define the sets $A_{|\psi \rangle}
=\{x : \langle \psi| \omega(x) |\psi \rangle <0\}$ 
and $A_{|\psi \rangle,\epsilon}
=\{x : \langle \psi| \omega(x) |\psi \rangle \le -\epsilon \}$. Markov's inequality 
implies that $\mu(A_{|\psi \rangle,\epsilon,n}) 
\le |[\omega-\hat \omega_n]|/\epsilon$, where $A_{|\psi \rangle,\epsilon,n}$ 
is the set of elements $x$ such that 
$|\langle \psi| \omega(x) -  \hat \omega_n(x) |\psi \rangle | \ge \epsilon$ 
\cite{Bi}. Since $\langle \psi| \hat \omega_n(x) |\psi \rangle \ge 0$ for any $x$, 
one has $A_{|\psi \rangle,\epsilon} \subset A_{|\psi \rangle,\epsilon,n}$. As 
this inculsion holds for any $n$,  $A_{|\psi \rangle,\epsilon}$ is a $\mu$-null 
set and so is $A_{|\psi \rangle}=\cup_{n=1}^\infty A_{|\psi \rangle,1/n}$. As 
$\omega$ is $\mu$-Bochner integrable, there is a sequence of simple functions 
$\gamma_n$ such that, for any $x$, $\Vert \omega(x) - \gamma_n(x) \Vert$ 
vanishes as $n$ goes to infinity. It can always be assumed that $\gamma_n(x)$ 
is self-adjoint for any $x$ since $[\omega] \in \mathrm{H}_h$. Using similar 
arguments as those following eq.\eqref{in}, one can show that $\omega(x)$ is 
positive for $\mu$-almost any $x$, and so $[\omega] \in \mathrm{P}_h$. 
Consequently, $\mathrm{P}_h$ is the closure of $\mathrm{P}_h'$. Thus, since 
$C'_h$ is continuous and $C'_h(\mathrm{P}'_h)\subset \mathrm{P}_h$, 
$C_h'(\mathrm{P}_h)$ is a subset of $\mathrm{P}_h$. The map from 
${\cal P}_h$ to itself corresponding to the restriction of $C'_h$ to 
$\mathrm{P}_h$ is continuous, obeys eq.\eqref{contineq}, and coincides with 
${\cal C}_h$ for simple probability measures. Due to Corollary \ref{Corapp}, it 
is the unique map with these properties.

Assume now that $\cal H$ is finite-dimensional. Consider a convex-linear map 
${\cal C}_h$ from ${\cal P}_h$ to itself and the corresponding map 
$C_h : \mathrm{P}_h \rightarrow \mathrm{P}_h$. It can be shown that $C_h$ 
can be extended into a bounded linear map $C'_h$ from $\mathrm{L}_h$ to 
itself as follows. For all classes $[\omega]$ such that $\omega(x)$ is positive 
for any $x$, $C'_h([\omega])$ can be defined as above. Let $[\omega]$ be 
any class in $\mathrm{H}_h$ and $(\hat \omega_n)_n$ a sequence of simple 
functions that converges pointwise to $\omega$. One can write 
$\omega(x)=\alpha_{+}(x) - \alpha_{-}(x)$ where $\alpha_{+}(x)$ and 
$\alpha_{-}(x)$ are positive trace-class operators such that their product 
vanishes \cite{BS}. We define $\alpha_\pm : x \mapsto \alpha_\pm(x)$ and 
$\tilde \omega : x \mapsto \alpha_+(x)+\alpha_-(x)$. For any $x$, 
$\tilde \omega(x)$ is the absolute value of $\omega(x)$, i.e., 
$\tilde \omega(x)$ is positive and $\tilde \omega(x)^2=\omega(x)^2$, 
and $\alpha_\pm(x)=(\tilde \omega(x)\pm\omega(x))/2$ \cite{BS}. Since 
$\Vert \tilde \omega(x) \Vert= \Vert \omega(x) \Vert$ for any $x$, 
$\int \Vert \tilde \omega(x) \Vert d\mu(x)$ is finite. 
Similar maps, denoted by $\alpha_{n\pm}$ and $\tilde \omega_n$, exist for 
$\hat \omega_n$. One has, for any $x$,
\begin{multline}
\Vert  \tilde \omega_x-\tilde \omega_{n,x} \Vert^2
\le e \mathrm{tr} \big( ( \tilde \omega_x-\tilde \omega_{n,x} )^2 \big) 
\le e \Vert  \omega_x^2-\hat \omega_{n,x}^2 \Vert \\
\le e \big(2 \Vert \omega_x \Vert 
+ \Vert  \omega_x-\hat \omega_{n,x} \Vert \big) 
\Vert  \omega_x-\hat \omega_{n,x} \Vert, \nonumber
\end{multline}
where the argument $x$ is written as an index and $e$ is the dimension of 
${\cal H}$. The first inequality results from Cauchy-Schwarz inequality, the 
second one is shown in Ref.\cite{Ineq}, and the last one is obtained using the 
properties of the trace norm. Thus, $\tilde \omega$ is $\mu$-Bochner 
integrable and so are $\alpha_+$ and $\alpha_-$. For any 
$[\omega] \in \mathrm{L}_h$, the map 
$\omega^\dag:x\mapsto \omega(x)^\dag$ is $\mu$-Bochner integrable and 
so are $(\omega+\omega^\dag)/2$ and $(\omega-\omega^\dag)/2i$. As 
these two maps take self-adjoint values, they can be written in terms of 
$\mu$-Bochner integrable maps that take positive values, denoted, respectively, 
by $\alpha_\pm$ and $\beta_\pm$ and so 
$[\omega]=[\alpha_+]-[\alpha_-]+i[\beta_+]-i[\beta_-]$. 
We define $C'_h$ on $\mathrm{L}_h$ by eq.\eqref{Cph}. It can be shown as 
above that it is linear and fulfills 
$\vert C'_h([\omega]) \vert \le \vert [\omega] \vert$ for any 
$[\omega] \in \mathrm{H}_h$. Consequently, ${\cal C}_h$ obeys 
eq.\eqref{contineq} and is hence continuous.

Assume now that the classical subsystem of $h$ is discrete, i.e., the sample 
space $X$ is a subset of $\mathbb{N}$, $\cal A$ is the set of all subsets of 
$X$, and $\mu$ is the counting measure. Consider a convex-linear map 
${\cal C}_h$ from ${\cal P}_h$ to itself and the corresponding map 
$C_h : \mathrm{P}_h \rightarrow \mathrm{P}_h$. Let $[\omega]$ be any 
class of $\mathrm{L}_h$. Maps $\alpha_\pm$ and $\beta_\pm$ can be 
defined as above. The sequence of simple functions 
$\hat \alpha_{y \pm}= \alpha_\pm 1_{\{ x : x\le y\}}$, 
where $y$ is any element of $X$, converges pointwise to $\alpha_\pm$ as 
$y \rightarrow \infty$ and thus $\alpha_\pm$ is $\mu$-Bochner integrable, 
and so is $\beta_\pm$. The bounded linear map $C'_h$ defined on 
$\mathrm{L}_h$ by eq.\eqref{Cph} coincides with $C_h$ on $\mathrm{P}_h$ 
and fulfills $\vert C'_h([\omega]) \vert \le \vert [\omega] \vert$ for any 
$[\omega] \in \mathrm{H}_h$. So, ${\cal C}_h$ obeys eq.\eqref{contineq} 
and is hence continuous.
\end{proof}
A useful consequence of the hybrid operation definition is the following.
\begin{Lem}\label{Lem3}
For any hybrid operation ${\cal O}_h$ and quantum system $q$ with 
finite-dimensional Hilbert space, there is a unique map 
${\cal O} : w \mapsto w'$ from ${\cal P}_{hq}$ to itself, where $w'$ fulfills 
eq.\eqref{vieq}. It is convex-linear for simple probability measures.
\end{Lem}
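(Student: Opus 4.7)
The plan is as follows. The existence of a probability measure $w' \in {\cal P}_{hq}$ satisfying eq.\eqref{vieq} for each given $w$ is immediate from applying the hybrid operation definition to the enlarged system $hq$. The content of the lemma therefore reduces to two claims: (a) $w'$ is uniquely determined by $w$, so that $w \mapsto w'$ is a well-defined map ${\cal O}:{\cal P}_{hq} \to {\cal P}_{hq}$; (b) this map is convex-linear on the subset of simple probability measures.

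For uniqueness, I would start from two probability measures $w'_1$ and $w'_2$ both satisfying eq.\eqref{vieq} for the same $w$. Since ${\cal H}\otimes{\cal H}_q$ is separable, Corollary \ref{Cor1} applied to $hq$ represents each $w'_j$ by a map $\Omega'_j$ from $X$ to the positive trace-class operators on ${\cal H}\otimes{\cal H}_q$. Their agreement on every triple $(A, E\otimes F)$ forces $\int_A \mathrm{tr}(\delta(x) E\otimes F) d\mu(x) = 0$ for all events $A$ and effects $E$, $F$, where $\delta = \Omega'_1 - \Omega'_2$. The plan is then to adapt the polarization argument from the end of the generalized Gleason theorem's proof to a tensor-product basis $\{|k\rangle\otimes|m\rangle\}$ built from orthonormal bases $\{|k\rangle\}_k$ of ${\cal H}$ and $\{|m\rangle\}_m$ of ${\cal H}_q$: let $E$ and $F$ range independently over the countable family generated by $|k\rangle\langle k|$ together with the Gleason polarization combinations $\frac{1}{2}(|k\rangle+|l\rangle)(\langle k|+\langle l|)$ and $\frac{1}{2}(|k\rangle+i|l\rangle)(\langle k|-i\langle l|)$; apply polarization in each factor separately; and pool the countably many exceptional null sets. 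Off the pooled null set, every matrix element $\langle k\otimes m|\delta(x)|l\otimes n\rangle$ vanishes, so $\delta = 0$ $\mu$-almost everywhere and Corollary \ref{Cor1} gives $w'_1 = w'_2$.

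For convex-linearity on simple measures, I would fix simple $w_1, w_2 \in {\cal P}_{hq}$ and $t \in [0,1]$, set $w = tw_1 + (1-t)w_2$, and define $w'' = t\,{\cal O}(w_1) + (1-t)\,{\cal O}(w_2)$, which is automatically a probability measure of $hq$. By the uniqueness just established, it is enough to verify that $w''$ itself satisfies eq.\eqref{vieq} for $w$. Linearity of the marginal gives $w_q = tw_{1q} + (1-t)w_{2q}$, and whenever $w_q(F) > 0$ the conditional $w_F$ equals the convex combination $\sum_j \frac{t_j w_{jq}(F)}{w_q(F)} w_{jF}$ with $t_1 = t$, $t_2 = 1-t$, discarding any term for which $w_{jq}(F) = 0$. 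Applying Lemma \ref{convlin} to ${\cal O}_h$ and multiplying by $w_q(F)$ then yields $\sum_j t_j w_{jq}(F)\,{\cal O}_h(w_{jF})(A,E)$, which is precisely $w''(A, E\otimes F)$ by construction. When $w_q(F) = 0$ both $w_{jq}(F)$ vanish, so both sides of eq.\eqref{vieq} are trivially zero.

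The main obstacle will be the uniqueness step: eq.\eqref{vieq} only probes product effects $E\otimes F$, which do not span all effects of $hq$, so one has to argue that they nonetheless separate the operator-valued densities $\Omega'_j(x)$. The key point is that separability of ${\cal H}$ combined with the finite-dimensionality of ${\cal H}_q$ keeps the polarization family countable, so the pooled exceptional set remains $\mu$-null. The remaining steps are essentially transcriptions of arguments already developed in the proofs of the generalized Gleason theorem and Lemma \ref{convlin}.
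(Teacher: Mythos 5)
Your proof is correct, but both halves take a genuinely different route from the paper's. For uniqueness, the paper never passes to the densities $\Omega'_j$: it expands an arbitrary effect $G$ of $hq$ as $G=\sum_{k,l}\mathrm{tr}_q(I\otimes|l\rangle\langle k|G)\otimes|k\rangle\langle l|$ over an orthonormal basis of the \emph{finite-dimensional} factor ${\cal H}_q$ and polarizes only there, the ${\cal H}$-side operators $E_{k,l,u}=\mathrm{tr}_q(I\otimes F_{k,l,u}G)$ being effects manufactured from $G$ itself; this yields an explicit finite formula for $w'(A,G)$ in terms of values on product effects, so uniqueness holds pointwise in $(A,G)$ with no exceptional sets. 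Your argument --- Corollary \ref{Cor1} applied to $hq$, polarization in each tensor factor over a countable family, pooling of the $\mu$-null sets --- is the tail of the generalized Gleason proof transplanted to $hq$; it is sound, and you correctly identify that countability of the family (separable ${\cal H}$, finite-dimensional ${\cal H}_q$) is what keeps the pooled set null. The divergence is larger for convex-linearity: the paper routes through the bounded linear extension $O'_h$ on simple functions from the proof of Lemma \ref{cont} and its product-effect expansion of ${\cal O}(w)(A,G)$, which is precisely why its statement is confined to simple measures. You instead decompose the conditional $w_F$ as a convex combination of the $w_{jF}$ with weights $t_jw_{jq}(F)/w_q(F)$, apply Lemma \ref{convlin}, and invoke the uniqueness just established to identify $t\,{\cal O}(w_1)+(1-t)\,{\cal O}(w_2)$ with ${\cal O}(w)$; this avoids $O'_h$ entirely and actually proves convex-linearity of ${\cal O}$ on all of ${\cal P}_{hq}$, a strictly stronger conclusion obtained more cheaply (the paper only needs the simple-measure case downstream, to feed Lemma \ref{cont} in the proof of the generalized Kraus theorem). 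One phrasing quibble: existence of $w'$ is not obtained by ``applying the definition to the enlarged system'' --- it is literally the existence clause in the definition of a hybrid operation of $h$ --- but the substance of what you assert there is right.
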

\begin{proof}
Consider the bipartite system $hq$ consisting of $h$ and $q$, any probability 
measure $w$ of $hq$, and any probability measure $w'$ that fulfills 
eq.\eqref{vieq}. One has 
$w'(A,G) =\int_A \mathrm{tr}(\Omega'(x) G) d\mu(x)$, 
for any event $A$ and effect $G$ of $hq$, where the class $[\Omega']$ of 
$\mathrm{L}$ corresponds to $w'$. Using the expansion 
$G=\sum_{k,l} \mathrm{tr}_q (I \otimes |l\rangle \langle k | G)
\otimes |k\rangle \langle l |$ where the sum runs over all couples of elements 
$|k\rangle$ and $|l\rangle$ from an orthonormal basis of the Hilbert space 
of $q$, one can show that
\begin{multline}
w'(A,G)= \sum_{u=1}^2  \sum_{k,l}
\big( v(E_{k,l,u}\otimes F_{k,l,u}) +v(E_{k,l,3} \otimes F_{k,l,3}) \\
-v(E_{k,l,u} \otimes F_{k,l,3})-v(E_{k,l,3} \otimes F_{k,l,u}) \big)  , 
\nonumber
\end{multline}
where $v(G')=w'(A,G')$ for any effect $G'$, 
$F_{k,l,1}=(| k \rangle+| l \rangle)(\langle k |+\langle l |)/2$, 
$F_{k,l,2}=(| k \rangle+i| l \rangle)(\langle k |-i\langle l |)/2$, 
$F_{k,l,3}=(|k \rangle\langle k| +|l \rangle\langle l|)/2$,
and $E_{k,l,u}=\mathrm{tr}_q(I\otimes F_{k,l,u} G)$. So, $w'$ is fully 
determined by $w$ via eq.\eqref{vieq}. We define the map 
${\cal O} : w \mapsto w'$. 

For any $[\Omega] \in \mathrm{P}'$, the class of the map 
$\omega : x \mapsto \mathrm{tr}_q (\Omega(x) I\otimes F)$, where $F$ is 
any effect of $q$, belongs to $\mathrm{S}_h$. Lemma \ref{convlin} ensures 
that ${\cal O}_h$ is convex-linear. We denote as $O'_h$ the bounded linear 
map from $\mathrm{S}_h$ to $\mathrm{L}_h$ corresponding to ${\cal O}_h$, 
see the proof of Lemma \ref{cont}. Using the linearity of $O'_h$, equation 
\eqref{eu2} can be rewritten as
$w'(A,E\otimes F)= \int_A \mathrm{tr}(\omega'(x) E) d\mu(x)$ where 
$\omega' \in O'_h([\omega])$.  Then, arguing as above, it can be shown that 
${\cal O}(w)(A,G) = \sum_{k,l} 
\int_A \mathrm{tr}(\omega'_{k,l}(x) M_{k,l}) d\mu(x)$, where 
$M_{k,l}=\mathrm{tr}_q (I \otimes |l\rangle \langle k | G)$ and 
$\omega'_{k,l} \in O'_h([\omega_{k,l}])$ with $\omega_{k,l} : x \mapsto
\mathrm{tr}_q (\Omega(x) I\otimes |k\rangle \langle l |)$. Let $w_1$ and 
$w_2$ be any simple probability measures of $hq$, $[\Omega_1]$ and 
$[\Omega_2]$ the corresponding classes in $\mathrm{P}'$, $t \in [0,1]$, and 
$w=t w_1+(1-t)w_2$. The convex-linearity of the bijection between the set of 
simple probability measures of $hq$ and $\mathrm{P}'$ implies that 
$\omega_{k,l}=t \omega_{1,k,l}+(1-t)\omega_{2,k,l}$ $\mu$-almost 
everywhere, where $\omega_{n,k,l}$ is defined similarly to $\omega_{k,l}$ 
with $\Omega$ replaced by $\Omega_n$. It follows from the linearity of $O'_h$ 
that $\omega'_{k,l}=t \omega'_{1,k,l}+(1-t)\omega'_{2,k,l}$ $\mu$-almost 
everywhere, where $\omega'_{n,k,l} \in O'_h([\omega_{n,k,l}])$. This equality, 
together with the above expression of $\cal O$, implies the convex-linearity of 
$\cal O$ for simple probability measures. 
\end{proof}
A requirement for bipartite systems consisting of $h$ and any classical system, 
say $c$, similar to that defining hybrid operations can be formulated as follows. 
Let $c$ be characterized by sample space $Y$, event space ${\cal B}$, and 
classical reference measure $\nu$. The bipartite system $hc$ consisting of $h$ 
and $c$ is a hybrid system whose events are the elements of the product 
$\sigma$-algebra ${\cal A} \otimes {\cal B}$, effects are the elements of 
$\cal E$, and classical reference measure is the product measure 
$\mu \otimes \nu$. For any probability measure $w$ of $hc$ and any event 
$B$ of $c$ such that the probability $p_c(B)=w(X \times B,I)$ is nonvanishing, 
a probability measure $w_{B}$ of $h$ can be defined by 
$w_B(A,E)=w(A \times B,E)/p_c(B)$. It can be interpreted similarly as $w_F$ 
given by eq.\eqref{wF}. We show below that there is a probability measure 
$w'$ of $hc$ such that, for any event $A$ and effect $E$ of $h$, 
$w'(A \times B,E)=p_c(B){\cal C}_h(w_B)(A,E)$ if $p_c(B)>0$ and vanishes 
otherwise, for any continuous convex-linear map ${\cal C}_h$ and discrete 
system $c$. Due to Lemma \ref{convlin}, this result applies to continuous 
operations. 
\begin{proof}
Consider the set of $\mu \otimes \nu$-Bochner integrable maps from 
$X \times Y$ to the set of trace-class operators on ${\cal H}$. We denote as 
$\mathrm{L}_{hc}$ the corresponding quotient set and as $\mathrm{P}_{hc}$ 
the subset of $\mathrm{L}_ {hc}$ consisting of all classes $[\Omega]$ 
containing a map $\Omega$ such that $\Omega(x,y)$ is positive for any 
$(x,y) \in X \times Y$ and 
$\int_{X \times Y} \mathrm{tr}\Omega(x,y) d\xi(x,y) =1$ where 
$\xi=\mu\otimes\nu$. For a discrete system $c$, $Y$ is a subset of 
$\mathbb{N}$, $\cal B$ is the set of all subsets of $Y$, and $\nu$ is the 
counting measure. Let $[\Omega] \in \mathrm{P}_{hc}$ be the class
corresponding to $w$ and define, for any $y \in Y$, 
$\omega_y : x \mapsto \Omega(x,y)$ on $X$. For any bounded operator $M$ 
on ${\cal H}$, $(x,y) \mapsto \mathrm{tr}(\Omega(x,y)M)$ is 
$\mu \otimes \nu$-integrable and so $x \mapsto \mathrm{tr}(\omega_y(x)M)$ 
is $\mu$-integrable. Fubini's theorem implies that 
$\sum_{y \in Y} {\cal I}_y =1$ where 
${\cal I}_y=\int_X \mathrm{tr}\omega_y(x) d\mu(x)$ and so ${\cal I}_y$ is 
finite for any $y \in Y$ \cite{Bi}. Consequently, due to Pettis measurability 
theorem, $\omega_y$ is $\mu$-Bochner integrable \cite{RN}. Let $C'_h$ be 
the bounded linear map from $\mathrm{L}_h$ to itself corresponding to 
${\cal C}_h$ and denote $[\omega'_y]=C'_h([\omega_y])$ where 
$\omega'_y(x)$ is positive for any $x \in X$ and 
$\vert [ \omega'_y ] \vert ={\cal I}_y$, see the proof of Lemma \ref{cont}. 
As $[\omega'_y] \in \mathrm{L}_h$, there is a sequence of simple functions 
$\hat \omega_{y,n}$ such that $\lim_{n \rightarrow \infty} 
\Vert \omega'_y(x)-\hat \omega_{y,n}(x) \Vert=0$ for any $x \in X$. 

We define $\Omega' : (x,y) \mapsto \omega'_y(x)$ and the simple functions 
$\hat \Omega_n$ on $X\times Y$ by $\hat \Omega_n(x,y)=\omega_{y,n}(x)$ 
if $y \le n$ and vanishes otherwise. For any $(x,y) \in X \times Y$, 
$\Omega'(x,y)$ is positive and $\lim_{n \rightarrow \infty} 
\Vert \Omega'(x,y)-\hat \Omega_{n}(x,y) \Vert=0$. Moreover, Tonelli's theorem 
gives $\int_{X \times Y} \mathrm{tr}\Omega'(x,y) d\xi(x,y) =1$ \cite{Bi}. So, 
$[\Omega']$ belongs to $\mathrm{P}_{hc}$. We denote as $w'$ the 
corresponding probability measure of $hc$. Using again Fubini's theorem, one 
obtains $w'(A\times B,E)
=\sum_{y \in B} \int_A \mathrm{tr} (\omega'_y(x)E) d\mu(x)$. Let us define 
$[\omega_B]=\sum_{y \in B} [\omega_y]/p_c(B)$, where the sum converges 
with respect to the norm $\vert \cdot \vert$, and 
$[\omega'_B]=C'_h([\omega_B])$. As $C'_h$ is a bounded linear map, one has 
$[\omega'_B]=\sum_{y \in B} [\omega'_y]/p_c(B)$. The required property of 
${\cal C}_h$ then follows from 
$w(A\times B,E)= p_c(B)\int_A \mathrm{tr} (\omega_B(x)E) d\mu(x)$ and the 
similar relation between $w'$ and $\omega'_B$.
\end{proof}
\subsection{Generalized Kraus theorem}
The following theorem can now be shown.
\begin{Kt}
For any hybrid operation ${\cal O}_h$, there is a sequence of maps 
$K_{\alpha}$ from ${\cal A}^2$ to the set of bounded operators on $\cal H$ 
such that, for any simple probability measure $\hat w$, given by 
eq.\eqref{wapp} with events $A_n$ and operators $\omega_n$, event $A$, 
and effect $E$,
\begin{equation}
{\cal O}_h(\hat w)(A,E) = \sum_{n,\alpha}\mathrm{tr}
\left(K_{\alpha}(A,A_n)\omega_n K_{\alpha}(A,A_n)^\dag E \right)  . 
\label{Kteq}
\end{equation}

Any map from the set of simple probability measures of $h$ to ${\cal P}_h$ 
given by eq.\eqref{Kteq} is the restriction of a unique continuous map on 
${\cal P}_h$ which is a hybrid operation.
\end{Kt}
As seen above, when ${\cal H}$ is finite-dimensional or the classical subsystem 
$c_h$ is discrete, all hybrid operations are continuous. Thus, in these cases, the 
above theorem implies that an operation is fully characterized by its action on 
simple probability measures and so by the Kraus maps $K_\alpha$ that appear 
in eq.\eqref{Kteq}. The transformed probability measure ${\cal O}_h(\hat w)$ is 
not necessarily simple. Moreover, though the existence of a corresponding hybrid 
state is ensured by Corollary \ref{Cor1}, it does not follow straightforwardly from 
eq.\eqref{Kteq}. We consider below specific operations and systems for which 
the transformed state can always be written in terms of the initial one, simple or 
not. If $X$ is a singleton, all events in eq.\eqref{Kteq} are equal to $X$ and so 
${\cal O}_h$ reduces to an usual quantum operation, i.e., 
${\cal O}_h(\hat w)(X,E) =\mathrm{tr}( \rho' E)$ with the transformed state 
$\rho'=\sum_{\alpha} L_{\alpha}^{\phantom{\dag}}\rho L_{\alpha}^\dag$ 
of the quantum subsystem $q_h$, where $\rho$ is the initial one and 
$L_{\alpha}=K_{\alpha}(X,X)$.
\begin{proof}
Lemma \ref{convlin} ensures that ${\cal O}_h$ is convex-linear. We denote as 
$O'_h$ the bounded linear map from $\mathrm{S}_h$ to $\mathrm{L}_h$ 
corresponding to ${\cal O}_h$, see the proof of Lemma \ref{cont}. Let $A$ and 
$B$ be any events of $h$ with finite $\mu(A)$ and define $O''_h$ from 
${\cal T}_h$ to itself by $O''_h(T)=\int_B \omega d\mu$, where 
$\omega \in  O'_h([T 1_A])$. Since $O'_h$ is linear, so is $O''_h$. Moreover, it 
obeys $\Vert O''_h(T) \Vert \le e \Vert T \Vert$ where $e=2 \mu(A)$, as the 
norm of $O'_h$ is at most $2$. Let $M$ be any bounded operator on $\cal H$ 
and $\xi$ the linear form on ${\cal T}_h$ defined by 
$\xi(T)=\mathrm{tr}(O''_h(T) M)$. The linearity of $\xi$ follows from that of 
$O''_h$. It is furthermore continuous as 
$|\xi(T)| \le e\Vert M \Vert_{op} \Vert T \Vert$. Therefore, there is a unique 
bounded operator $N$ such that $\xi(T)=\mathrm{tr}(T N)$ and 
$\Vert N \Vert_{op}=\sup_{T \ne 0} |\xi(T)|/\Vert T \Vert 
\le e\Vert M \Vert_{op}$ \cite{BS}. We define the dual map $O^*_h$ on the 
set of bounded operators by $O^*_h(M)=N$. It fulfills, for any elements 
$| k \rangle$ and $|l \rangle$ of an orthonormal basis of $\cal H$, 
$\langle k | O^*_h(M) | l \rangle =  
\mathrm{tr}(O''_h( | l \rangle \langle k |)M)$ and so is linear. The above
inequality shows that it is bounded. For any positive operator $M$ and 
$|\psi \rangle \in {\cal H}$, it results from the continuity of the linear form 
$T \mapsto \mathrm{tr}(TM)$ on ${\cal T}_h$ that 
$\langle \psi | O^*_h(M) |\psi \rangle = 
\int_B \mathrm{tr}(\omega(x)M) d\mu(x)$  where 
$\omega \in O'_h([|\psi \rangle \langle \psi | 1_A])$ \cite{RN}. For almost 
every $x$, $\omega(x)$ is positive, see the proof of Lemma \ref{cont}, and 
so $O^*_h(M)$ is positive. 

Consider the bipartite system $hq$ consisting of $h$ and any quantum system 
$q$ with finite-dimensional Hilbert space ${\cal H}_q$, the convex-linear map 
$\cal O$ given by Lemma \ref{Lem3}, and the corresponding bounded linear 
map $O' :\mathrm{S} \rightarrow \mathrm{L}$, see the proof of Lemma 
\ref{cont}. Linear maps $O''$ from $\cal T$ to itself, and $O^*$ from the set 
of bounded operators on ${\cal H} \otimes {\cal H}_q$ to itself can be defined 
similarly as above. The dual map $O^*$ is positivity-preserving. The equality of 
the right sides of eq.\eqref{eu1} and eq.\eqref{eu2}, the linearity of $O'_h$ 
and $O'$, and the continuity of the linear forms $T \mapsto \mathrm{tr} (TM)$ 
on $\cal T$ and ${\cal T}_h$, where $M$ is any bounded operator, give that, 
for any effects $E$ and $F$ of $h$ and $q$, respectively, and any rank-one 
projector $\Pi$, $\mathrm{tr}(O''(\Pi) E\otimes F)
=\mathrm{tr}(O''_h(\mathrm{tr}_q (\Pi I\otimes F)) E) $, which implies 
$O^*( E\otimes F)= O_h^*(E)\otimes F$. Any bounded operator can be 
decomposed into positive operators, see the proof of Lemma \ref{cont}. 
Moreover, for any positive operator $M$, there is an effect $E$ such that 
$M=\Vert \sqrt{M} \Vert_{op}^2 E$. Consequently, the above relation between 
the linear maps $O^*$ and $O_h^*$ holds for any bounded operators. Thus, 
as $O^*$ is a positive map, $O_h^*$ is a completely positive map \cite{cpm}. 
The $\sigma$-weak topology on the set of bounded operators on ${\cal H}$ is 
generated by the subsets $V$ consisting of operators $M$ such that 
$\mathrm{tr}(T M) \in U$, where $T$ is trace-class and $U$ is an open set of 
$\mathbb{C}$ \cite{BLT}. Denote as $V'$ the inverse image by $O_h^*$ of 
such a subbase set $V$. For any $M \in V'$, one has 
$\mathrm{tr}(O''_h(T) M)=\mathrm{tr}(T O_h^*(M)) \in U$, since 
$O_h^*(M) \in V$, and so $V'$ is open, which shows that $O_h^*$ is 
$\sigma$-weakly continuous.

Therefore, $O_h^*$ fulfills all the prerequisites of Kraus theorem \cite{K}. So, 
there is a sequence of maps $K_{\alpha}$ from ${\cal A}^2$ to the set of 
bounded operators on ${\cal H}$ such that, for any bounded operator $M$, 
$O_h^*(M)=\sum_{\alpha} K_{\alpha}(B,A)^{\dag} M K_{\alpha}(B,A)$, 
where the sum converges weakly. Let $\hat w$ be a simple probability measure 
of $h$. The corresponding class $[\hat \omega]$ belongs to $\mathrm{P}'_h$ 
and so one can write $\hat \omega=\sum_n \omega_n 1_{A_n}$, where the 
sum is finite and $\omega_n$ denotes positive trace-class operators. The 
linearity of $O'_h$ and the continuity of the linear form 
$T \mapsto \mathrm{tr}(TE)$ on ${\cal T}_h$ give ${\cal O}_h(\hat w)(A,E) 
= \sum_n \mathrm{tr}( \int_A \hat \omega_n d\mu E)$ where 
$\hat \omega_n \in O'_h([\omega_n 1_{A_n}])$.  Using the above results, the 
summand in this expression can be rewritten with the help of the maps 
$K_{\alpha}$. Equation \eqref{Kteq} follows from the fact that the maps 
$E \mapsto \mathrm{tr}(\omega_n E)$ from $\cal E$ to $\mathbb{C}$ 
are weakly sequentially continuous, as $\omega_n$ is trace-class and positive, 
see the proof of the generalized Gleason theorem. 

Consider now any map ${\cal O}_h$ from the set of simple probability measures 
of $h$ to ${\cal P}_h$ given by eq.\eqref{Kteq}. Let $hq$ be the bipartite 
system consisting of $h$ and any quantum system $q$ with finite-dimensional 
Hilbert space ${\cal H}_q$ and $A$ and $B$ be any events with finite $\mu(A)$. 
For any $|\psi \rangle$ and $|\phi \rangle$ of ${\cal H}\otimes{\cal H}_q$, 
one has  
$\langle \psi | L_{\alpha}^{\dag} L_{\alpha} \otimes I_q | \phi \rangle 
= \sum_{u=1}^4 z_u \mathrm{tr} ( L_{\alpha} 
\mathrm{tr}_q (| \xi_u \rangle \langle \xi_u |)L_{\alpha}^{\dag})/2$, 
where $L_{\alpha}=K_{\alpha}(B,A)$, $| \xi_1 \rangle=| \psi \rangle$, 
$| \xi_2 \rangle=| \phi \rangle$, 
$| \xi_3 \rangle=| \psi \rangle+| \phi \rangle$,
$| \xi_4 \rangle=| \psi \rangle+i| \phi \rangle$,
$z_1=z_2=i-1$, $z_3=1$, and $z_4=-i$ and so the sum 
$\sum_{\alpha} L_{\alpha}^{\dag} L_{\alpha} \otimes I_q$ converges weakly. 
Thus, $\sum_{\alpha} L_{\alpha} \otimes I_q T L_{\alpha}^{\dag} \otimes I_q$ 
converges in $\cal T$ for any trace-class operator $T$ \cite{K}. Consequently, 
we can define a map $\cal O$ from the set of simple probability measures of 
$hq$ to ${\cal P}_{hq}$ by 
\begin{equation}
{\cal O}(\hat w)(A,G) = \sum_n 
\mathrm{tr}\Big(\sum_{\alpha}L_{\alpha,n} \otimes I_q 
\Omega_n L_{\alpha,n}^\dag \otimes I_q G \Big)  , \nonumber
\end{equation}
where $A$ and $G$ are any event and effect, respectively, $\hat w$ is a simple 
probability measure with events $A_n$ and positive trace-class operators 
$\Omega_n$, and $L_{\alpha,n}=K_{\alpha}(A,A_n)$.

Since the operators $\Omega'_n=\sum_{\alpha}L_{\alpha,n} \otimes I_q 
\Omega_n L_{\alpha,n}^\dag \otimes I_q$ are trace-class and positive, 
${\cal O}(\hat w)$ satisfies conditions \ref{i} and \ref{iii}. Conditions \ref{iv} 
and \ref{v} are fulfilled due to the continuity of the linear form 
$T \mapsto \mathrm{tr}T$ on $\cal T$, 
$[\sum_n \mathrm{tr}_q \Omega_n 1_{A_n}] \in \mathrm{P}_h$, and 
${\cal O}(\hat w)(A,G) \le {\cal O}(\hat w)(A,I\otimes I_q)$, which follows 
from condition \ref{iii}. To show that condition \ref{ii} is met, consider a 
sequence of pairwise disjoint events $B_m$ and $A=\cup_m B_m$ and define 
$\Omega'_{n,m}=\sum_{\alpha} L_{\alpha,m,n} \otimes I_q 
\Omega_n L_{\alpha,m,n}^\dag \otimes I_q$ with 
$L_{\alpha,m,n}=K_{\alpha}(B_m,A_n)$. The positivity of $\Omega'_{n,m}$ 
implies that $$\Big\Vert \sum_{m\le m'} \Omega'_{n,m} - 
\sum_{m\le m''} \Omega'_{n,m} \Big\Vert=\Big\vert \sum_{m\le m'} s_{n,m}  
- \sum_{m\le m''} s_{n,m} \Big\vert , $$ where 
$s_{n,m}=\sum_{\alpha}\mathrm{tr}
(L_{\alpha,m,n} \mathrm{tr}_q(\Omega_n) L_{\alpha,m,n}^\dag)$. 
As the codomain of ${\cal O}_h$ is ${\cal P}_h$, the sum $\sum_m s_{n,m}$ 
converges and so the above distance vanishes as $m'$ and $m''$ go to infinity. 
Thus, as $\cal T$ is a Banach space \cite{BS}, 
$\sum_m \Omega'_{n,m}$ converges to a trace-class operator $\Omega''_n$. 
For any $| k , k' \rangle=| k  \rangle \otimes | k' \rangle$ and 
$| l , l' \rangle=| l  \rangle \otimes | l' \rangle$, where $| k \rangle$ 
($| k' \rangle$) and $| l \rangle$ ($| l' \rangle$) are elements of an orthonormal 
basis of $\cal H$ (${\cal H}_q$), one can write 
$$\langle k , k' | \Omega'_n | l , l' \rangle = \sum_{u,v,\alpha}z_u z_v 
\mathrm{tr}\big(L_{\alpha,n} \omega_{n,k',l',u} 
L_{\alpha,n}^\dag E_{k,l,v}  \big), $$
where $(u,v)$ runs over $\{1,2,3\}^2$, 
$E_{k,l,1}=(| k \rangle+| l \rangle)(\langle k |+\langle l |)/2$, 
$E_{k,l,2}=(| k \rangle+i| l \rangle)(\langle k |-i\langle l |)/2$, 
and $E_{k,l,3}=(|k \rangle\langle k| +|l \rangle\langle l|)/2$ are 
effects, $\omega_{n,k',l',u}=\mathrm{tr}_q(I\otimes E_{k',l',u} \Omega_n)$ 
is a positive trace-class operator, $z_1=1$, $z_2=-i$, and $z_3=i-1$, and 
$\langle k , k' | \Omega'_{n,m}| l , l' \rangle$ is given by the same expression 
with $L_{\alpha,n}$ replaced by $L_{\alpha,m,n}$. Thus, as the codomain of 
${\cal O}_h$ is ${\cal P}_h$, one has $\langle k , k' | \Omega'_n | l , l' \rangle
= \sum_m \langle k , k' | \Omega'_{n,m}| l , l' \rangle$. This sum is also equal 
to $\langle k , k' | \Omega''_n | l , l' \rangle$ since 
$\Vert | l , l' \rangle \langle k , k' | \Vert_{op}=1$. So, 
$\Omega'_n=\Omega''_n$ is the limit in trace norm of 
$(\sum_{m\le m'} \Omega'_{n,m})_{m'}$. It follows that ${\cal O}(\hat w)$ 
satisfies condition \ref{ii} and is hence a probability measure.

As $\hat w(A,E\otimes F) = \sum_n \mathrm{tr}( \tau_n E )\mu(A \cap A_n)$, 
with the positive trace-class operators 
$\tau_n=\mathrm{tr}_q(\Omega_n I \otimes F)$, the conditional probability 
measure ${\hat w}_F$ of $h$, given by eq.\eqref{wF}, is simple. Thus, 
${\cal O}_h({\hat w}_F)$ is given by the right side of eq.\eqref{Kteq} with 
$\tau_n/\hat w(X,I\otimes F)$ in place of $\omega_n$. Since 
${\cal O}(\hat w)(A,E\otimes F) = \sum_{n,\alpha} 
\mathrm{tr}(L_{\alpha,n} \tau_n L_{\alpha,n}^\dag E )$, equation 
\eqref{vieq} is fulfilled with ${\cal O}(\hat w)$ in place of $w'$ and $\hat w$ 
in place of $w$. For any simple probability measures $\hat w_1$ and 
$\hat w_2$, there is a finite sequence of pairwise disjoint events $A_n$ and 
positive trace-class operators $\Omega_{k,n}$ such that 
$\hat w_k(A,G)=\sum_n \mathrm{tr}(\Omega_{k,n} G)\mu(A \cap A_n)$ for 
any event $A$ and effect $G$. Then, it follows from the definition of the map 
${\cal O}$ that ${\cal O}(t\hat w_1+(1-t)\hat w_2)
=t{\cal O}(\hat w_1)+(1-t){\cal O}(\hat w_2)$, where $t \in [0,1]$, i.e., 
${\cal O}$ is convex-linear. Equation \eqref{Kteq} leads to the same conclusion 
for ${\cal O}_h$. We denote as ${\cal O}'$ (${\cal O}'_h$) the unique 
continuous extension of ${\cal O}$ (${\cal O}_h$) given by Lemma \ref{cont} 
and as $O'$ ($O'_h$) the corresponding map from $\mathrm{L}$ 
($\mathrm{L}_h$) to itself. 

Consider any $[\Omega]\in\mathrm{P}$ and $[\Theta]\in\mathrm{P}'$. As 
shown above, equation \eqref{vieq} holds for any simple probability measure 
of $hq$ and so $\int_A \mathrm{tr}(\Theta'(x) E\otimes F) d\mu(x)
= \int_A \mathrm{tr}(\theta'(x) E) d\mu(x)$ where $\Theta' \in O'([\Theta])$ 
and $\theta' \in O'_h([\theta])$ with 
$\theta : x \mapsto\mathrm{tr}_q (\Theta(x) I\otimes F)$. Consequently, 
one has 
\begin{multline}
\left|\int_A \mathrm{tr}(\Omega'(x) E\otimes F) d\mu(x)
-\int_A \mathrm{tr}(\omega'(x) E) d\mu(x) \right| \\
\le \Vert G \Vert_{op} |[\Omega-\Theta]| 
+\Vert E \Vert_{op} |[\omega-\theta]| 
\le 2 |[\Omega-\Theta]| ,
\nonumber
\end{multline} 
where $G=E \otimes F$, $\Omega' \in O'([\Omega])$, and 
$\omega' \in O'_h([\omega])$ with 
$\omega : x \mapsto \mathrm{tr}_q (\Omega(x) I\otimes F)$. The above 
inequalities are obtained using the fact that the norms of $O'_h$ and $O'$ 
on $\mathrm{H}_h$ and $\mathrm{H}$, respectively, are not larger than unity, 
$\Vert \mathrm{tr}_q T \Vert \le \Vert T \Vert$ for any self-adjoint trace-class 
operator $T$ on ${\cal H} \otimes {\cal H}_q$, and 
$\Vert G \Vert_{op}=\Vert E \Vert_{op}\Vert F \Vert_{op} \le 1$ \cite{BS}. 
Corollary \ref{Corapp} implies that 
$\mathrm{P} \subset \overline{\mathrm{P}'}$ 
and so the above difference vanishes, which shows that ${\cal O}'_h$ is a 
hybrid operation.
\end{proof}
For continuous operations, it can be shown, using the above theorem, that the
transformation ${\cal O}$ given by Lemma \ref{Lem3}, that describes the 
action of ${\cal O}_h$ on $h$ in the presence of a quantum system $q$ with 
no interaction beween $h$ and $q$ and no intrinsic evolution of $q$, is an 
operation. Such an operation also exists when the Hilbert space of $q$ is 
infinite-dimensional. Furthermore, another characterization of continuous 
operations can be derived, see the following corollary.
\begin{Cor}\label{CorO}
Let ${\cal O}_h$ be a continuous map from ${\cal P}_h$ to itself. The following 
assertions are equivalent.
\begin{enumerate}[label=(\alph*)]
\item ${\cal O}_h$ is an operation, \label{1}
\item For any quantum system $q$, there is a continuous operation ${\cal O}$ 
of $hq$ such that, for any probability measure $w$ of $hq$, $w'={\cal O}(w)$ 
satisfies eq.\eqref{vieq}, \label{2}
\item For any quantum system $q$, there is a continuous operation ${\cal O}$ 
of $hq$ that transforms any product state 
$x \mapsto \omega_h(x) \otimes \rho_q$, where $\omega_h$ and $\rho_q$ 
denote states of $h$ and $q$, respectively, into 
$x \mapsto \omega'_h(x) \otimes \rho_q$, where $\omega'_h$ is the image of 
$\omega_h$ by ${\cal O}_h$. \label{3}
\end{enumerate}
For both assertions \ref{2} and \ref{3}, ${\cal O}$ is given by
\begin{equation}
{\cal O}(\hat w)(A,G) \!=\!\! \sum_{n,\alpha}\!
\mathrm{tr}\big(K_{\alpha}(A,A_n) \otimes I_q \Omega_n 
K_{\alpha}(A,A_n)^\dag \otimes I_q G \big)  , \label{OCor}
\end{equation}
for any simple probability measure $\hat w$ with events $A_n$ and operators 
$\Omega_n$, event $A$ and effect $G$ of $hq$, where $K_{\alpha}$ denotes 
the Kraus maps of ${\cal O}_h$ appearing in eq.\eqref{Kteq}.
\end{Cor}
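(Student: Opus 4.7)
My strategy is the cyclic chain (a)$\Rightarrow$(b)$\Rightarrow$(c)$\Rightarrow$(a), with the explicit formula eq.\eqref{OCor} extracted along the way.

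For (a)$\Rightarrow$(b), I would apply the generalized Kraus theorem to obtain Kraus maps $K_{\alpha}$ for ${\cal O}_h$, and take eq.\eqref{OCor} as the definition of ${\cal O}$ on the simple probability measures of $hq$. The weak convergence of $\sum_{\alpha} K_{\alpha}(A,A_n)^\dag K_{\alpha}(A,A_n)$, which holds because the $K_{\alpha}$ come from a genuine operation of $h$, implies the weak convergence of $\sum_{\alpha} (K_{\alpha}(A,A_n)\otimes I_q)^\dag(K_{\alpha}(A,A_n)\otimes I_q)$, so the construction used in the second half of the Kraus theorem proof extends verbatim and applies to any $q$, not only the finite-dimensional ones appearing in the definition of a hybrid operation. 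Lemma \ref{cont} then yields a unique continuous extension ${\cal O}$ on ${\cal P}_{hq}$ which is a hybrid operation, and eq.\eqref{vieq} is verified for simple $w$ by substituting $E\otimes F$ into eq.\eqref{OCor} and for arbitrary $w$ by continuity using Corollary \ref{Corapp}.

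For (b)$\Rightarrow$(c), I specialize eq.\eqref{vieq} to product states. If $w$ corresponds to $x\mapsto \omega_h(x)\otimes\rho_q$, then for any effect $F$ with $\mathrm{tr}(\rho_q F)>0$ one has $w_F=w_h$, the probability measure of $h$ with state $\omega_h$, so eq.\eqref{vieq} yields ${\cal O}(w)(A,E\otimes F)=\mathrm{tr}(\rho_q F)\int_A \mathrm{tr}(\omega'_h(x)E)\,d\mu(x)$. This coincides with the probability given by the state $x\mapsto\omega'_h(x)\otimes\rho_q$ on product effects, and the uniqueness statement in Corollary \ref{Cor1}, combined with the product-effect expansion used in the proof of Lemma \ref{Lem3}, forces the state of ${\cal O}(w)$ to be exactly this product, which is (c).

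The main obstacle will be (c)$\Rightarrow$(a). The operation ${\cal O}$ of $hq$ supplied by (c) admits, via the generalized Kraus theorem applied to $hq$, a representation with Kraus maps $M_{\alpha}(A,B)$ acting on ${\cal H}\otimes{\cal H}_q$. The plan is to extract a product structure $M_{\alpha}(A,B)=K_{\alpha}(A,B)\otimes I_q$ from the prescribed action on all product states as $\rho_q$ and $\omega_h$ vary. The cleanest route is to pass to the dual completely positive map ${\cal O}^*$ on bounded operators, defined as in the Kraus theorem proof: the product-state hypothesis translates, via duality and the continuity of $T\mapsto\mathrm{tr}(TM)$, into the identity ${\cal O}^*(E\otimes F)={\cal O}^*_h(E)\otimes F$ on product effects, where ${\cal O}^*_h$ is the candidate dual for $h$. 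Standard Kraus-representation theory for CP maps that respect a tensor factor then yields $M_{\alpha}=K_{\alpha}\otimes I_q$ up to the usual gauge freedom. This gives eq.\eqref{Kteq} for ${\cal O}_h$ and eq.\eqref{OCor} for ${\cal O}$, and the second half of the generalized Kraus theorem promotes ${\cal O}_h$ to a hybrid operation, completing the proof and simultaneously verifying that both (b) and (c) force ${\cal O}$ to coincide with the eq.\eqref{OCor} formula.
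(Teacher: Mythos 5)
Your overall architecture is sound and the first two links are fine. Your (a)$\Rightarrow$(b) is essentially the paper's own argument: define ${\cal O}$ by eq.\eqref{OCor} on simple measures, check it lands in ${\cal P}_{hq}$, extend by Lemma \ref{cont}, and verify eq.\eqref{vieq} first for simple $w$ and then by density. Your (b)$\Rightarrow$(c) is a genuine departure: the paper instead proves (a)$\Rightarrow$(c) directly, by showing that eq.\eqref{OCor} sends $[\hat\omega\otimes\rho]$ to $[\hat\omega'\otimes\rho]$ and passing to the closure of the simple classes. Your route --- for a product $w$ one has $w_F=w_h$ for every $F$ with $\mathrm{tr}(\rho_q F)>0$, so eq.\eqref{vieq} pins ${\cal O}(w)$ down on all product effects, and the matrix-element expansion of Lemma \ref{Lem3} together with the uniqueness in Corollary \ref{Cor1} forces the product form --- is correct and arguably more economical; just note that for infinite-dimensional $q$ the ``determined by product effects'' step needs the orthonormal-basis polarization argument from the Gleason uniqueness proof rather than Lemma \ref{Lem3} verbatim.

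The soft spot is (c)$\Rightarrow$(a), on two counts. First, before you can speak of ${\cal O}_h^*$ you must know that ${\cal O}_h$ is convex-linear, which is not among the corollary's hypotheses; it has to be extracted from (c) itself (fix $\rho_q$ and use the linearity of the bounded extension $O'$ of the operation ${\cal O}$ on classes $[\,\cdot\,\otimes\rho_q]$ to transfer convex-linearity to $O_h$), and only then does Lemma \ref{cont} supply the bounded linear $O'_h$ from which a dual can be built. You skip this. Second, the appeal to ``standard Kraus-representation theory for CP maps that respect a tensor factor'' is both under-justified and unnecessary: the Kraus maps $M_\alpha$ obtained by applying the generalized Kraus theorem to ${\cal O}$ are only related to a family of the form $K_\alpha\otimes I_q$ through the usual isometric reshuffling, and nothing downstream needs them in that literal form. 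Once you have $O^*(E\otimes F)=O_h^*(E)\otimes F$ you can verify eq.\eqref{vieq} directly on simple measures and close up by continuity, with no Kraus extraction at all. The paper's own (c)$\Rightarrow$(a) avoids duality entirely: it expands an arbitrary $\Omega(x)$ as a finite combination $\sum_{k,l,u,v}z_u z_v^*\,\tilde\omega_{k,l,u}(x)\otimes\rho_{k,l,v}$ with density operators $\rho_{k,l,v}$ of the finite-dimensional $q$, applies (c) and the linearity of $O'$ term by term, and reads off eq.\eqref{vieq} for $w_F$ --- a shorter and fully elementary route you may want to adopt.
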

\begin{proof} 
Assume that ${\cal O}_h$ is an operation. It follows from the above theorem 
that it transforms simple probability measures according to eq.\eqref{Kteq}. 
Consider the map from the set of simple probability measures of $hq$ to 
${\cal P}_{hq}$ given by eq.\eqref{OCor}, see the above proof. Note that the 
arguments used hold also when the Hilbert space of $q$ is infinite-dimensional. 
As eq.\eqref{OCor} is a particular case of eq.\eqref{Kteq}, the above theorem 
ensures that there is a unique continuous map ${\cal O}$ on ${\cal P}_{hq}$ 
that fulfills eq.\eqref{OCor} and that it is an operation. As shown in the proof 
of this theorem, $w'={\cal O}(w)$ satisfies eq.\eqref{vieq}. 

Consider the class $[\hat \omega \otimes \rho]$ where $[\hat \omega]$ is any 
class of $\mathrm{P}'_h$ and $\rho$ is any density operator of $q$ and denote 
as $\hat w$ the corresponding simple probability measure of $hq$. For any 
event $A$ and effect $G$ of $hq$, using eq.\eqref{OCor} and noting that 
$\mathrm{tr}_q(I \otimes \rho G)$ is an effect of $h$, one can write 
${\cal O}(\hat w)(A,G) = 
\mathrm{tr}(\int_A \hat \omega' d\mu \otimes \rho G)$ 
with $\hat \omega' \in O'_h([\hat \omega])$ where $O'_h$ is the bounded 
linear map from $\mathrm{L}_h$ to itself corresponding to ${\cal O}_h$, see 
the above proofs. So, as $T \mapsto T \otimes \rho$ is a bounded linear map 
from ${\cal T}_h$ to ${\cal T}$, the class corresponding to ${\cal O}(\hat w)$ 
is $[\hat \omega' \otimes \rho]$, see eq.\eqref{Gtmu}. Consider now 
$[\omega \otimes \rho]$ where $[\omega]$ is any class of $\mathrm{P}_h$ 
and denote as $w$ the corresponding probability measure of $hq$ and as $w'$ 
the probability measure given by 
$w'(A,G) =\int_A \mathrm{tr}(\omega'(x) \otimes \rho G) d\mu(x)$ with 
$\omega' \in O'_h([\omega])$. It follows from Corollary \ref{Corapp} that there 
is a sequence of classes $[\hat \omega_{n}]$ in $\mathrm{P}'_h$ such that 
$|[\omega - \hat \omega_{n}]|$ vanishes as $n \rightarrow \infty$. The 
continuity of ${\cal O}$ and $O'_h$ give 
$\lim_{n \rightarrow \infty} d({\cal O}(w),{\cal O}(\hat w_n))=0$, where 
$\hat w_n$ denotes the probability measure corresponding to 
$[\hat \omega_{n} \otimes \rho]$, and 
$\lim_{n \rightarrow \infty} |[\omega'- \hat \omega'_{n}]|=0$, where 
$\hat \omega'_{n} \in O'_h([\hat \omega_{n}])$. One has 
$d({\cal O}(w),w') \le d({\cal O}(w),{\cal O}(\hat w_n))+
|[\omega' - \hat \omega'_{n}]|$ for any $n$ and so ${\cal O}(w)=w'$, i.e., 
the class corresponding to ${\cal O}(w)$ is $[\omega' \otimes \rho]$.

We have shown that \ref{1} implies both \ref{2} and \ref{3}. It is obvious that 
\ref{2} leads to \ref{1}. It remains to show that \ref{3} gives \ref{1}. Assume 
\ref{3} holds, consider a quantum system $q$ with finite-dimensional Hilbert 
space ${\cal H}_q$, and denote as $O'$ the bounded linear map from 
$\mathrm{L}$ to itself corresponding to $\cal O$, see the above proofs. 
Assertion \ref{3} means that, for any $[\omega] \in \mathrm{P}_h$ and 
density operator $\rho$ of $q$, 
$O'([\omega \otimes \rho])=[\omega' \otimes \rho]$, where 
$\omega' \in O_h([\omega])$ with the map $O_h$ from $\mathrm{P}_h$ to 
itself corresponding to ${\cal O}_h$. Let $w$ be any probability measure of 
$hq$, $[\Omega]$ be the corresponding class of $\mathrm{P}$, and write, 
for any $x$, $\Omega(x)=\sum_{k,l} \sum_{u,v} z_u z^*_v 
\tilde \omega_{k,l,u}(x) \otimes \rho_{k,l,v}$, where the first sum runs over all 
couples of elements $|k\rangle$ and $|l\rangle$ from an orthonormal basis of 
${\cal H}_q$, $(u,v)$ runs over $\{ 1,2,3 \}^2$, 
$\rho_{k,l,1}=(| k \rangle+| l \rangle)(\langle k |+\langle l |)/2$, 
$\rho_{k,l,2}=(| k \rangle+i| l \rangle)(\langle k |-i\langle l |)/2$, 
and $\rho_{k,l,3}=(|k \rangle\langle k| +|l \rangle\langle l|)/2$ are density 
operators of $q$, 
$\tilde \omega_{k,l,u}(x)=\mathrm{tr}_q(I\otimes \rho_{k,l,u} \Omega(x))$ 
is trace-class and positive, $z_1=1$, $z_2=-i$, and $z_3=i-1$. Assertion \ref{3} 
leads to $O'([\Omega])=\sum_{k,l,u,v} z_u z^*_v t_{k,l,u}
[\omega'_{k,l,u} \otimes \rho_{k,l,v}]$ where 
$t_{k,l,u}=\int \mathrm{tr} \tilde \omega_{k,l,u}(x) d\mu(x)$ and 
$\omega'_{k,l,u} \in O_h([\omega_{k,l,u}])$ with 
$\omega_{k,l,u}=\tilde \omega_{k,l,u}/t_{k,l,u}$ if $t_{k,l,u}>0$ and vanishes 
otherwise. For any effect $F$ of $q$ such that $w(X,I\otimes F)>0$, the class 
of $\mathrm{P}_h$ corresponding to the probability measure given by 
eq.\eqref{wF} is $\sum_{k,l,u,v} z_u z^*_v t_{k,l,u} 
\mathrm{tr}(\rho_{k,l,v} F) [\omega_{k,l,u}]/w(X,I\otimes F)$. So, 
$w'={\cal O}(w)$ satisfies eq.\eqref{vieq} and \ref{1} is fulfilled.
\end{proof}
\section{Specific cases}\label{So}
\subsection{Non-interacting classical and quantum subsystems}\label{Icqs}
Consider hybrid probability measures of the form 
$(A,E) \mapsto p(A)\mathrm{tr}(\rho E)$, where $p$ is the probability measure 
of the classical subsystem $c_h$ and $\rho$ is the density operator of the 
quantum subsystem $q_h$. Under the action of a hybrid operation describing
non-interacting subsystems, such a product probability measure is transformed 
into another one given by 
$(A,E) \mapsto {\cal O}_c(p)(A) \mathrm{tr}(O_q(\rho)E)$, where 
${\cal O}_c$ is a probability measure transformation of $c_h$ and $O_q$ is a 
map from the set of density operators of $q_h$ to itself. For such operations, 
the following result can be shown. 
\begin{Prop}\label{Prop1}
Let ${\cal O}_h$ be a map from ${\cal P}_h$ to itself. 

The map ${\cal O}_h$ is a continuous operation describing non-interacting 
classical and quantum subsystems if and only if there are a map 
$k : {\cal A}\times X \rightarrow \mathbb{R}^+$ and a sequence of bounded 
operators $L_{\alpha}$ on ${\cal H}$ such that $k(X,x)=1$ for any $x$, 
$\sum_{\alpha} L_{\alpha}^\dag L_{\alpha}=I$, and, for any probability 
measure $w$, 
\begin{equation}
{\cal O}_h(w)(A,E) = \sum_{\alpha} \int k(A,x)
\mathrm{tr}\left(L_{\alpha}\omega(x) L_{\alpha}^\dag E \right)  d\mu(x), 
\label{indop}
\end{equation}
where $A$ and $E$ are any event and effect, respectively, and $\omega$ is 
given by Corollary \ref{Cor1}. 
\end{Prop}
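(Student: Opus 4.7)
The plan is to prove both directions via the generalized Kraus theorem. For necessity I will exploit the non-interacting assumption on simple product states to pin down the structure of the Kraus maps supplied by that theorem; for sufficiency I will recognize \eqref{indop} as a particular instance of \eqref{Kteq} on simple measures and extend by continuity.

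For the forward direction, since ${\cal O}_h$ is a continuous hybrid operation, the generalized Kraus theorem furnishes maps $K_\alpha$ from ${\cal A}^2$ to the set of bounded operators on ${\cal H}$ such that \eqref{Kteq} holds on simple probability measures. Fix a finite partition $\{A_n\}$ of $X$ and apply \eqref{Kteq} to a simple product state $\omega_n = c_n\rho$ with $\sum_n c_n\mu(A_n) = 1$; this corresponds to the factorized probability measure $\hat p(A)\,\mathrm{tr}(\rho E)$ where $\hat p$ has density $\sum_n c_n 1_{A_n}$. The non-interacting hypothesis forces the image to be a product ${\cal O}_c(\hat p)(A)\,\mathrm{tr}(O_q(\rho) E)$; setting $A = X$ shows that $O_q$ is independent of $\hat p$ and trace-preserving, while setting $E = I$ shows that ${\cal O}_c$ is independent of $\rho$. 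Concentrating $\hat p$ on a single cell ($c_m = \mu(A_m)^{-1}$, $c_n = 0$ for $n\ne m$) and varying $\rho$ and $E$ in the resulting identity yields
\begin{equation*}
\sum_\alpha K_\alpha(A,A_m)\,\rho\,K_\alpha(A,A_m)^\dagger = k(A,A_m)\,O_q(\rho)
\end{equation*}
for every density operator $\rho$, with a nonnegative scalar $k(A,A_m)$ depending only on the event $A_m$.

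The next step is to promote $k(A,\cdot)$ to a measurable kernel. Finite additivity of $A'\mapsto k(A,A')$ follows by comparing two partitions refining one another, $\sigma$-additivity from condition \ref{ii} applied to ${\cal O}_h(\hat w)(\cdot,I)$, and the bound $k(A,A')\le k(X,A')=\mu(A')$ (at $A=X$, from $\mathrm{tr}\,O_q(\rho)=1$) yields absolute continuity with respect to $\mu$; Radon-Nikodym then supplies a measurable density $k(A,x)\ge 0$ with $k(X,x)=1$ $\mu$-almost everywhere. Fixing a Kraus decomposition $O_q(\rho)=\sum_\alpha L_\alpha\rho L_\alpha^\dagger$ with $\sum_\alpha L_\alpha^\dagger L_\alpha = I$, the gauge freedom in \eqref{Kteq} lets me take $K_\alpha(A,A_n)=\sqrt{k(A,A_n)}\,L_\alpha$. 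Substituting into \eqref{Kteq} and swapping the finite sum over $n$ with the $\mu$-integral produces \eqref{indop} on simple measures; Corollary \ref{Corapp} together with the continuity of ${\cal O}_h$ extends the identity to all of ${\cal P}_h$.

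For the converse, restricting \eqref{indop} to a simple $\hat w$ gives ${\cal O}_h(\hat w)(A,E)=\sum_{n,\alpha}k(A,A_n)\,\mathrm{tr}(L_\alpha\omega_n L_\alpha^\dagger E)$, an instance of \eqref{Kteq} with $K_\alpha(A,A_n)=\sqrt{k(A,A_n)}\,L_\alpha$; the second part of the generalized Kraus theorem then delivers a unique continuous hybrid operation agreeing with this formula on simple measures. That \eqref{indop} itself provides that extension follows from the contraction estimate $d({\cal O}_h(w_1),{\cal O}_h(w_2))\le\int k(X,x)\,\Vert\omega_1(x)-\omega_2(x)\Vert\,d\mu(x)=d(w_1,w_2)$, using $\sum_\alpha L_\alpha^\dagger L_\alpha = I$ and $k(X,x)=1$. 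Plugging a product $\omega(x)=f(x)\rho$ into \eqref{indop} immediately yields a product output with ${\cal O}_c(p)(A)=\int k(A,x)f(x)\,d\mu(x)$ and $O_q(\rho)=\sum_\alpha L_\alpha\rho L_\alpha^\dagger$, establishing the non-interacting property. The main obstacle is the scalar-factorization step in the forward direction: extracting from the non-interacting identity that the Kraus maps collapse to $\sqrt{k(A,A_n)}\,L_\alpha$ with a common family $\{L_\alpha\}$ and a classical set function $k$ that comes from a measurable $\mu$-kernel is precisely what reduces the generic Kraus structure of \eqref{Kteq} to the product form \eqref{indop}.
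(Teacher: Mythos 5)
Your proposal is correct and follows essentially the same route as the paper's proof: both directions go through the generalized Kraus theorem on simple product measures supported on a single cell, extract the scalar set function $\nu_A(B)=k(A,B)$, establish additivity and the domination $\nu_A\le\mu$ to invoke Radon--Nikodym, and then extend from simple measures to all of ${\cal P}_h$ by the contraction estimate and Corollary \ref{Corapp}. The only slip is attributing the $\sigma$-additivity of $A'\mapsto k(A,A')$ to condition \ref{ii} (which governs the first argument of the output measure, not the cell variable); but this is harmless since, exactly as in the paper, $\sigma$-additivity already follows from the finite additivity and the bound $\nu_A\le\mu$ that you establish.
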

The transformed probability measure of $c_h$ is 
$A \mapsto \int k(A,x) f(x) d\mu(x)$, where 
$f : x \mapsto \mathrm{tr} \omega(x)$ is the initial probability density function 
of $c_h$. The transformed state of $q_h$ is 
$\rho'=\sum_{\alpha} L_{\alpha}\rho L_{\alpha}^\dag$, where 
$\rho=\int \omega d\mu$ is the initial state of $q_h$. In other words, $\rho$ 
is changed into $\rho'$ according to the quantum operation with Kraus 
operators $L_{\alpha}$. Equation \eqref{indop} is the expression of a hybrid 
probability measure when the map $k$ is a Markov kernel, i.e., 
$A \mapsto k(A,x)$ is a probability measure for any $x$ \cite{MK}, such that 
$x \mapsto k(A,x)$ vanishes almost everywhere when $A$ is null. Examples are 
$k(A,x)=\int_A h(x',x) d\mu(x')$, where $h$ is a $\mu \otimes \mu$-integrable 
map from $X\times X$ to $\mathbb{R}^+$, and $k(A,x)=1_A(\phi(x))$, where 
$\phi$ is a measurable function from $X$ to itself such that 
$\mu(\phi^{-1}(A))=0$ for any null event $A$. In the first case, $c_h$ evolves 
according to a stochastic dynamics and the transformed hybrid state can be 
written as $\omega'(x)=\sum_{\alpha} \int h(x,x')
L_{\alpha}\omega(x') L_{\alpha}^\dag  d\mu(x')$. In the second case, the 
dynamics of $c_h$ is deterministic, governed by the map $\phi$. Important 
transformations $\phi$ are the measure-preserving ones, i.e., such that 
$\mu(\phi^{-1}(A))=\mu(A)$ for any event $A$, for instance, those following 
from classical Liouville equations \cite{LE}. They obviously fulfill the above 
mentioned condition. When $\phi$ is, furthermore, invertible with measurable 
inverse, the transformed hybrid state can be written as 
$\omega'(x)=\sum_{\alpha} 
L_{\alpha}\omega(\phi^{-1}(x)) L_{\alpha}^\dag$ \cite{Bi}.
\begin{proof}
Assume ${\cal O}_h$ is a continuous operation describing non-interacting 
classical and quantum subsystems. Consider any probability measure of the 
form $\hat w : (A,E) \mapsto \mathrm{tr}(\rho E) \mu(A\cap B) / \mu(B)$, 
where $\rho$ is any density operator and $B$ is any event such that $\mu(B)$ 
is finite and non-vanishing. The above theorem gives 
${\cal O}_h(\hat w)(A,E) = \sum_{\alpha}
\mathrm{tr}(K_{\alpha}(A,B)\rho K_{\alpha}(A,B)^\dag E) / \mu(B)$ 
for any event $A$ and effect $E$. By assumption, these probabilities can also be 
written as ${\cal O}_h(\hat w)(A,E)=p_B(A) \mathrm{tr}(O_q(\rho) E)$, where 
$p_B$ is a classical probability measure depending on $B$ and $O_q$ is a map 
from the set of density operators of $q_h$ to itself which does not depend on 
$A$ and $B$. It follows that $p_B(A) O_q(\rho)=\sum_{\alpha} 
K_{\alpha}(A,B)\rho K_{\alpha}(A,B)^\dag / \mu(B)$, in particular, 
$O_q(\rho)=\sum_{\alpha} L_{\alpha}\rho L_{\alpha}^\dag$, where 
$L_{\alpha}=K_{\alpha}(X,C)/\sqrt{\mu(C)}$ with any event $C$ such that 
$\mu(C)$ is finite and non-vanishing. So, one has 
$\sum_{\alpha} K_{\alpha}(A,B)\rho K_{\alpha}(A,B)^\dag 
= \nu_A(B) \sum_{\alpha} L_{\alpha}\rho L_{\alpha}^\dag$, where 
$\nu_A(B)=p_B(A)\mu(B)$. This equality also holds for a $\mu$-null event $B$ 
with the definition $\nu_A(B)=0$, which also ensures $\nu_A \le \mu$. For 
infinite $\mu(B)$, we set $\nu_A(B)=\infty$. Since $\mathrm{tr}O_q(\rho)=1$ 
for any density operator $\rho$, the operators $L_{\alpha}$ obey 
$\sum_{\alpha} L_{\alpha}^\dag L_{\alpha}=I$.

Let $A_1$ and $A_2$ be any disjoint events and $B=A_1 \cup A_2$. It results 
from eq.\eqref{Kteq} and the above that $\nu_A(B)=\nu_A(A_1)+\nu_A(A_2)$, 
i.e., $\nu_A$ is finitely additive. Consider any infinite sequence of pairwise 
disjoint events $A_n$ and $B=\cup_n A_n$. One has 
$\nu_A(B) -\sum_{n<N} \nu_A(A_n)
=\nu_A(\cup_{n \ge N} A_n) \le \mu(B) -\sum_{n<N} \mu(A_n)$. This last 
difference vanishes as $N$ goes to infinity and so $\nu_A$ is a measure on 
$\cal A$. By construction, $\nu_A$ is absolutely continuous with respect to 
$\mu$. Since $\mu$ is $\sigma$-finite and $\nu_A \le \mu$, $\nu_A$ is 
$\sigma$-finite. Consequently, due to Radon-Nikodym theorem, there is a 
measurable positive function $k_A$ such that $\nu_A(B)=\int_B k_A d\mu$ for 
any $B \in {\cal A}$ \cite{Bi}. Since $\nu_X=\mu$, this equality is satisfied with 
$k_X=1_X$ for $A=X$. We define the map 
$k : {\cal A}\times X \rightarrow \mathbb{R}^+$ by $k(A,x)=k_A(x)$. It 
follows from the above results and eq.\eqref{Kteq} that eq.\eqref{indop} is 
fulfilled by any simple probability measure. 
  
Consider any probability measure $w$ and the map $w'$ on 
${\cal A}\times{\cal E}$ given by the right side of eq.\eqref{indop} with the 
class $[\omega]$ corresponding to $w$. Using the monotone convergence 
theorem and the fact that the linear form $T \mapsto \mathrm{tr}(T E)$ is 
continuous on ${\cal T}_h$, one can write 
$w'(A,E) = \int k(A,x)\mathrm{tr}\big(O'_q(\omega(x) ) E \big) d\mu(x)$ 
where the linear map $O'_q$ is given by 
$O'_q(T)=\sum_{\alpha} L_{\alpha}T L_{\alpha}^\dag$ for any trace-class 
operator $T$. If $T$ is positive, $O'_q(T)$ is positive and 
$\Vert O'_q(T)\Vert = \Vert T\Vert$. Any self-adjoint trace-class operator $T$ 
can be expanded as $T=T_+-T_-$ where $T_+$ and $T_-$ are positive 
trace-class operators such that their product vanishes \cite{BS}. These 
operators fulfill $\Vert T_+\Vert+\Vert T_-\Vert=\Vert T\Vert$ and so 
$\Vert O'_q(T)\Vert \le \Vert T\Vert$. Corollary \ref{Corapp} implies that, for 
any $\epsilon>0$, there is a simple probability measure $\hat w$ such that 
$d(w,\hat w)<\epsilon$. For any event $A$ and effect $E$, one has 
$|{\cal O}_h(w)(A,E)-{\cal O}_h(\hat w)(A,E)|<\epsilon$, since the continuous 
operation ${\cal O}_h$ obeys inequality \eqref{contineq} and 
$\Vert E \Vert_{op} \le 1$ \cite{BS}, see equations \eqref{Gtmu} and \eqref{d}. 
This leads to $|{\cal O}_h(w)(A,E)-w'(A,E)|<\epsilon+\int k(A,x)
\Vert O'_q(\omega(x)-\hat \omega(x)) \Vert  d\mu(x)$, where $[\hat \omega]$ 
is the class corresponding to $\hat w$. It follows from $\nu_A \le \mu$ that 
$\int_B (1-k(A,x)) d\mu(x) \ge 0$ for any event $B$ and so $k(A,x) \le 1$ for 
$\mu$-almost every $x$. This gives, together with the above proved property of 
$O'_q$, $|{\cal O}_h(w)(A,E)-w'(A,E)|<\epsilon+d(w,\hat w)<2\epsilon$ and 
hence ${\cal O}_h(w)=w'$.

Assume now that the map ${\cal O}_h$ is given by eq.\eqref{indop}. It 
transforms simple probability measures according to eq.\eqref{Kteq} with the 
Kraus maps 
$K_{\alpha} : (A,B) \mapsto [\int_B k(A,x) d\mu (x)]^{1/2} L_{\alpha}$. The 
above theorem implies that there is a continuous operation ${\cal O}'_h$ that 
coincides with ${\cal O}_h$ for simple probability measures. Since $\mu$ is 
$\sigma$-finite and, for any event  $B$ such that $\mu(B)$ is finite and 
non-vanishing, $A \mapsto \int_B k(A,x) d\mu (x)/\mu(B)$ is a probability 
measure on ${\cal A}$, one has $k(A,x) \le 1$ for $\mu$-almost every $x$. 
Then, arguing as above, it can be shown that ${\cal O}_h={\cal O}'_h$ and so 
${\cal O}_h$ is a continuous operation. Consider 
$w : (A,E) \mapsto p(A)\mathrm{tr}(\rho E)$ where $p$ is the probability 
measure of $c_h$ and $\rho$ is the density operator of $q_h$. Corollary 
\ref{Cor1} gives $[\omega]=[f \rho]$ where $f$ is a probability density function 
corresponding to $p$. Using eq.\eqref{indop}, one finds 
${\cal O}_h(w)(A,E)=p'(A)\mathrm{tr}(\rho' E)$ with the classical probability 
measure $p': A \mapsto \int k(A,x) f(x) d\mu(x)$ and the density operator 
$\rho'=\sum_{\alpha} L_{\alpha}\rho L_{\alpha}^\dag$ and so ${\cal O}_h$ 
describes non-interacting subsystems. 
\end{proof}
\subsection{Quantum subsystem with finite-dimensional Hilbert space}
For a finite-dimensional Hilbert space $\cal H$, the set of operators on $\cal H$ 
has finite bases and the following result can be shown.
\begin{Prop}
Let $\cal H$ be finite-dimensional and ${\cal O}_h$ be a map from ${\cal P}_h$ 
to itself.

The map ${\cal O}_h$ is an operation if and only if it is continuous and, for any 
basis $\{ L_{\alpha} \}_{\alpha}$ of the set of operators on ${\cal H}$, there 
are maps $k_{\alpha,\beta} : {\cal A}\times X \rightarrow \mathbb{C}$ 
such that, for any probability measure $w$, 
\begin{equation}
{\cal O}_h(w)(A,E) = \sum_{\alpha,\beta} \int k_{\alpha,\beta}(A,x)
\mathrm{tr}\big(L_{\alpha}\omega(x) L_{\beta}^\dag E \big)  d\mu(x), 
\label{fHs}
\end{equation}
where $A$ and $E$ are any event and effect, respectively, and $\omega$ is 
given by Corollary \ref{Cor1}. 
\end{Prop}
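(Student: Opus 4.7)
The overall plan is to reduce the proposition to the generalized Kraus theorem via a basis expansion, with Radon--Nikodym turning set functions into densities. In the forward direction, I would first invoke Lemma~\ref{convlin} and Lemma~\ref{cont} to conclude that any hybrid operation is automatically continuous when $\cal H$ is finite-dimensional, and then apply the generalized Kraus theorem to obtain maps $K_\alpha:{\cal A}^2\to B({\cal H})$ implementing eq.\eqref{Kteq} on simple probability measures. Since the basis $\{L_\gamma\}$ of operators on $\cal H$ is finite, each $K_\alpha(A,B)$ admits a unique expansion $K_\alpha(A,B)=\sum_\gamma c_{\alpha,\gamma}(A,B)L_\gamma$. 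Substituting into eq.\eqref{Kteq} yields
\begin{equation*}
{\cal O}_h(\hat w)(A,E)=\sum_{\gamma,\delta,n}\kappa_{\gamma,\delta}(A,A_n)\,\mathrm{tr}\bigl(L_\gamma\omega_n L_\delta^\dag E\bigr),
\end{equation*}
with $\kappa_{\gamma,\delta}(A,B):=\sum_\alpha c_{\alpha,\gamma}(A,B)\overline{c_{\alpha,\delta}(A,B)}$, and the matrix $[\kappa_{\gamma,\delta}(A,B)]$ is then automatically Hermitian positive semidefinite.

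The next step is to promote each $\kappa_{\gamma,\delta}(A,\cdot)$ to a complex measure on $\cal A$ that is absolutely continuous with respect to $\mu$. Finite additivity follows from comparing two distinct partition representations of the same simple measure and exploiting linear independence of the operator products $L_\gamma(\cdot)L_\delta^\dag$; absolute continuity follows because enlarging the partition by a $\mu$-null set with an arbitrary $\omega_n$ leaves $\hat w$ unchanged, forcing $\kappa_{\gamma,\delta}(A,B)=0$ whenever $\mu(B)=0$. Countable additivity should come from the trace-norm convergence of $\sum_n K_\alpha(A,A_n)\omega K_\alpha(A,A_n)^\dag$, paralleling the argument used in the proof of the Kraus theorem. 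Radon--Nikodym, valid because $\mu$ is $\sigma$-finite, then produces the densities $k_{\gamma,\delta}(A,\cdot)$; eq.\eqref{fHs} holds on simple measures by a direct computation, and the extension to all of ${\cal P}_h$ follows from continuity of both sides via Corollary~\ref{Corapp}.

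For the converse, given a continuous ${\cal O}_h$ satisfying eq.\eqref{fHs}, I would form the coefficients $\lambda_{\alpha,\beta}(A,B):=\int_B k_{\alpha,\beta}(A,x)d\mu(x)$ and the associated operator map $\Phi_{A,B}(M):=\sum_{\alpha,\beta}\lambda_{\alpha,\beta}(A,B)L_\alpha M L_\beta^\dag$, so that on simple measures ${\cal O}_h(\hat w)(A,E)=\sum_n\mathrm{tr}(\Phi_{A,A_n}(\omega_n)E)$. To invoke the converse part of the generalized Kraus theorem, each $\Phi_{A,B}$ must be put in Kraus form $\sum_\gamma K_\gamma(A,B) M K_\gamma(A,B)^\dag$; by Choi's criterion this is equivalent to $[\lambda_{\alpha,\beta}(A,B)]$ being positive semidefinite, and diagonalising then supplies the required $K_\gamma(A,B)$. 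Once the restriction of ${\cal O}_h$ to simple measures is identified with an expression of the form eq.\eqref{Kteq}, the converse half of the Kraus theorem, together with continuity and Corollary~\ref{Corapp}, identifies ${\cal O}_h$ itself as a hybrid operation.

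The main obstacle is the positive-semidefiniteness of $[\lambda_{\alpha,\beta}(A,B)]$ in the reverse direction: mere positivity of ${\cal O}_h$ on ${\cal P}_h$ would give only that $\Phi_{A,B}$ is a positive map, which is strictly weaker than complete positivity (transposition illustrates the gap). Extracting the stronger PSD condition from the hypotheses will likely require exploiting that eq.\eqref{fHs} is asserted for \emph{every} basis simultaneously, and tacitly that the natural tensor-extended formula must again define a probability measure on bipartite systems $hq$. In the forward direction, the most delicate technical step is establishing countable additivity and absolute continuity of $\kappa_{\gamma,\delta}(A,\cdot)$, which must be settled carefully before Radon--Nikodym can be applied.
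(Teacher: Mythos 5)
Your forward direction is essentially the paper's argument: continuity from Lemmas \ref{convlin} and \ref{cont} in finite dimension, the generalized Kraus theorem on simple measures, expansion of each $K_\alpha(A,B)$ in the finite operator basis, a Gram-type coefficient matrix, and Radon--Nikodym applied to the resulting set functions. The paper reaches countable additivity and absolute continuity by a polarization trick: it writes $\nu_{\alpha,\beta,A}=\sum_{n=1}^4 i^n\nu_{\alpha,\beta,A,n}$ with each $\nu_{\alpha,\beta,A,n}$ positive, dominated by $3\mu$, and identified (after normalization) with actual transformed probabilities, so that the measure-theoretic arguments of Proposition \ref{Prop1} apply wholesale. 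Your route via linear independence of the maps $M\mapsto L_\gamma M L_\delta^\dag$ and trace-norm convergence is workable, but to pass from simple measures to all of ${\cal P}_h$ by continuity you still need a uniform bound on the densities, such as the paper's $|k_{\alpha,\beta}(A,x)|\le 12$; otherwise the right side of \eqref{fHs} is not controlled by $d(w,\hat w)$. These are cosmetic differences.

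The converse is where your proposal is most valuable, because the obstacle you flag --- positive semidefiniteness of $[\lambda_{\alpha,\beta}(A,B)]$ --- is precisely the step the paper does not actually establish. The paper symmetrizes $s_{\alpha,\beta}$ into a self-adjoint matrix $S$ and then asserts a factorization $S_{\alpha,\beta}=\sum_\gamma\lambda_{\alpha,\gamma}\lambda^*_{\beta,\gamma}$; such a Gram factorization exists only if $S\ge 0$, and self-adjointness alone does not supply it. As you observe, the hypothesis that ${\cal O}_h$ maps ${\cal P}_h$ to itself yields only positivity of $M\mapsto\sum_{\alpha,\beta}S_{\alpha,\beta}L_\alpha M L_\beta^\dag$: with $X$ a singleton, transposition is of the form \eqref{fHs}, continuous, and state-preserving, yet not completely positive. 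Asserting \eqref{fHs} for every basis does not help, since a change of basis merely conjugates the coefficient matrix. So either the statement must be read with an implicit positivity condition on $[k_{\alpha,\beta}(A,\cdot)]$, or the complete-positivity input has to be extracted from an additional hypothesis; your suggestion of invoking the tensor extension to $hq$ is the natural repair, but neither your sketch nor the paper carries it out. In short, your plan coincides with the paper's, and the one step you honestly leave open is handled in the paper only by an unjustified assertion.
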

Possible examples for the maps $k_{\alpha,\beta}$ are transition kernels similar 
to those considered in subsection \ref{Icqs}. An alternative derivation of 
eq.\eqref{fHs} with 
$k_{\alpha,\beta}(A,x)=\int_A h_{\alpha,\beta}(x',x) d\mu(x')$, where 
$h_{\alpha,\beta}$ are $\mu \otimes \mu$-integrable maps from $X\times X$ 
to $\mathbb{R}^+$, can be found in Ref.\cite{OSSW}. As a specific 
example, assume that $\cal H$ is two-dimensional with orthonormal basis 
$\{ | + \rangle , | - \rangle \}$ and consider the operation that changes any 
hybrid state $\omega$ into 
$x \mapsto  \Pi_+\int h(x,x') f_+(x') d\mu(x')+\Pi_- f_- (\phi^{-1}(x))$, 
where $\Pi_{\pm}=| \pm \rangle \langle \pm |$, 
$f_{\pm}:x\mapsto \langle \pm | \omega(x) | \pm \rangle$, $h$ is a 
$\mu \otimes \mu$-integrable map from $X\times X$ to $\mathbb{R}^+$ 
such that $\int h(x,x') d\mu(x)=1$, and $\phi$ is a measure-preserving map 
from $X$ to itself with measurable inverse. In this case, the classical probability 
measures $p_{\pm} : A \mapsto w(A,\Pi_{\pm})/w(X,\Pi_{\pm})$ are 
transformed independently of each other. Moreover, the evolution of $p_+$ is 
stochastic, whereas that of $p_-$ is deterministic.
\begin{proof}
Assume ${\cal O}_h$ is an operation, with Kraus maps $K_\gamma$ and 
consider any orthonormal basis $\{ | k \rangle \}_k$ of ${\cal H}$. A basis of 
the set of operators on ${\cal H}$ is $\{ L_{\alpha} \}_{\alpha}$ where 
$\alpha=(k,l)$ and $L_{\alpha}=|k\rangle \langle l|$. Expanding the operators 
$K_{\gamma}(A,B)$, where $A$ and $B$ are any events such that $\mu(B)$ is 
finite, in terms of the basis operators $L_\alpha$ leads to 
$ \sum_{\gamma} K_{\gamma}(A,B) M K_{\gamma}(A,B)^\dag
= \sum_{\alpha,\beta}\nu_{\alpha,\beta,A}(B) L_{\alpha}M L_{\beta}^\dag$ 
for any operator $M$. The maps $\nu_{\alpha,\beta,A}$ are given by 
$\nu_{\alpha,\beta,A}(B)=\sum_{\gamma} 
\langle k| K_{\gamma}(A,B) |l\rangle \langle l'| 
K_{\gamma}(A,B)^\dag |k'\rangle$ where $\alpha=(k,l)$ and $\beta=(k',l')$.
 Using $| l \rangle \langle l' |=|\psi \rangle \langle \psi| 
+i|\phi \rangle \langle \phi| -(1+i)(|l \rangle\langle l| +|l' \rangle\langle l'|)/2$,
where $|\psi \rangle=(| l \rangle+| l' \rangle)/\sqrt{2}$ and 
$|\phi \rangle=(| l \rangle+i| l' \rangle)/\sqrt{2}$, and a similar expression for 
$|k'\rangle \langle k|$, one can write 
$\nu_{\alpha,\beta,A}=\sum_{n=1}^4 i^n \nu_{\alpha,\beta,A,n}$ 
where the maps $\nu_{\alpha,\beta,A,n}$ are positive and dominated by 
$3\mu$. They vanish for a $\mu$-null event $B$ and, for finite and 
nonvanishing $\mu(B)$, $\nu_{\alpha,\beta,A,n}(B)/\mu(B)$ is a sum of 
transformed probabilities ${\cal O}_h(w)(A,E)$, where 
$w : (A',E') \mapsto \mathrm{tr}(\rho E') \mu(A'\cap B) / \mu(B)$, with 
appropriate density operator $\rho$ and event $E$. We set 
$\nu_{\alpha,\beta,A,n}(B)=\infty$ for infinite $\mu(B)$. 

Following the same arguments as in the proof of Proposition \ref{Prop1}, it can 
be shown that $\nu_{\alpha,\beta,A,n}$ is a measure and that there is a 
measurable positive function $k_{\alpha,\beta,A,n}$ such that 
$\nu_{\alpha,\beta,A,n}(B)=\int_B k_{\alpha,\beta,A,n} d\mu$ for any 
$B \in {\cal A}$. We define the maps $k_{\alpha,\beta} : (A,x) \mapsto 
\sum_{n=1}^4 i^n k_{\alpha,\beta,A,n}(x)$. It follows from 
$\nu_{\alpha,\beta,A,n} \le 3\mu$ that $|k_{\alpha,\beta}(A,x)| \le 12$ for 
$\mu$-almost every $x$. The above results and eq.\eqref{Kteq} imply that 
eq.\eqref{fHs} is fulfilled by any simple probability measure. Consider any 
probability measure $w$ and the map $w'$ on ${\cal A}\times{\cal E}$ given 
by the right side of eq.\eqref{fHs} with the class $[\omega]$ corresponding to 
$w$. Corollary \ref{Corapp} implies that, for any $\epsilon>0$, there is a simple 
probability measure $\hat w$ such that $d(w,\hat w)<\epsilon$. For any event 
$A$ and effect $E$, one has 
$|{\cal O}_h(w)(A,E)-{\cal O}_h(\hat w)(A,E)|<\epsilon$, since the operation 
${\cal O}_h$ is continuous and obeys inequality \eqref{contineq}, see Lemma 
\ref{cont}, and $\Vert E \Vert_{op} \le 1$ \cite{BS}. This leads to 
$|{\cal O}_h(w)(A,E)-w'(A,E)|<\epsilon(1+12 \sum_{\alpha,\beta} 
\Vert L_{\beta}^\dag E L_{\alpha}\Vert_{op})$ and hence 
${\cal O}_h(w)=w'$. Let $\{ L'_{\alpha} \}_{\alpha}$ be any basis of the set 
of operators on ${\cal H}$. Using the expansion 
$L_{\alpha}=\sum_{\beta} \ell_{\alpha,\beta} L'_{\beta}$, equation 
\eqref{fHs} can be rewritten with $L_{\alpha}$ replaced by $L'_{\alpha}$ and 
$k_{\alpha,\beta}$ by $k'_{\alpha,\beta}=\sum_{\gamma,\delta}  
\ell_{\gamma,\alpha} \ell_{\delta,\beta}^* k_{\gamma,\delta}$.

Assume now that the transformation ${\cal O}_h$ is continuous and given by 
eq.\eqref{fHs} and consider $\hat \omega=\rho 1_B/\mu(B)$ where $\rho$ is 
any density operator and $B$ is any event with finite and nonvanishing 
$\mu(B)$. The probability measure $\hat w$ corresponding to $[\hat \omega]$ 
is transformed into ${\cal O}_h(\hat w) : (A,E) \mapsto \sum_{\alpha,\beta}
s_{\alpha,\beta}\mathrm{tr}
\big(L_{\alpha}\rho L_{\beta}^\dag E \big)/\mu(B)$ where 
$s_{\alpha,\beta}=\int_B k_{\alpha,\beta}(A,x) d\mu(x)$. As the above sum is 
real, it can be rewritten with $s_{\alpha,\beta}$ replaced by $S_{\alpha,\beta}
=(s^{\phantom{*}}_{\alpha,\beta}+s_{\beta,\alpha}^*)/2$, which are the 
elements of a self-adjoint matrix $S$. Thus, there are numbers 
$\lambda_{\alpha,\gamma}$ such that $S_{\alpha,\beta}
=\sum_{\gamma} \lambda_{\alpha,\gamma}\lambda^*_{\beta,\gamma}$ 
\cite{BS} and so ${\cal O}_h(\hat w)(A,E) = \sum_{\gamma}\mathrm{tr}
\big(K_{\gamma}(A,B)\rho K_{\gamma}(A,B)^\dag E \big)/\mu(B)$ for any 
event $A$ and effect $E$, where 
$K_{\gamma}(A,B)=\sum_{\alpha} \lambda_{\alpha,\gamma} L_{\alpha}$. 
It follows from eq.\eqref{fHs} that ${\cal O}_h$ is convex-linear. So, it 
transforms any simple probability measure according to eq.\eqref{Kteq}. Since 
it is moreover continous, the above theorem ensures that it is an operation.
\end{proof}
\subsection{Discrete classical subsystem and quantification of the correlations 
between the two subsystems}
As mentioned above, for a discrete classical subsystem $c_h$, it can always be 
assumed that the sample space $X$ is a subset of $\mathbb{N}$, $\cal A$ is 
the set of all subsets of $X$, and $\mu$ is the counting measure. In this case, 
a hybrid probability measure $w$ determines a unique state $\omega$, given 
by Corollary \ref{Cor1}, and the following proposition can be derived from the 
above theorem.
\begin{Prop}
Let $c_h$ be discrete. 

A hybrid operation ${\cal O}_h$ transforms any state $\omega$ into the state
\begin{equation}
x \mapsto \sum_{x' \in X} \sum_{\alpha} 
L_{\alpha}(x,x') \omega(x') L_{\alpha}(x,x')^\dag ,
\label{dss}
\end{equation}
where the sequence of maps $L_{\alpha}$ from $X^2$ to the set of bounded
operators on ${\cal H}$ satisfies, for any $x' \in X$, 
$\sum_{x \in X,\alpha} L_{\alpha}(x,x')^\dag L_{\alpha}(x,x')=I$.

For any such sequence, eq.\eqref{dss} defines a hybrid state transformation 
that corresponds to a hybrid operation.
\end{Prop}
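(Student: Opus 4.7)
The plan is to specialize the generalized Kraus theorem to the discrete setting, where $X$ is a subset of $\mathbb{N}$, ${\cal A}$ is the power set of $X$, $\mu$ is the counting measure, and by Lemma \ref{convlin} together with Lemma \ref{cont} every hybrid operation is automatically continuous. In this setting, simple probability measures can always be chosen with events $A_n$ that are singletons $\{x_n\}$, and, as noted after Corollary \ref{Corapp}, any state $\omega$ is the limit, in the metric $d$, of the truncated simple states $\hat \omega_y = \omega\, 1_{\{x' \le y\}}/t_y$. These two facts are the only features of the discrete case that I will use beyond the theorems already at hand.

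For the direct implication, I would first invoke the generalized Kraus theorem to obtain maps $K_\alpha:{\cal A}^2\to{\cal B}({\cal H})$ governing ${\cal O}_h$ on simple probability measures, and set $L_\alpha(x,x')=K_\alpha(\{x\},\{x'\})$. Inserting $A=\{x\}$ and $A_n=\{x_n\}$ into eq.\eqref{Kteq} shows directly that any simple state $\sum_n \omega_n 1_{\{x_n\}}$ is transformed, at the point $x$, into $\sum_{n,\alpha} L_\alpha(x,x_n)\,\omega_n\, L_\alpha(x,x_n)^\dag$, which is exactly the right side of eq.\eqref{dss}. To extend this to arbitrary $\omega$ I would use continuity of ${\cal O}_h$: apply the formula to the truncation $\hat \omega_y$, let $y\to\infty$, and use the bound $\Vert L_\alpha(x,x')\,\tau\,L_\alpha(x,x')^\dag\Vert\le\mathrm{tr}(L_\alpha(x,x')^\dag L_\alpha(x,x')\,\tau)$ for positive trace-class $\tau$ together with the normalization, to obtain absolute convergence of the double sum in trace norm and commute the limit inside. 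The identity $\sum_{x,\alpha} L_\alpha(x,x')^\dag L_\alpha(x,x')=I$ is then forced by applying the resulting formula to the one-point state $\omega=\rho\,1_{\{x'\}}$ with arbitrary density operator $\rho$: the requirement that $\sum_x \mathrm{tr}\,\omega'(x)=1$ yields $\mathrm{tr}\bigl(\rho\sum_{x,\alpha}L_\alpha(x,x')^\dag L_\alpha(x,x')\bigr)=1$ for all $\rho$, and hence the operator identity by the duality argument used in the proof of the generalized Kraus theorem.

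For the converse implication, I would start from a sequence of maps $L_\alpha$ satisfying the stated normalization and first check directly that eq.\eqref{dss} defines a state transformation: positivity of $\omega'(x)$ is immediate, trace-class convergence of the sum defining $\omega'(x)$ follows from the normalization inequality $\sum_{x,\alpha}L_\alpha(x,x')^\dag L_\alpha(x,x')\le I$ applied at fixed $x'$, and $\sum_x\mathrm{tr}\,\omega'(x)=\sum_{x'}\mathrm{tr}\,\omega(x')=1$ by Fubini and the full normalization. To identify this transformation with a hybrid operation I would re-index the Kraus family by triples $\beta=(\alpha,x,x')$ and set $\tilde K_\beta(A,B)=L_\alpha(x,x')\,1_A(x)\,1_B(x')$; a short calculation shows that the resulting map of the form of eq.\eqref{Kteq} agrees with the transformation induced by eq.\eqref{dss} on every simple probability measure. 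The second part of the generalized Kraus theorem then furnishes a continuous hybrid operation ${\cal O}_h$ coinciding with this map on simple probability measures, and continuity of both maps with respect to $d$ (the prescription in eq.\eqref{dss} being continuous by the normalization estimate together with Corollary \ref{Corapp}) forces them to coincide on all of ${\cal P}_h$.

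The main obstacle is the extension step in the direct implication: one needs the double sum in eq.\eqref{dss} to converge in trace norm for a \emph{general} state $\omega$ and the resulting map $\omega\mapsto\omega'$ to be continuous in $d$, so that the simple-state Kraus representation can be transported to the whole state space by density. The normalization $\sum_{x,\alpha} L_\alpha(x,x')^\dag L_\alpha(x,x')=I$ is precisely the ingredient that controls $\sum_x\mathrm{tr}\,\omega'(x)$ and legitimizes the interchange of summations via Fubini, and the rest of the argument is then a routine density-and-continuity extension.
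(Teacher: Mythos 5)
Your proposal is correct and follows essentially the same route as the paper: restrict the generalized Kraus theorem to singleton events to get $L_\alpha(x,x')=K_\alpha(\{x\},\{x'\})$, extend from truncated simple states to general states (the paper does this via the convex split $\omega=t_y\hat\omega_y+(1-t_y)\tilde\omega_y$ and Lemma \ref{convlin} rather than metric continuity, but in the discrete case these are interchangeable), read off the normalization from the one-point states $\rho\,1_{\{x'\}}$, and for the converse re-index the Kraus family as $K_{x,x',\alpha}(A,B)=1_A(x)1_B(x')L_\alpha(x,x')$ and invoke the second half of the theorem together with the contraction estimate and density of simple states. The only cosmetic point is that you should note that the normalization bound used to justify trace-norm convergence for general $\omega$ is already available from the simple one-point states before the extension step, so no circularity arises.
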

A completely quantum description is possible here \cite{D}. Consider a Hilbert 
space ${\cal H}'$ with dimension equal to the cardinality of $X$ and denote as 
$\{ | x \rangle \}_{x \in X}$ an orthonormal basis of ${\cal H}'$. In the sums 
below, $x$, or $x'$, runs over $X$. The operator 
$\hat \omega = \sum_{x} \omega(x) \otimes |x \rangle \langle x |$ on 
${\cal H} \otimes {\cal H}'$ is positive with unit trace, i.e., it is a quantum 
state. Moreover, the quantum operation with Kraus operators 
$K_{\alpha,x,x'}= L_{\alpha}(x,x') \otimes |x \rangle \langle x' |$, with the 
indices $x$ and $x'$ running over $X$, changes $\hat \omega$ into the density 
operator $\sum_{x} \omega'(x) \otimes |x \rangle \langle x |$ where $\omega'$ 
is the hybrid state given by eq.\eqref{dss}.

Let us discuss the quantification of the correlations between the classical and 
quantum subsystems. A measure of these correlations cannot increase under 
any hybrid operation describing non-interacting subsystems \cite{PRA1}. Such 
an operation changes any hybrid state $\omega$ into 
$x \mapsto \sum_{x'} h(x,x') \sum_{\alpha} 
L_{\alpha} \omega(x') L_{\alpha}^\dag$ where the operators $L_{\alpha}$ 
and non-negative map $h$ fulfill 
$\sum_{\alpha} L_{\alpha}^\dag L_{\alpha}=I$ and $\sum_{x} h(x,x')=1$ 
for any $x' \in X$, see eq.\eqref{indop}. Within the quantum description 
mentioned above, the corresponding quantum operation is the composition of 
the local operations with Kraus operators $L_{\alpha} \otimes I'$, where $I'$ 
is the identity operator of ${\cal H}'$, and 
$I \otimes \sqrt{h(x,x')}  |x \rangle \langle x' |$, respectively. As a measure of 
correlations between $c_h$ and $q_h$, we propose the classical-quantum 
mutual information
$$I(\omega)= \sum_{x} \mathrm{tr} (\omega_x \log \omega_x) 
- \sum_{x} f_x \log f_x - \mathrm{tr} (\rho \log \rho) , $$  
where the argument $x$ is written as an index, 
$f : x \mapsto \mathrm{tr} \omega_x$ is the probability density function of 
$c_h$ and $\rho=\sum_{x} \omega_x$ is the state of $q_h$. It is equal to 
the Holevo quantity of the ensemble $\{ (f_x, \omega_x/f_x) \}_{x|f_x>0}$ 
\cite{Ho} and to the quantum mutual information of the effective quantum 
state $\hat \omega$ discussed above and so $I(\omega)
=S(\hat \omega \Vert \rho \otimes \sum_{x}  f_x |x \rangle \langle x |)$ 
where $S$ is the quantum relative entropy. As $S$ does not increase under 
any quantum operation applied to both its arguments \cite{L}, the measure 
$I$ meets the above requirement. It follows from Araki-Lieb inequality \cite{AL} 
that $I(\omega)$ cannot exceed twice the von Neumann entropy of $\rho$. 
Consequently, when $q_h$ is a bipartite system with finite-dimensional Hilbert
space, it can be shown that there is a tradeoff between the quantum
entanglement of its two parts and the correlations between $q_h$ and $c_h$ 
\cite{to1,to2,to3,to4}.
\begin{proof}
Consider any operation ${\cal O}_h$, any non-simple state $\omega$, and the 
sequence of simple states $\hat \omega_y$ given, for large enough $y$, by 
$\hat \omega_y(x)= \omega(x)/t_y$ if $x \le y$ and vanishes otherwise, where 
$t_y=\sum_{x \le y} \mathrm{tr} \omega(x)$. One has 
$\omega=t_y \hat \omega_y + (1-t_y) \tilde \omega_y$ where the map 
$\tilde \omega_y$, given by $\tilde \omega_y(x)= \omega(x)/(1-t_y)$ if 
$x > y$ and vanishes otherwise, is a state. Denote $\omega'$, 
$\hat \omega'_y$ and $\tilde \omega'_y$ the images by ${\cal O}_h$ of 
$\omega$, $\hat \omega_y$, and $\tilde \omega_y$, respectively. It follows 
from eq.\eqref{Kteq} that $\hat \omega'_y(x)=\sum_{x' \le y,\alpha}
L_{\alpha}(x,x') \omega(x') L_{\alpha}(x,x')^\dag/t_y$ where 
$L_{\alpha}(x,x')=K_{\alpha}(\{x\},\{x'\})$ and the sum converges in 
trace norm. The convex-linearity of ${\cal O}_h$ implies 
$\omega'-t_y \hat \omega'_y = (1-t_y) \tilde \omega'_y$. We denote a sum 
over the whole set $X$ by $\sum_{x}$. Since 
$\sum_x \Vert \tilde \omega'_y(x) \Vert=1$ and 
$\lim_{y \rightarrow \infty} t_y=1$, 
$\Vert \omega'(x)-t_y \hat \omega'_y (x) \Vert$ vanishes as 
$y \rightarrow \infty$ for any $x$. In other words, $\omega'(x)$ is given by 
eq.\eqref{dss} where the sums converge in trace norm. For a simple state 
$\omega$, the derivation is straightforward. Consider the state 
$\omega=\rho 1_{\{x' \}}$ with any density operator $\rho$ and $x' \in X$. 
Since the transformed state $\omega'$ fulfills 
$\sum_{x} \mathrm{tr} \omega'(x)=1$, one has $\sum_{x,\alpha} 
L_{\alpha}(x,x')^\dag L_{\alpha}(x,x')=I$, where the sum converges weakly.

Consider now any sequence of maps $L_{\alpha}$ from $X^2$ to the set of 
bounded operators on ${\cal H}$ obeying, for any $x' \in X$, 
$\sum_{x,\alpha} L_{\alpha}(x,x')^\dag L_{\alpha}(x,x')=I$. Any state 
$\omega$ can be written as $\omega=f\eta$ where $f$ is a nonnegative 
function on $X$ such that $\sum_x f(x)=1$ and $\eta(x)$ is a density operator 
for any $x$. Equation \eqref{dss} with $\omega$ replaced by 
$\eta(x)1_{\{ x \}}$ gives a state that we denote as $O_h(\eta(x)1_{\{ x \}})$. 
We define, for any $x$ and $y$ of $X$, $s_y=\sum_{x' \le y} f(x')$ and 
$\theta_{x,y}=\sum_{x' \le y} \sum_{\alpha} 
L_{\alpha}(x,x') \omega(x') L_{\alpha}(x,x')^\dag$ which belongs to 
${\cal T}_h$. This last sum can be rewritten as 
$\theta_{x,y}=\sum_{x' \le y} f(x') O_h(\eta(x')1_{\{ x' \}}) (x)$ and so 
$\Vert \theta_{x,z} - \theta_{x,y} \Vert \le |s_z-s_y|$. Thus, $(\theta_{x,y})_y$ 
is a Cauchy sequence, as $\lim_{y \rightarrow \infty} s_y=1$. It converges in 
${\cal T}_h$ which is a Banach space \cite{BS}. As its limit $\theta_{x,\infty}$ 
is positive and $\sum_x \mathrm{tr} \theta_{x,\infty}=1$, eq.\eqref{dss} 
defines a state transformation $O_h$. Since 
$\sum_x \Vert O_h(\omega_2)(x)-O_h(\omega_1)(x) \Vert \le 
\sum_x \Vert \omega_2(x)-\omega_1(x) \Vert $ for any states $\omega_1$ 
and $\omega_2$ and the simple probability measures are transformed 
according to eq.\eqref{Kteq} with the Kraus maps $K_{x,x',\alpha}$ given by 
$K_{x,x',\alpha}(A,B)=1_A(x) 1_B(x') L_{\alpha}(x,x')$ where $x$ and $x'$ 
run over $X$, $O_h$ corresponds to an operation.
\end{proof}
\subsection{Local operations and classical communication}
We are here concerned with the entanglement between two quantum systems, 
with Hilbert spaces ${\cal H}_1$ and ${\cal H}_2$. This entanglement vanishes 
if and only if their common state $\rho$ is the limit, in trace norm, of a 
sequence of states 
\begin{equation}
\rho_n = \sum_r p_{n,r} \rho_{1,n,r} \otimes  \rho_{2,n,r} , 
\label{ss}
\end{equation} 
where the sum is finite, the probabilities $p_{n,r}$ sum to unity, and 
$\rho_{s,n,r}$ denotes a density operator on ${\cal H}_s$ \cite{W}. Such a 
quantum state is said to be separable. A quantum operation that involves only 
local measurements, i.e., performed on one of the systems, and classical 
communication between the two systems, cannot transform a separable state 
into an entangled one. These transformations are named LOCC operations 
\cite{HHHH,LOCC}. 

They can be described within the present framework as follows. Consider a 
hybrid system with ${\cal H}={\cal H}_1\otimes {\cal H}_2$ and 
$X=\prod_{r=1}^n \{ 0,1, \ldots, y_r \}$, where $y_r$ is a positive integer, 
and denote $x=(x_1,\ldots,x_{n})$. The component $x_r$ is used to record 
the outcome of the $r$-th local measurement. The measurements outcomes 
are labeled by positive integers and $x_r$ is zero before the $r$-th 
measurement. A LOCC operation $\Lambda$ with $n$ rounds of 
communication can be achieved by applying to the hybrid state 
$x \mapsto \rho \prod_{r}\delta_{x_r,0}$ the following $n$ hybrid operations. 
The $r$-th operation transforms any hybrid state $\omega$ into the state 
$x \mapsto \sum_{x' \in X} L^{r}(x,x') \omega(x') L^{r}(x,x')^\dag$ where
$$L^{r}(x,x') =\prod_{s \neq r}\delta_{x_s,x'_s} 
V^{r}_{x_r} (x_1, \ldots, x_{r-1}) ,$$ with the last term replaced by 
$V^{1}_{x_1}$ when $r=1$. 

The $r$-th measurement is performed on the first (second) quantum subsystem 
when $r$ is odd (even) and so $V^{r}_{x_r}=V^{r,1}_{x_r} \otimes I_2$ 
($I_1 \otimes V^{r,2}_{x_r}$) where $I_s$ is the identity operator of 
${\cal H}_s$. The maps $V^{r}_{x_r}$ obey $V^{r}_0=0$ and 
$\sum_{x_r=0}^{y_r} (V^{r}_{x_r})^{\dag}V^{r}_{x_r}=I_1 \otimes I_2$. 
After the $r$-th operation, the hybrid state is 
$\omega_r : x \mapsto \prod_{s>r}\delta_{x_s,0} 
W_{x_1, \ldots, x_r}^{\phantom{\dag}} \rho W_{x_1, \ldots, x_r}^{\dag}$ 
where $$W_{x_1, \ldots, x_r} =\prod_{s=2}^r
V^{s}_{x_s} (x_1, \ldots, x_{s-1}) V^{1}_{x_1} . $$ In other words, $r$ 
quantum measurements have been performed, each depending on the 
outcomes of the preceding ones. The quantum state 
$\eta_r(x)=\omega_r(x)/\mathrm{tr}\omega_r(x)$ for 
$x=(x_1, \ldots, x_r,0,\ldots,0)$ is the density operator corresponding to the 
sequence of outcomes $x_1, \ldots, x_r$. The state of the quantum 
subsystems after the $r$ hybrid operations is $\Lambda(\rho)=\sum_{x \in X} 
W_{x}^{\phantom{\dag}} \rho W_{x}^{\dag}$ where $W_x$, given above, 
is a tensor product operator \cite{LOCC,PRA2}.

For finite-dimensional Hilbert spaces ${\cal H}_s$, the above discussed LOCC 
operations, involving discrete and finite classical information, allow to transform 
any quantum state, possibly entangled, into any separable one. From the point 
of view of quantum resource theories, this means that the minimal set of free 
states for these operations is the set of separable states \cite{QRT}. This can 
be seen as follows. First, observe that, in this case, any separable state can be 
written under the form of eq.\eqref{ss} \cite{HHHH}. Any quantum state can 
be transformed into 
$\Pi=|1\rangle_1 {_1\langle} 1| \otimes |1\rangle_2 {_2\langle} 1|$ using 
the quantum operation with Kraus operators 
$|1\rangle_1 {_1\langle} k| \otimes |1\rangle_2 {_2\langle} k'|$, where 
$\{ |k \rangle_s \}_k$ is an orthonormal basis of ${\cal H}_s$, which is the 
composition of two local operations. Then, this pure state can be turned into 
the separable state $\rho_n$ given by eq.\eqref{ss} using the LOCC operation 
with Kraus operators $\sqrt{p_{n,r}} L_{1,n,r,\alpha} \otimes L_{2,n,r,\beta}$ 
where $L_{s,n,r,\alpha}$ denotes those of a local operation changing 
$|1\rangle_s {_s\langle} 1|$ into $\rho_{s,n,r}$. For infinite-dimensional 
Hilbert spaces, on the other hand, there are separable states that cannot be 
cast into the form of eq.\eqref{ss}. LOCC operations cannot transform $\Pi$, 
for instance, into such states. 

The hybrid operations discussed above are special cases of those considered in 
Corollary \ref{CorO} that describe interacting classical and quantum systems in 
the presence of another quantum system which has no interaction with them 
and no intrinsic evolution. More precisely, under such transformations, 
any state 
$x \mapsto \omega_1(x) \otimes \rho_2$ ($\rho_1 \otimes \omega_2(x)$) 
where $\omega_s$ denotes a state of the hybrid system consisting of the 
classical subsystem and the first (second) quantum subsystem and $\rho_s$ 
denotes a density operator on ${\cal H}_s$, is transformed into 
$x \mapsto \omega'_1(x) \otimes \rho_2$ ($\rho_1 \otimes \omega'_2(x)$) 
where $\omega'_s$ is the image of $\omega_s$ by a hybrid operation. 
Operations of this kind, with any classical subsystem, discrete or not, should 
not generate quantum entanglement. 

We show below that applying any finite sequence of continuous such 
operations to an initial hybrid state $x \mapsto f(x) \rho$, where $f$ is any 
probability density function and $\rho$ is any density operator, gives a 
quantum subsystems state that is separable if $\rho$ is. Moreover, using such 
transformations, any initial state $\rho$ can be turned into any separable state 
given by
\begin{equation}
\rho_{sep} = \int f(x) \eta_1 (x) \otimes \eta_2(x) d\mu(x) , 
\label{rhosep}
\end{equation}
where $\eta_s$ is a map from $X$ to the set of density operators on 
${\cal H}_s$ such that $x \mapsto \mathrm{tr}(\eta_s(x) M)$ is 
$\mu$-integrable for any bounded operator $M$. The separability of 
$\rho_{sep}$ is shown below. The state $\rho$ can be transformed into 
$|1\rangle_1 {_1\langle} 1| \otimes |1\rangle_2 {_2\langle} 1|$ using the local 
operations seen above. Then, applying successively two continuous operations 
that turn any hybrid state $\omega$ into, respectively, 
$x \mapsto \sum_{\alpha} L_{1,\alpha}(x) \otimes I_2
\omega(x) L_{1,\alpha}(x)^\dag  \otimes I_2$ and 
$x \mapsto \sum_{\alpha} I_1 \otimes L_{2,\alpha}(x)
\omega(x) I_1 \otimes L_{2,\alpha}(x)^\dag$, where $L_{s,\alpha}(x)$ 
denotes Kraus operators of a quantum operation that changes 
$|1\rangle_s {_s\langle} 1|$ into $\eta_s(x)$, leads to the state given by 
eq.\eqref{rhosep}. These last two transformations can be understood as 
$x$-dependent operations performed on one of the quantum subsystems. 
An explicit expression for $L_{s,\alpha}(x)$ is given below. Since any 
separable state can be written under the form of eq.\eqref{rhosep} 
\cite{KSW}, any quantum state can be transformed into any separable one 
using the hybrid operations considered here.
\begin{proof}
Let $h$ be the hybrid system consisting of any classical system and the 
quantum system with Hilbert space ${\cal H}_1$ and ${\cal O}_h$ be any 
continuous operation of $h$. Let us name $q$ the quantum system with Hilbert 
space ${\cal H}_2$. Denote as ${\cal O}$ a continuous operation of $hq$ such 
that, for any $[\omega] \in \mathrm{P}_h$ and density operator $\rho$ of $q$, 
$O'([\omega \otimes \rho])=[\omega' \otimes \rho]$ where 
$\omega' \in O'_h([\omega])$ and $O'$ ($O'_h$) is the bounded linear map 
from $\mathrm{L}$ ($\mathrm{L}_h$) to itself corresponding to ${\cal O}$ 
(${\cal O}_h$). The existence of ${\cal O}$ is ensured by Corollary \ref{CorO}. 
For any $\epsilon>0$, we define the set $\mathrm{C}_{\epsilon}$ consisting 
of all classes $[\Omega]$ of $\mathrm{P}$ for which there is 
$\Omega_{\epsilon}=\sum_r c_r \rho_{1,r} \otimes \rho_{2,r} 1_{A_r}$, 
where the sum is finite, the positive coefficients $c_r$ and events $A_r$ obey 
$\sum_r p_r =1$ with $p_r=c_r \mu(A_r)$, and $\rho_{s,r}$ denotes a density 
operator on ${\cal H}_s$, such that 
$\vert [ \Omega - \Omega_{\epsilon} ]\vert < \epsilon$. The corresponding 
quantum subsystems state 
$\rho_{\epsilon}=\sum_r p_r \rho_{1,r} \otimes \rho_{2,r}$ satisfies 
$\Vert \rho - \rho_{\epsilon} \Vert < \epsilon$ where $\rho$ is that 
corresponding to $[\Omega]$ \cite{RN}.

Let us first show that 
$O'(\mathrm{C}_{\epsilon}) \subset \mathrm{C}_{2\epsilon}$. Consider any 
$[\Omega] \in \mathrm{C}_{\epsilon}$ and the corresponding map 
$\Omega_{\epsilon}$. It follows from Lemma \ref{cont} that 
$\vert [ \Omega' - \Omega'_{\epsilon} ]\vert < \epsilon$ where 
$\Omega' \in O'([\Omega])$ and 
$\Omega'_{\epsilon} \in O'([\Omega_{\epsilon}])$. The properties of 
${\cal O}$ imply 
$[\Omega'_{\epsilon}]=\sum_r p_r [\omega'_{1,r} \otimes \rho_{2,r}]$ 
with $\omega'_{1,r} \in O'_h([\omega_{1,r}])$, where 
$\omega_{1,r}=\rho_{1,r} 1_{A_r}/\mu(A_r)$. 
As $[\omega'_{1,r}]$ belongs to $\mathrm{P}_h$, there is a class 
$[\hat \omega_{1,r}]$ in $\mathrm{P}'_h$ such that 
$\vert [ \omega'_{1,r} - \hat \omega_{1,r} ]\vert < \epsilon$, see 
Corollary \ref{Corapp}. Denote 
$\Omega''_{\epsilon} = \sum_r p_r \hat \omega_{1,r} \otimes \rho_{2,r}$. 
It results from the above that 
$\vert [ \Omega' - \Omega''_{\epsilon} ]\vert < 2\epsilon$. One has 
$\hat \omega_{1,r} = \sum_s \omega_{1,r,s} 1_{A_{r,s}}$ with positive 
trace-class operators $\omega_{1,r,s}$ and events $A_{r,s}$ such that 
$\sum_s \mu(A_{r,s}) \mathrm{tr} \omega_{1,r,s}=1$, where the sums are 
finite, and so $\Omega''_{\epsilon}
=\sum_{r,s} c_{r,s}\rho_{1,r,s} \otimes \rho_{2,r} 1_{A_{r,s}}$ with 
$c_{r,s}=p_r \mathrm{tr} \omega_{1,r,s}$ and 
$\rho_{1,r,s}=\omega_{1,r,s}/\mathrm{tr} \omega_{1,r,s}$. Consequently, 
$O'([\Omega])$ belongs to $\mathrm{C}_{2\epsilon}$. The roles of the two 
quantum subsystems of $hq$ can be interchanged in the above. Consider any 
$\epsilon > 0$. By definition of separable states, the class of the hybrid state 
$x \mapsto f(x) \rho$, where $f$ is any probability density function and $\rho$ 
is any separable state, belongs to $\mathrm{C}_{\epsilon}$. Applying to it $n$ 
operations of the kind considered above gives a class in 
$\mathrm{C}_{\epsilon'}$ where $\epsilon'=2^n\epsilon$. The corresponding 
quantum subsystems state is hence separable.

Equation \eqref{rhosep} can be rewritten as 
$\rho_{sep} = \int \omega_1(x) \otimes \eta_2(x) d\mu(x)$ with 
$[\omega_1] \in \mathrm{P}_h$. For any $\epsilon>0$, there is 
$[\hat \omega_1] \in \mathrm{P}'_h$ such that 
$\vert [\omega_1-\hat \omega_1] \vert < \epsilon$, see Corollary \ref{Corapp}. 
Define $\hat \rho_{sep} = \int \hat \omega_1(x) \otimes \eta_2(x) d\mu(x)$. 
It follows from the properties of the Bochner integral and of the trace norm 
that $\Vert \rho_{sep} - \hat \rho_{sep} \Vert < \epsilon$ \cite{BS,RN}. One 
has $\hat \omega_{1} = \sum_r \omega_{1,r} 1_{A_{r}}$ with positive 
trace-class operators $\omega_{1,r}$ and events $A_{r}$ such that 
$\sum_r \mu(A_{r}) \mathrm{tr} \omega_{1,r}=1$, where the sums are finite. 
Thus, $\hat \rho_{sep}$ can be recast into the form 
$\hat \rho_{sep}=\sum_r p_r \rho_{1,r} \otimes \rho_{2,r}$ where 
$p_r=\mathrm{tr} \omega_{1,r}\mathrm{tr} \omega_{2,r}$ and 
$\rho_{s,r}=\omega_{s,r}/\mathrm{tr} \omega_{s,r}$ with 
$\omega_{2,r}=\int_{A_r} \eta_2 d\mu$. The probabilities $p_r$ sum to unity 
as $\mathrm{tr} \eta_2(x)=1$ for any $x$. Consequently, $\rho_{sep}$ is 
separable.

For any $x$, as $\eta_1(x)$ is a density operator, there exists an orthonormal 
basis $\{ | x,k \rangle \}_{k \ge 1}$ of ${\cal H}_1$ made of eigenvectors of 
$\eta_1(x)$ \cite{BS}. We denote the corresponding eigenvalues by 
$\lambda_{x,k}$ and define the Kraus operators 
$L_0(x)=I_1-|1 \rangle \langle 1 |$, where $|1 \rangle=|1 \rangle_1$, and 
$L_k(x)=\sqrt{\lambda_{x,k}} | x,k \rangle \langle 1 |$ for $k \ge 1$. They 
fulfill $\sum_k L_k(x)^{\dag}L_k(x)=I_1$. For any 
$[\Omega] \in \mathrm{P}$, the map 
\begin{multline}
\Omega' : x \mapsto 
\sum_k L_k(x) \otimes I_2 \Omega(x) L_k(x)^\dag \otimes I_2 \label{eq} \\
=\eta_1(x) \otimes \mathrm{tr}_1 \big(\Pi' \Omega(x) \big) 
+(I-\Pi') \Omega(x) (I-\Pi') ,  
\end{multline}
where $\mathrm{tr}_1$ denotes the partial trace with respect to ${\cal H}_1$, 
$\Pi'=|1 \rangle \langle 1 |\otimes I_2$, and $I=I_1 \otimes I_2$, is 
$\mu$-Bochner integrable and $[\Omega'] \in \mathrm{P}$. Thus, 
$O : [\Omega] \mapsto [\Omega']$ corresponds to a probability measure 
transformation ${\cal O}$ of $hq$. For 
$\Omega : x \mapsto |1 \rangle \langle 1 | \otimes \omega_2(x)$ with a map 
$\omega_2$ from $X$ to the set of positive trace-class operators on 
${\cal H}_2$, one has $\Omega' : x \mapsto \eta_1(x) \otimes \omega_2(x)$.
Consider any quantum system $q'$ with Hilbert space ${\cal H}_{q'}$. A 
probability measure transformation of $hqq'$ can be defined as above with the 
Kraus operators $L_k(x) \otimes I_2 \otimes I_{q'}$ where $I_{q'}$ is the 
identity operator of ${\cal H}_{q'}$. It fulfills eq.\eqref{vieq} with $\cal O$ 
since, for any effects $E$ and $F$ of $hq$ and $q'$, respectively, and density 
operator $\Omega$ on ${\cal H}_1 \otimes {\cal H}_2 \otimes {\cal H}_{q'}$, 
$\mathrm{tr}(\Omega E \otimes F)
=\mathrm{tr}(\mathrm{tr}_{q'}(\Omega I \otimes F) E)$ and 
$L_k(x)^\dag \otimes I_2 E L_k(x) \otimes I_2$ is an effect for any $x$. So, 
$\cal O$ is an operation. It is continuous as 
$\vert O ([\Omega_1])- O ([\Omega_2]) \vert 
\le 2 \vert [ \Omega_1-\Omega_2] \vert$ for any classes $[\Omega_1]$ 
and $[\Omega_2]$ of $\mathrm{P}$, see eq.\eqref{eq}.
\end{proof}
\section{Conclusion}\label{C}
We have proposed four requirements for a hybrid classical-quantum probability 
measure which are that it is non-negative, $\sigma$-additive with respect to 
either of its arguments, and appropriately normalized. It follows from these 
axioms that the state of a hybrid system consists of the probability measure of 
the classical subsystem and a map from its sample space to the set of states of 
the quantum subsystem. When the probability measure of the classical 
subsystem can be characterized by a probability density function with respect 
to a $\sigma$-finite reference measure, the hybrid states simplify into maps 
from the sample space to the set of positive trace-class operators of the 
quantum subsystem. This last kind of states is commonly used in hybrid 
approaches \cite{O,KC,T1,ABCMP,A,D, BBGBT}. In this case, we have 
introduced a metric for the hybrid probability measures and shown that any 
of them can be approximated with any accuracy, with respect to this metric, 
by one corresponding to a simple hybrid state.

We have then considered particular transformations of hybrid probability 
measures that we name hybrid operations. The assumption defining them 
can be formulated as follows. For any hybrid operation and any ancillary 
quantum system, with finite-dimensional Hilbert space, there is a probability 
measure transformation of the bipartite system made up of the considered 
hybrid system and the ancillary that describes the action of the operation on 
the hybrid in the presence of the ancillary with no interaction between the two 
systems and no evolution of the ancillary. We have derived an expression for 
any such hybrid operation on the subset of probability measures corresponding 
to simple hybrid states. Continuous operations, with respect to the considered 
metric, are fully determined by this expression. As shown, all hybrid operations 
are continuous when the classical subsystem is discrete or the Hilbert space of 
the quantum subsystem is finite-dimensional. 

From the general expression mentioned above, we have obtained explicit 
expression for hybrid operations, on the whole set of probability measures 
of the considered hybrid system, when the classical and quantum subsystems 
are non-interacting, the Hilbert space of the quantum subsystem is 
finite-dimensional, or the classical subsystem is discrete. In some particular 
cases, the corresponding hybrid state transformations are identical to previously 
considered ones \cite{O,OSSW,D}. We have also discussed the quantification of 
the correlations between the classical and quantum subsystems and a 
generalization of LOCC operations that takes into account non-discrete classical 
information.

Our results constrain the properties of hybrid descriptions. To achieve a fully 
consistent hybrid dynamics, further steps are needed. 
For the quantum degrees of freedom, those of a quantum field for instance, 
our approach assumes a separable Hilbert space. Therefore, it holds only if the 
quantum subsystem can be described in such terms. The hybrid states and 
operations we have obtained follow from a small number of axioms for hybrid 
probability measures and from the hybrid operation definition. Additional 
physical requirements, not considered here, may force to reject some hybrid 
operations. Moreover, in order to formulate the dynamics of a hybrid system, 
it is necessary to consider a time-parameterized family of operations that 
fulfills conditions similar to those leading to the Lindblad quantum master 
equation \cite{L2}.


\begin{thebibliography}{99}
\bibitem{Bohr} N. Bohr, {\it Atomic Physics and Human Knowledge} 
(Wiley, New York, 1958).
\bibitem{dE} B. d’Espagnat, {\it Conceptual Foundations of Quantum 
Mechanics} (Addison Wesley, Reading, 1976).
\bibitem{BT} W. Boucher and J. Traschen, Semiclassical physics and 
quantum fluctuations, Phys. Rev. D {\bf 37}, 3522 (1988).
\bibitem{An} A. Anderson, Quantum backreaction on "classical" variables, 
Phys. Rev. Lett. {\bf 74}, 621 (1995).
\bibitem{O} J. Oppenheim, A postquantum theory of classical gravity?, 
Phys. Rev. X {\bf 13}, 041040 (2023).
\bibitem{HHHH} R. Horodecki, P. Horodecki, M. Horodecki and K.Horodecki, 
Quantum entanglement, Rev. Mod. Phys. {\bf 81}, 865 (2009).
\bibitem{KC} R. Kapral and G. Ciccotti, Mixed quantum-classical dynamics, 
J. Chem. Phys. {\bf 110}, 8919 (1999).
\bibitem{CB} R. Crespo-Otero and M. Barbati, Recent advances and 
perspectives on nonadiabatic mixed quantum-classical dynamics, 
Chem. Rev. {\bf 118}, 7026 (2018).
\bibitem{CC} A. Callison and N. Chancellor, Hybrid quantum-classical 
algorithms in the noisy intermediate-scale quantum era and beyond, 
Phys. Rev. A {\bf 106}, 010101, (2022).
\bibitem{BCGG} C. Barcel\' o, R. Carballo-Rubio, L. J. Garay, and 
R. Gomez-Escalante, Hybrid classical-quantum formulations ask for hybrid 
notions, Phys. Rev. A {\bf 86}, 042120 (2012).
\bibitem{T1} D. R. Terno, Classical-quantum hybrid models, arXiv:2309.05014
\bibitem{D} L. D\' iosi, Hybrid completely positive Markovian 
quantum-classical dynamics, Phys. Rev. A {\bf 107}, 062206 (2023).
\bibitem{PRA2} S. Camalet, Measure-independent anomaly of nonlocality, 
Phys. Rev. A {\bf 96}, 052332 (2017).
\bibitem{E} H.-T. Elze, Linear dynamics of quantum-classical hybrids, 
Phys. Rev. A {\bf 85}, 052109 (2012).
\bibitem{ABCMP} J. L. Alonso, C. Bouthelier-Madre, J. Clemente-Gallardo, 
D. Mart\' \i nez-Crespo and J. Pomar, Effective nonlinear Ehrenfest hybrid 
quantum-classical dynamics, Eur. Phys. J. Plus {\bf 138}, 649 (2023).
\bibitem{Koo} B. O. Koopman, Hamiltonian systems and transformations in 
Hilbert space, Proc. Nat. Acad. Sci. {\bf 17}, 315 (1931).
\bibitem{S} E. C. G. Sudarshan, Interaction between classical and quantum 
systems and the measurement of quantum observables, 
Pramana {\bf 6}, 117 (1976).
\bibitem{A} I.V. Aleksandrov, The statistical dynamics of a system 
consisting of a classical and a quantum subsystem. Z. Naturforsch. {\bf 36}, 
902 (1981).
\bibitem{Ge} V. Gerasimenko, Dynamical equations of quantum–classical 
systems, Theor. Math. Phys. {\bf 50}, 49 (1982).
\bibitem{BJ} P. Blanchard and A. Jadczyk, On the interaction between classical 
and quantum systems, Phys. Lett. A {\bf 175}, 157 (1993).
\bibitem{PT} A. Peres and D. R. Terno, Hybrid classical-quantum dynamics, 
Phys. Rev. A {\bf 63}, 022101 (2001).
\bibitem{T2} D. R. Terno, Inconsistency of quantum-classical dynamics, 
and what it implies, Found. Phys. {\bf 36}, 102 (2006).
\bibitem{BBGBT} W. Bauer, P. Bergold, F. Gay-Balmaz, and C. Tronci, 
Koopmon trajectories in nonadiabatic quantum-classical dynamics, 
Multiscale Modeling \& Simulation {\bf 22}, 1365 (2024).
\bibitem{Ko} A.N. Kolmogorov, {\it Foundations of the Theory of Probability} 
(Chelsea Publishing Co., New York, 1950).
\bibitem{Bo} M. Born, Zur quantenmechanik der sto\ss vorg\" ange, 
Z. Phys. {\bf 37}, 863 (1926).
\bibitem{G} A.M. Gleason, Measures on the closed subspaces of a Hilbert 
space, J. Math. Mech. {\bf 6}, 885 (1957).
\bibitem{B} P. Busch, Quantum states and generalized observables: a simple 
proof of Gleason’s theorem, Phys. Rev. Lett. {\bf 91}, 120403 (2003).
\bibitem{K} K.Kraus, General state changes in quantum theory, Ann. Phys. 
{\bf 64}, 311 (1971).
\bibitem{BI} S. Bochner, Integration von funktionen, deren werte die 
elemente eines vektorraumes sind, Fund. Math. {\bf 20}, 262 (1933).
\bibitem{Be} N. Brunner, D. Cavalcanti, S. Pironio, V. Scarani, and S. Wehner, 
Bell nonlocality, Rev. Mod. Phys. {\bf 86}, 419 (2014).
\bibitem{BS} J. Conway, {\it A Course in Functional Analysis} (Springer-Verlag, 
New York, 1990).
\bibitem{RN} J. Diestel and J.J. Uhl, {\it Vector Measures} 
(Amer. Math. Soc., Providence, 1977).
\bibitem{Bi} P. Billingsley, {\it Probability and Measure} 
(John Wiley \& Sons, New York, 1995).
\bibitem{SIOR} D. Spehner, F. Illuminati, M. Orszag, W. Roga, Geometric 
measures of quantum correlations with Bures and Hellinger distances, 
arXiv:1611.03449
\bibitem{BLT} M. Reed and B. Simon, {\it Methods of Modern Mathematical 
Physics 1 : Functional Analysis} (Academic Press, San Diego, 1980).
\bibitem{Ineq} A.S. Holevo, On quasiequivalence of locally normal states, 
Theor. Math. Phys. {\bf 13}, 1071 (1972).
\bibitem{cpm} W. F. Stinespring, Positive functions on C*-algebras, 
Proc. Amer. Math. Soc. {\bf 6}, 211 (1955).
\bibitem{MK} H. Bauer, {\it Probability Theory} (de Gruyter, Berlin, 1996).
\bibitem{LE} P. Walters, {\it An Introduction to Ergodic Theory} 
(Springer-Verlag, New. York, 1982).
\bibitem{OSSW} J. Oppenheim, C. Sparaciari, B. Soda, and Z. Weller-Davies, 
The two classes of hybrid classical-quantum dynamics, arXiv:2203.01332
\bibitem{PRA1} S. Camalet, Existence of maximally correlated states, 
Phys. Rev. A {\bf 98}, 052306 (2018).
\bibitem{Ho} A.S. Holevo, Some estimates of the information transmitted by 
quantum communication channel, Probl. Inform. Transm. {\bf 9}, 177 (1973).
\bibitem{L} G. Lindblad, Completely positive maps and entropy inequalities, 
Commun. Math. Phys. {\bf 40}, 147 (1975).
\bibitem{AL} H. Araki and E. H. Lieb, Entropy inequalities, 
Commun. Math. Phys. {\bf 18}, 160 (1970).
\bibitem{to1} S. Camalet, Monogamy inequality for any local quantum 
resource and entanglement, Phys. Rev. Lett. {\bf 119}, 110503 (2017).
\bibitem{to2} S. Camalet, Internal entanglement and external correlations 
of any form limit each other, Phys. Rev. Lett. {\bf 121}, 060504 (2018).
\bibitem{to3} J. Zhu, M.-J. Hu, Y. Dai, Y.-K. Bai, S. Camalet, C. Zhang, 
C.-F. Li, G.-C. Guo, and Y.-S. Zhang, Realization of the tradeoff between 
internal and external entanglement, Phys. Rev. Research {\bf 2}, 043068 (2020).
\bibitem{to4} J. Zhu, Y. Dai, S. Camalet, C.-J. Zhang, B.-H. Liu, Y.-S. Zhang, 
C.-F. Li, and G.-C. Guo, Observation of the tradeoff between internal quantum 
nonseparability and external classical correlations, Phys. Rev. A {\bf 104}, 
062442 (2021).
\bibitem{W} R.F. Werner, Quantum states with Einstein-Podolsky-Rosen 
correlations admitting a hidden-variable model, Phys. Rev. A {\bf 40}, 
4277 (1989).
\bibitem{LOCC} M. J. Donald, M. Horodecki, and O. Rudolph, The uniqueness 
theorem for entanglement measures, J. Math. Phys. {\bf 43}, 4252 (2002).
\bibitem{QRT} E. Chitambar and G. Gour, Quantum resource theories, 
Rev. Mod. Phys. {\bf 91}, 025001 (2019).
\bibitem{KSW} A. S. Kholevo, M. E. Shirokov, and R. F. Werner, On the notion 
of entanglement in Hilbert spaces, Russ. Math. Surv. {\bf 60}, 359 (2005).
\bibitem{L2} G. Lindblad, On the generators of quantum dynamical 
semigroups, Commun. Math. Phys. {\bf 48}, 119 (1976).
\end{thebibliography}
\end{document}